\pdfoutput=1
\documentclass[twoside,leqno,twocolumn]{article}

\usepackage[letterpaper]{geometry}

\usepackage{siamproceedings}
\usepackage{siunitx}

\usepackage[T1]{fontenc}
\usepackage{amsfonts}
\usepackage{graphicx}
\usepackage{epstopdf}
\usepackage{enumitem}
\usepackage{algorithmic}
\ifpdf
  \DeclareGraphicsExtensions{.eps,.pdf,.png,.jpg}
\else
  \DeclareGraphicsExtensions{.eps}
\fi

\newsiamremark{remark}{Remark}
\newsiamremark{hypothesis}{Hypothesis}
\crefname{hypothesis}{Hypothesis}{Hypotheses}
\newsiamthm{claim}{Claim}

\usepackage{amsopn}

\usepackage{wrapfig}
\usepackage{nccmath}
\usepackage{nicematrix}
\usepackage[linewidth=1pt]{mdframed}
\usepackage{mathtools}

\usepackage{pgfplots}
\usepackage{tikz}
\usepackage{subcaption}
\usepackage{stfloats}

\definecolor{my-dark-red}{RGB}{183, 28, 28}
\definecolor{my-red}{RGB}{244,67,54}
\definecolor{my-pink}{RGB}{233,30,99}
\definecolor{my-purple}{RGB}{156,39,176}
\definecolor{my-deep-purple}{RGB}{103,58,183}
\definecolor{my-indigo}{RGB}{63,81,181}
\definecolor{my-blue}{RGB}{33,150,243}
\definecolor{my-light-blue}{RGB}{3,169,244}
\definecolor{my-cyan}{RGB}{0,188,212}
\definecolor{my-teal}{RGB}{0,150,136}
\definecolor{my-green}{RGB}{76,175,80}
\definecolor{my-light-green}{RGB}{139,195,74}
\definecolor{my-lime}{RGB}{205,220,57}
\definecolor{my-yellow}{RGB}{255,235,59}
\definecolor{my-amber}{RGB}{255,193,7}
\definecolor{my-orange}{RGB}{255,152,0}
\definecolor{my-deep-orange}{RGB}{255,87,34}
\definecolor{my-brown}{RGB}{121,85,72} 
\definecolor{my-grey}{RGB}{120,120,120}
\definecolor{my-blue-grey}{RGB}{96,125,139}
\definecolor{my-lipics-grey}{rgb}{0.6,0.6,0.61}

\tikzset{
  empty/.style={draw=none,mark=x,every mark/.append style={mark size=0.0pt}},
  brindex/.style={mark=*,my-yellow,semithick,draw=black,every mark/.append style={mark size=2.5pt}},
  bmove/.style={mark=square*,my-blue,semithick,draw=black,every mark/.append style={mark size=2.5pt}},
  columba/.style={mark=pentagon*,my-purple,semithick,draw=black,every mark/.append style={mark size=2.8pt}},
  moverb/.style={mark=diamond*,my-green,semithick,draw=black,every mark/.append style={mark size=3.2pt}},
  moverbrlzsa/.style={mark=triangle*,my-red,semithick,draw=black,every mark/.append style={mark size=3.2pt}},
  marks/.style={
    every mark/.append style={mark size=3pt},
  },
  legendmarks/.style={
    marks,
    every plot/.append style={only marks}
  },
  chart/.style={
    legend label/.style={font={\footnotesize},anchor=west,align=left},
    legend box/.style={rectangle, draw, minimum size=5pt},
    axis/.style={black,semithick,->},
    axis label/.style={anchor=east,font={\footnotesize}},
  }
}

\pgfplotsset{
  title style={yshift=-5pt, font=\bfseries, inner sep=0pt},
  major grid style={thin,dotted,color=my-blue-grey},
  minor grid style={thin,dotted,color=my-grey!45},
  ymajorgrids,
  yminorgrids,
  xmajorgrids,
  xminorgrids,
  every tick label/.append style={font=\footnotesize},
  every axis label/.append style={font=\footnotesize},
  legend cell align=left,
  legend pos=outer north east,
  scaled x ticks=false,
  scaled y ticks=false,
  max space between ticks=30,
  minor tick num=2,
  every axis/.append style={
    line width=0.5pt,
    tick style={
      line cap=round,
      thin,
      major tick length=4pt,
      minor tick length=2pt,
    },
  },
  xticklabel style={
    /pgf/number format/fixed,
    /pgf/number format/precision=3
  },
  yticklabel style={
    /pgf/number format/fixed,
    /pgf/number format/precision=3
  },
  querystyle_apm/.style={
    scale only axis,
    width=4.6cm,
    height=3.5cm,
    yticklabel style={text width=0.72cm, align=right},
    title style={at={(0.5,1.03)}},
    x label style={at={(0.5,0.02)}},
    y label style={at={(0.1,0.5)}},
    align=center,
    xmin=10,
    xmax=1280,
    log basis x=2,
    log basis y=10,
    ytick={1e-10,1e-9,1e-8,1e-7,1e-6,1e-5,1e-4,1e-3,1e-2,1e-1,1e0,1e1,1e2,1e3,1e4,1e5,1e6,1e7,1e8,1e9,1e10,1e11,1e12},
    minor ytick={2e-10,3e-10,4e-10,5e-10,6e-10,7e-10,8e-10,9e-10,2e-9,3e-9,4e-9,5e-9,6e-9,7e-9,8e-9,9e-9,2e-8,3e-8,4e-8,5e-8,6e-8,7e-8,8e-8,9e-8,2e-7,3e-7,4e-7,5e-7,6e-7,7e-7,8e-7,9e-7,2e-6,3e-6,4e-6,5e-6,6e-6,7e-6,8e-6,9e-6,2e-5,3e-5,4e-5,5e-5,6e-5,7e-5,8e-5,9e-5,2e-4,3e-4,4e-4,5e-4,6e-4,7e-4,8e-4,9e-4,2e-3,3e-3,4e-3,5e-3,6e-3,7e-3,8e-3,9e-3,2e-2,3e-2,4e-2,5e-2,6e-2,7e-2,8e-2,9e-2,2e-1,3e-1,4e-1,5e-1,6e-1,7e-1,8e-1,9e-1,2e0,3e0,4e0,5e0,6e0,7e0,8e0,9e0,2e1,3e1,4e1,5e1,6e1,7e1,8e1,9e1,2e2,3e2,4e2,5e2,6e2,7e2,8e2,9e2,2e3,3e3,4e3,5e3,6e3,7e3,8e3,9e3,2e4,3e4,4e4,5e4,6e4,7e4,8e4,9e4,2e5,3e5,4e5,5e5,6e5,7e5,8e5,9e5,2e6,3e6,4e6,5e6,6e6,7e6,8e6,9e6,2e7,3e7,4e7,5e7,6e7,7e7,8e7,9e7,2e8,3e8,4e8,5e8,6e8,7e8,8e8,9e8,2e9,3e9,4e9,5e9,6e9,7e9,8e9,9e9,2e10,3e10,4e10,5e10,6e10,7e10,8e10,9e10,2e11,3e11,4e11,5e11,6e11,7e11,8e11,9e11},
    enlarge x limits=0.05,
    enlarge y limits=0.08,
    xtick={10,20,40,80,160,320,640,1280},
    xticklabels={10,20,40,80,160,320,640,1280},
    ymajorgrids=true,
    yminorgrids=true,
  },
  querystyle_apm_same_alg/.style={
    querystyle_apm,
    width=4.6cm,
    height=2.75cm,
  },
  querystyle_raw_performance/.style={
    querystyle_apm,
    width=4.7cm,
    height=3.25cm,
  },
  querystyle_kfig/.style={
    querystyle_apm,
    width=4.5cm,
    height=2.75cm,
  },
  constructionstyle/.style={
    align=center,
    title style={at={(0.5,1.03)}},
    x label style={at={(0.5,0.02)}},
    y label style={at={(0.1,0.5)}},
    width=6.35cm,
    height=4.9cm,
    max space between ticks=20,
  },
  legend/.style={
    width=2cm,
    height=2cm,
    xmin=10,xmax=10,
    ymin=10,ymax=10,
    ymajorgrids=false,
    xtick=\empty,
    ytick=\empty,
    x axis line style={-,draw opacity=0},
    y axis line style={-,draw opacity=0},
    xtick style={draw=none},
    ytick style={draw=none},
    legend style={
        font=\footnotesize,
        anchor=north,
        at={(0,0)},
        /tikz/every even column/.append style={column sep=0.1cm}
    },
  },
}

\newcommand{\Oh}{\mathcal{O}}

\newcommand{\LF}{\mathsf{LF}}
\newcommand{\SA}{\mathsf{SA}}
\newcommand{\sa}{\mathsf{sa}}
\newcommand{\p}{\mathsf{p}}

\newcommand{\idx}{\mathsf{idx}}

\newcommand{\peak}{\mathsf{peak}}
\newcommand{\iM}{\mathsf{M}}      
\newcommand{\iB}{\mathsf{B}}      
\newcommand{\iR}{\mathsf{R}}      
\newcommand{\isz}[1]{\idx_{#1}}   
\newcommand{\pk}[1]{\peak_{#1}}   
\newcommand{\pks}[1]{\peak_{#1}^{*}} 

\newcommand{\MDS}{\mathcal{M}}
\newcommand{\MLF}{\MDS^{\LF}}
\newcommand{\MLFbwd}{\MDS^{\bwd{\LF}}}
\newcommand{\MPhi}{\MDS^{\Phi}}
\newcommand{\MPhiinv}{\MDS^{\Phi^{-1}}}
\newcommand{\move}{\mathsf{move}}
\newcommand{\rank}{\mathsf{rank}}
\newcommand{\select}{\mathsf{select}}

\newcommand{\pred}{\mathsf{pred}}
\newcommand{\suc}{\mathsf{succ}}
\newcommand{\pos}{\mathsf{pos}}
\newcommand{\fwd}[1]{\overrightarrow{#1}}
\newcommand{\bwd}[1]{\overleftarrow{#1}}
\newcommand{\ceil}[1]{\lceil #1 \rceil}
\newcommand{\floor}[1]{\lfloor #1 \rfloor}

\newcommand{\PLCP}{\mathsf{PLCP}}
\newcommand{\LIF}[2]{\STATE \textbf{if} #1 \textbf{then} #2;}

\newcommand{\LELSEIF}[2]{\STATE \textbf{else if} #1 \textbf{then} #2;}
\newcommand{\LELSE}[1]{\STATE \textbf{else} #1;}

\newcommand{\edit}{\mathsf{edit}}
\newcommand{\Occ}{\mathsf{Occ}}
\newcommand{\dom}{\mathsf{dom}}

\makeatletter
\newcommand{\ELSEIFONLY}[1]{%
  \STATE \algorithmicelse\ \algorithmicif\ #1\ \algorithmicthen
  \begin{ALC@g}%
}
\newcommand{\ENDELSEIFONLY}{%
  \end{ALC@g}%
}
\makeatother

\begin{document}

\newcommand\relatedversion{}

\title{\Large Move-rb: Faster Bi-Directional r-indexes and Approximate Pattern Matching\relatedversion}
\author{
Johannes Fischer\thanks{TU Dortmund (\email{johannes.fischer@cs.tu-dortmund.de}).}
\and Lukas Nalbach\thanks{TU Dortmund (\email{lukas.nalbach@tu-dortmund.de}).}
}

\date{}

\maketitle

\fancyfoot[C]{\small\thepage}
\pagenumbering{arabic}
\setcounter{page}{1}

\begin{abstract}
Approximate pattern matching (APM) on highly repetitive texts --- such as collections of genomes from the same species --- is a central task in bioinformatics.
Bi-directional r-indexes support both left- and right-extension of a pattern and thereby accelerate APM algorithms based on search schemes, but existing variants --- br-index (\cite{br_index} Arakawa et al., 2022) and b-move (\cite{b_move} Depuydt et al., 2025) --- suffer from two bottlenecks: the $\Oh(\sigma)$ character-predecessor/-successor queries on the run-length-encoded BWT needed for an extension, and the predecessor queries on sparse bit vectors used to maintain a value in the suffix array interval and to access the PLCP array while locating.
We present Move-rb, a bi-directional r-index built on the optimized r-index Move-r.
Although Move-rb is 2$\times$ larger than br-index, it is up to 24\% smaller than b-move, answers APM queries 1--4 orders of magnitude faster than br-index, 1.9--10$\times$ faster than b-move and even up to 5.5$\times$ faster than the state-of-the-art bi-directional (uncompressed) FM-index columba (\cite{columba_dynamic_partitioning, min_u, columba_in_text_verification} Renders et al., 2021--2024), which is 36--42$\times$ larger.
Memory usage (including index size) during APM locate queries is reduced by $1.5\times$ (up to $6.5\times$) for Hamming distance and $2.5\times$ (up to $7.4\times$) for edit distance.
Move-rb can be constructed 3--14$\times$ faster while using 19--141$\times$ less memory than br-index, b-move and columba.
A variant using a relative Lempel--Ziv (RLZ)-encoded suffix array (\cite{rlzsa} Dinklage et al., 2025) locates up to 10$\times$ faster while being 1.2--2.4$\times$ larger.
We achieve these speedups by theoretically and practically optimizing index operations and search scheme APM algorithms:
Without a direction switch, Move-rb computes an all-$k$-character extension in output-optimal $\Oh(k)$ time and a single-character extension in the same time, by avoiding the character-predecessor/-successor queries needed by previous bi-directional r-indexes.
Augmenting Move-rb with $\Oh((r + \bwd{r}) \log(n / r_{\min}) \log \sigma)$ bits reduces a single-character extension to $\Oh(\log \sigma)$ time, where $r$ and $\bwd{r}$ are the numbers of equal-letter runs in the BWT of the text and its reverse, resp., and $r_{\min}=\min(r, \bwd{r})$.
A direction switch incurs only $\Oh(\log\log_\omega (n / r_{\min}))$ additional time.
\end{abstract}

\section{Introduction}
Answering approximate pattern matching queries (APM) on repetitive texts is a common task in bioinformatics, in particular when indexing DNA (assembled or unassembled) from the same species. There, it is important to exploit the repetitiveness of the data instead of storing it uncompressed.
The r-index \cite[Lemma~2.1, Theorem~3.6]{r_index} uses $\Oh(r \log n)$ bits of space, where $r$ is the number of equal-letter runs in the Burrows-Wheeler Transformation (BWT) \cite{bwt}, an accepted measure for compressibility of highly similar texts \cite{repetitiveness_measures}.

\begin{table*}[t]
    \centering
    \resizebox{\linewidth}{!}{
    \setlength{\tabcolsep}{2pt}
    \begin{tabular}{l|c|c|c}
        index & space (bits) & single-character extension & all-$k$-character extension \\
        \hline
        br-index & $\Oh((r + \bwd{r}) \log n)$ &
        $\Oh(\sigma \log\log_\omega \sigma + \log\log_\omega (n / r_{\min}))$ &
        $\Oh(\sigma \log\log_\omega \sigma + k\log\log_\omega (n / r_{\min}))$ \\

        br-index & $\Oh((r + \bwd{r}) \log n)$ &
        $\Oh((1 / \epsilon) \log^{2+\epsilon} r_{\max})$ &
        $\Oh(\sigma \log\log_\omega \sigma + k\log\log_\omega (n / r_{\min}))$ \\

        b-move & $\Oh((r + \bwd{r}) \log n)$ &
        $\Oh(\sigma r_{\max} + \log\log_\omega (n / r_{\min}))$ &
        $\Oh(\sigma r_{\max} + k\log\log_\omega (n / r_{\min}))$ \\

        Move-rb & $\Oh((r + \bwd{r}) \log n)$ &
        $\Oh(k + \underline{\log\log_\omega (n / r_{\min})})$ &
        $\Oh(k + \underline{\log\log_\omega (n / r_{\min})})$ \\

        Move-rb & $\Oh((r + \bwd{r})(\log n + \log(n / r_{\min}) \log \sigma))$ &
        $\Oh(\log \sigma + \underline{\log\log_\omega (n / r_{\min})})$ &
        $\Oh(k + \underline{\log\log_\omega (n / r_{\min})})$ \\
    \end{tabular}
    }
    \caption{
        Running times and space for single- and all-$k$-character extensions, where $k \leq \sigma$ is the number of characters that can extend the pattern, $r_{\min}=\min(r, \bwd{r})$ and $r_{\max}=\max(r, \bwd{r})$.
        Underlined terms are only incurred if the extension performs a direction switch.
    }
    \label{tab:running_times_extensions}
\end{table*}

\begin{table*}[t]
    \centering
    \setlength{\tabcolsep}{3pt}
    \begin{tabular}{l|c|c}
        index & space (bits) & $m$ single-character extensions ($s$ switches) + locate \\
        \hline
        br-index & $\Oh((r + \bwd{r}) \log n)$ &
        $\Oh(m \sigma \log\log_\omega \sigma + m \log\log_\omega (n / r_{\min}) + occ)$ \\
        
        br-index & $\Oh((r + \bwd{r}) \log n)$ &
        $\Oh((m / \epsilon) \log^{2+\epsilon} r_{\max} + occ)$ \\
        
        b-move & $\Oh((r + \bwd{r}) \log n)$ &
        $\Oh(m \sigma r_{\max} + m \log\log_\omega (n / r_{\min}) + occ \log\log_\omega (n / r))$ \\
        
        Move-rb & $\Oh((r + \bwd{r}) \log n)$ &
        $\Oh(\sum\nolimits_{i=1}^m k_i + s \log\log_\omega (n / r_{\min}) + occ)$ \\

        Move-rb & $\Oh((r + \bwd{r})(\log n + \log(n / r_{\min}) \log \sigma))$ &
        $\Oh(m \log \sigma + s \log\log_\omega (n / r_{\min}) + occ)$ \\
    \end{tabular}
    \caption{
        Running times and space for $m$ single-character extensions with $s$ switches and locating, where $k_i \leq \sigma$ is the number of characters that can extend the pattern before extension $i$; remaining notation as in \Cref{tab:running_times_extensions}.
    }
    \label{tab:running_times_locate}
\end{table*}

In the r-index, the bottlenecks are $\rank$-queries on the BWT and consecutive $\Phi$-queries, which are implemented using predecessor ($\pred$) queries on sparse bit vectors.
The r-index can prepend a character to a searched pattern (left-extension) in $\Oh(\log\log_\omega (\sigma + n / r))$ time.
Locating all occurrences of the searched pattern requires additional $\Oh(occ \log\log_\omega (n / r))$ time.

Nishimoto and Tabei \cite[Lemma~2, Theorems~7--9]{move_data_structure} showed that the $\rank$-queries on the BWT can be replaced by a combination of (\textbf{i}) character-predecessor and -successor ($\pred_c$ and $\suc_c$) queries on the string $L'$ storing the BWT runs ($\Oh(\log\log_\omega \sigma)$ time), and (\textbf{ii}) consecutive $\LF$-queries ($\Oh(1)$ time).
They also presented the move data structure, which can answer consecutive $\LF$- and $\Phi$-queries in $\Oh(1)$ time.
Thus, the time for left-extension is reduced to $\Oh(\log\log_\omega \sigma)$, and the additional time for locate is reduced to $\Oh(occ)$.

While the r-index is uni-directional, i.e.,\ it only allows for left-extension, the bi-directional r-index (br-index) \cite[Theorems~1 and~2]{br_index} also allows for right-extension.
This enables us to start by matching an inner part of the pattern and then extend it to the left and right, which significantly accelerates APM algorithms based on search schemes \cite{search_schemes}.
In the br-index, one such extension takes $\Oh(\sigma \log\log_\omega \sigma + \log\log_\omega (n / r_{\min}))$ time, where $\bwd{r}$ is the number of runs in the BWT of the reverse string $\bwd{T}$ and $r_{\min}=\min(r, \bwd{r})$.
The bottlenecks are $\Oh(\sigma)$ $\pred_c$- and $\suc_c$-queries on $L'$, and $\pred$-queries on sparse bit vectors involved in maintaining a value in the $\SA$-interval.
They also present an impractical variant of the br-index that achieves $\Oh((1 / \epsilon) \log^{2 + \epsilon} r_{\max})$ time for any $\epsilon > 0$ (see \Cref{tab:running_times_extensions,tab:running_times_locate}).

Although the theoretical description of the br-index uses move data structures for $\LF$ and $\Phi$, its implementation uses those of the r-index.
Furthermore, instead of their $\Oh(1)$ time solution for accessing the $\PLCP$ array during the locate phase, it employs a generic data structure relying on $\pred$-queries on a sparse bit vector, which incurs $\Oh(\log\log_\omega (n / r))$ additional time per occurrence.
Overall, these drawbacks result in worse running times in practice: an extension takes $\Oh(\sigma \log\log_\omega (\sigma + n / r_{\min}))$ time, and matching a pattern of length $m$ and locating all $occ$ occurrences takes $\Oh(m \sigma \log\log_\omega (\sigma + n / r_{\min}) + occ \log\log_\omega (n / r))$ time.

b-move \cite{b_move} is another bi-directional r-index, which uses move data structures for $\LF$, $\Phi$ and $\Phi^{-1}$ also in practice.
However, $\pred_c$- and $\suc_c$-queries on $L'$ are answered naively using scanning ($\Oh(r)$ time).
Locating still takes $\Oh(occ \log\log_\omega (n / r))$ time, because b-move relies on the same generic data structure for accessing $\PLCP$.
Furthermore, b-move only supports input texts with a DNA alphabet.
Finally, b-move's APM algorithms \cite{columba_dynamic_partitioning, columba_in_text_verification} leave room for improvement, as they enumerate duplicate and nested $\SA$-intervals and often incur high memory usage.

\textbf{Our Contributions.}
We present a text index that we call Move-rb, which builds upon the optimized r-index Move-r \cite{move_r} and improves several theoretical and practical aspects of existing bi-directional r-indexes:

\begin{itemize}
    \item We bypass the $\Oh(\sigma)$ $\pred_c$- and $\suc_c$-queries on $L'$ required by previous bi-directional r-indexes and show that the $\pred$-queries on sparse bit vectors (for maintaining a value in the $\SA$-interval) can be avoided.
    As a result, without a direction switch, Move-rb performs an all-$k$-character extension in output-optimal $\Oh(k)$ time and a single-character extension in the same time, both within $\Oh((r + \bwd{r}) \log n)$ bits of space.
    A direction switch incurs only $\Oh(\log\log_\omega (n / r_{\min}))$ additional time.
    \item Augmenting Move-rb with a specialized wavelet tree \cite{wavelet_tree}, requiring $\Oh((r + \bwd{r}) \log(n / r_{\min}) \log \sigma)$ further bits, reduces a single-character extension without a direction switch to $\Oh(\log \sigma)$ time, matching uncompressed bi-directional FM-indexes \cite{two_bwt, bidirectional_fm_index}, which no bi-directional r-index has attained before.
    \item Move-rb can be constructed 10--14$\times$ faster than br-index, 3--4.6$\times$ faster than b-move and 3.7$\times$ faster than the state-of-the-art bi-directional (uncompressed) FM-index columba \cite{columba_dynamic_partitioning, min_u, columba_in_text_verification}, while requiring 36--141$\times$, 19--24$\times$ and 124--129$\times$ less memory, resp.
    \item Beyond these index operations, we also optimize the state-of-the-art search scheme APM algorithms provided by b-move \cite{columba_dynamic_partitioning, columba_in_text_verification}: We avoid enumerating duplicate and nested $\SA$-intervals and employ further practical optimizations (see \Cref{sct:apm_algorithms}).
    \item As a result, Move-rb answers APM queries 1--4 orders of magnitude faster than br-index, 1.9--10$\times$ faster than b-move and even up to 5.5$\times$ faster than columba, while being 2$\times$ larger than br-index, up to 24\% smaller than b-move and 36--42$\times$ smaller than columba.
    \item We reduce peak memory usage (including index size) during APM locate queries compared to b-move by factors $1.5\times$ (up to $6.5\times$) and $2.5\times$ (up to $7.4\times$) for Hamming distance and edit distance, resp.
    \item We implemented a variant of Move-rb using an RLZ-encoded suffix array (RLZSA), further improving locate performance by up to 10$\times$ while being 1.2--2.4$\times$ larger than Move-rb.
\end{itemize}

In the most efficient currently known search schemes used in APM algorithms, \emph{minU} \cite{min_u} (for $k \leq 13$) and \emph{suffix-filter} \cite{suffix_filter} (for arbitrary $k$), each search only performs $\leq 2$ direction switches, in which case our index performs particularly well, because the factor $s$ before the $\log\log_\omega (n / r_{\min})$ term disappears (see \Cref{tab:running_times_locate}).

\subsection{Strings}
An \emph{interval} $[i, j]$ describes the set $\{i, \ldots, j\}$.
An \emph{alphabet} $\Sigma$ is a finite ordered set of \emph{symbols} of size $|\Sigma| = \sigma$.
A string $T \in \Sigma^*$ with $|T| = n$ is a sequence of symbols in the alphabet $\Sigma$.
If $|T| = 0$, then $T$ is the empty string $\epsilon$, and $T[i]$ denotes the $i$-th symbol of $T$.
A \emph{substring} is $T[i,j] = T[i]T[i+1]\cdots T[j]$, where $T[i,j] = \epsilon$ if $i > j$.
We call $T[1,i]$ the $i$-th \emph{prefix} and $T_i = T[i,n]$ the $i$-th \emph{suffix} of $T$.
We assume an order $c_1 < c_2 < \cdots < c_\sigma$ on the alphabet $\Sigma = \{c_1, c_2, \ldots, c_\sigma\}$.
The \emph{lexicographic order} of strings is then defined by $T < T' \Leftrightarrow T \text{ is a proper prefix of } T' \lor \exists i \in [0, \min(|T|, |T'|) - 1]: T[1,i] = T'[1,i] \land T[i+1] < T'[i+1]$.
For $i \in [1, n]$, let $\rank_c(T, i)$ be the number of occurrences of $c$ in $T[1, i]$. Let $\rank_{<c}(T, i)$ be the number of occurrences of all $\alpha < c$ in $T[1, i]$. We additionally define $\rank_c(T, 0) = \rank_{<c}(T, 0) = 0$.
For an integer $i \geq 1$, let $\select_c(T, i) = j$, where $j$ is the position of the $i$-th occurrence of $c$ in $T$ if it exists, and $j = \infty$, else. For $i = 0$, let $\select_c(T, i) = -\infty$.
For $i \in [1, n]$ and $c \in \Sigma$, we call $\pred_c(T, i) = \select_c(T, \rank_c(T, i))$ and $\suc_c(T, i) = \select_c(T, \rank_c(T, i - 1) + 1)$ the \emph{character-predecessor/-successor} of $c$ before/after position $i$ in $T$, resp.
Given a sorted array $A[1..N]$, we call the positions $\pred(A, i) = \max \{j \in [1, N] \mid A[j] \leq i\}$ and $\suc(A, i) = \min \{j \in [1, N] \mid A[j] \geq i\}$ the \emph{predecessor} and \emph{successor} of $i$ in $A$, resp.
We assume that all described algorithms work in the \emph{word-RAM model} with word size $\omega$ \cite{word_ram}.
In the following, $T \in \Sigma^*$ is terminated by a \emph{sentinel} symbol \$ that is lexicographically smaller than all other symbols in $\Sigma$, which simplifies our algorithms.
We denote with $\bwd{T}$ the reverse string $\bwd{T} = T[n - 1]T[n - 2]\cdots T[1]$\$. The reverse of a regular string $S$ (without the terminator symbol) is defined as $\bwd{S} = S[|S|]S[|S| - 1]\cdots S[1]$.
Let $D$ be a data structure defined w.r.t.\ $T$.
Then $\bwd{D}$ denotes the same data structure defined w.r.t.\ $\bwd{T}$.

The \emph{suffix array} $\SA$ \cite{suffix_array} of a string $T$ consists of the starting positions of all suffixes of $T$ in their lexicographic order, i.e.,\ $T_{\SA[1]} < T_{\SA[2]} < \cdots < T_{\SA[n]}$. For a pattern $P \in \Sigma^*$, there is a maximum interval in the suffix array (the \emph{suffix array interval} of $P$) containing exactly the positions of all occurrences of $P$ in $T$.

\emph{Indexed pattern matching} is the approach of using a text index over $T$ to search for $P$.
A bi-directional text index supports bi-directional pattern search, i.e.,\ given the $\SA$-interval of a pattern $P$ and the $\bwd{\SA}$-interval of $\bwd{P}$, we can efficiently compute the $\SA$- and $\bwd{\SA}$-interval of $cP$ and $Pc$ (left- and right-extension).
In order to locate the occurrences of a searched pattern, the index must also support enumeration of the maintained $\SA$-interval.

\subsection{Burrows-Wheeler Transform and Backward Search}\label{sct:bwt_bwsearch}
Bi-directional text indexes are based on the \emph{Burrows-Wheeler Transform} (BWT), defined as follows.
The $i$-th \emph{rotation} of a string $T$ is $T[i,n]T[1,i-1]$, and the \emph{rotation matrix} (BWM) is the $n \times n$ matrix whose $i$-th row is the $i$-th rotation of $T$.
If we sort the BWM lexicographically, its last column $L$ is the BWT of $T$ and its first column is denoted by $F$.

If we store the frequency array $C[1..\sigma]$ with $C[c] = |\{i \in [1,n]\, |\ T[i] < c\}|$ and a data structure to compute $\rank_c(L, i)$ \cite{wavelet_tree}, then we can implement left-extension:
Given $\SA$-interval $[b, e]$ of a pattern $P$, we can compute the $\SA$-interval $[b', e']$ of $cP$ for any $c \in \Sigma$ by $b' = C[c] + \rank_c(L, b - 1) + 1$ and $e' = C[c] + \rank_c(L, e)$.
The process of matching a pattern backwards using left-extensions is called backward search \cite{indexing_compressed_text}.

\subsection{Bi-Directional r-index}
In the BWT, equal symbols are often grouped into \emph{runs}, which we can exploit for compression by applying a \emph{run-length encoding} to it.
Let $L_1, L_2, \ldots, L_r$ denote those runs, i.e.,\ $r$ is the number of runs in $L$ and $L = L_1L_2 \cdots L_r$.
Then we can compress the BWT into the \emph{run-length encoded BWT} (RLBWT) $(L_1[1], |L_1|), (L_2[1], |L_2|), \ldots, (L_r[1], |L_r|)$.
The r-index \cite[Lemma~2.1]{r_index} uses this to implement left-extension in $\Oh(r)$ space and $\Oh(\log\log_\omega (\sigma + n / r))$ time.

\begin{figure*}
    \centering
    \includegraphics[width=0.77\linewidth]{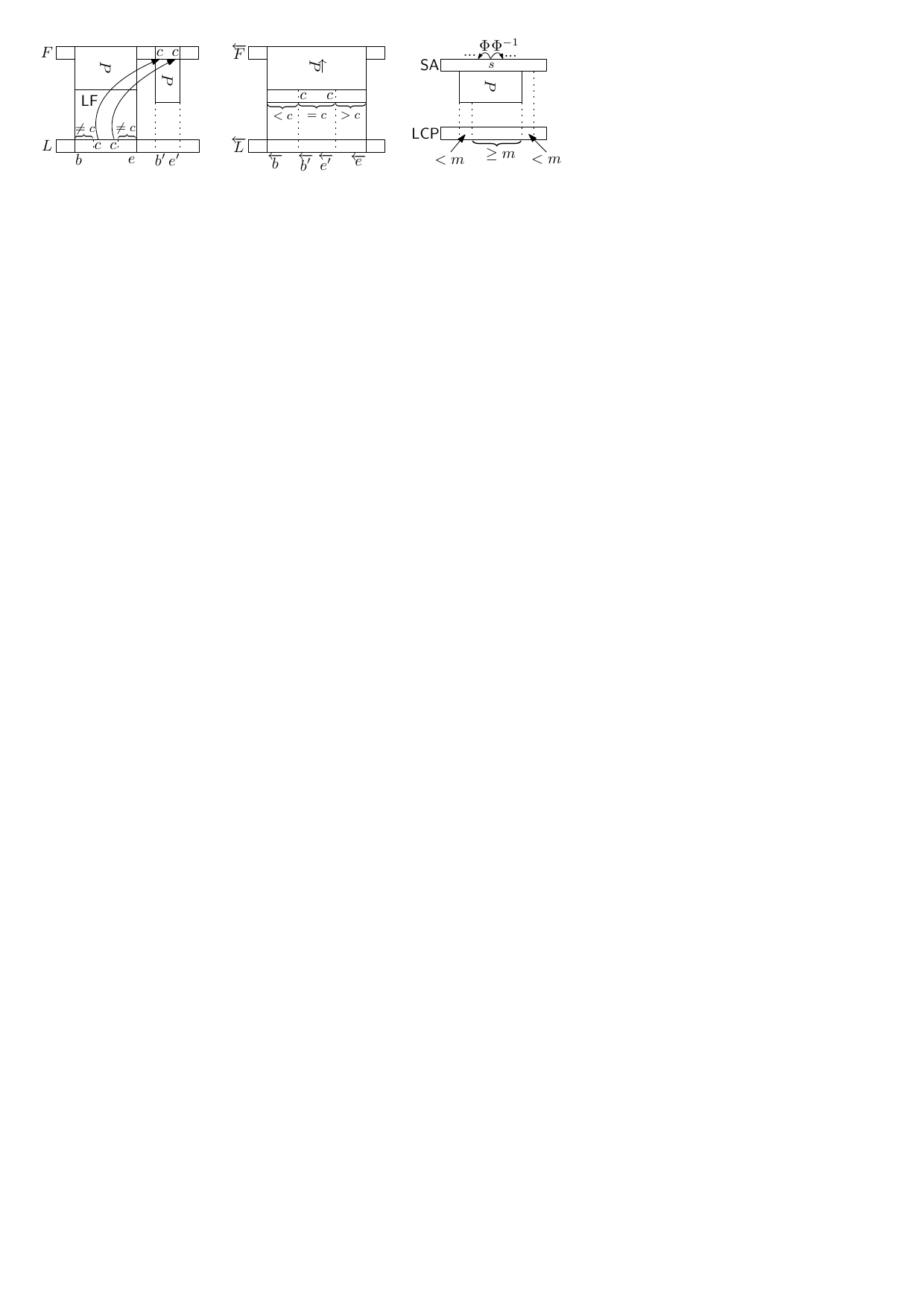}
    \caption{Illustration of left-extension in a bi-directional index (left and middle), and the locate algorithm of br-index (right).}
    \label{fig:br-index}
\end{figure*}

The bi-directional r-index (br-index) \cite{br_index} additionally stores an r-index over $\bwd{T}$, which enables us to switch between left- and right-extension if we also maintain the $\bwd{\SA}$-interval $[\bwd{b}, \bwd{e}]$ of $\bwd{P}$ in $\bwd{T}$.
The three main challenges here are maintaining $[b, e]$ and $[\bwd{b}, \bwd{e}]$, maintaining a value $s \in \SA[b, e]$, and enumerating $\SA[b, e]$ using $s$.

In order to discuss how the br-index achieves this, we first define the variables that are needed during bi-directional search.

\begin{definition}
    Let $P$ be a pattern of length $m$.
    We call $C = \langle m, b, e, s, \bwd{b}, \bwd{e}, \bwd{s} \rangle$ a valid context for $P$ iff $[b, e]$ is the $\SA$-interval of $P$, $[\bwd{b}, \bwd{e}]$ is the $\bwd{\SA}$-interval of $\bwd{P}$, $s \in \SA[b, e]$ and $\bwd{s} \in \bwd{\SA}[\bwd{b}, \bwd{e}]$.
\end{definition}

With this, we can now discuss the algorithm for left-extension (see \Cref{alg:bri-prepend}).
Right-extension works like left-extension w.r.t.\ $\bwd{T}$.

\subsubsection{Maintaining the SA- and $\overleftarrow{\text{SA}}$-Interval}
$[b', e']$ is computed in lines 1--2, as described in \Cref{sct:bwt_bwsearch}.
When computing $\bwd{\SA}$-interval $[\bwd{b'}, \bwd{e'}]$ of $\bwd{cP}$ w.r.t.\ $\bwd{T}$, the main observation is that $[\bwd{b'}, \bwd{e'}]$ is fully contained within $[\bwd{b}, \bwd{e}]$.
More precisely, we have $\bwd{b'} - \bwd{b} = |\{i \in [b, e] \mid L[i] < c\}|$, because the same symbols that prepend occurrences of $P$ in $T$, append occurrences of $\bwd{P}$ in $\bwd{T}$, and occur sorted in the range $[\bwd{b}, \bwd{e}]$ in the $(m + 1)$-st column of the BWM of $\bwd{T}$ (see middle in \Cref{fig:br-index}).
Thus, we can set $\bwd{b'} = \bwd{b} + \rank_{<c}(L, e) - \rank_{<c}(L, b - 1)$ and $\bwd{e'} = \bwd{b'} + e' - b'$.
Since we cannot directly compute $\rank_{<c}$-queries with the data structures of the r-index, we have to compute two $\rank_\alpha$-queries for each $\alpha \in \Sigma$ with $\alpha < c$ (see lines 7--11).
This amounts to $\Oh(\sigma \log\log_\omega \sigma)$ time for computing $[\bwd{b'}, \bwd{e'}]$.
The authors also propose another (purely theoretical) method, which achieves $\Oh((1 / \epsilon) \log^{2+\epsilon} r)$ time, still with $\Oh(r \log n)$ bits of space \cite[Theorem~2]{br_index}.

\subsubsection{Maintaining One Value in the SA-Interval}
If we only perform left-extensions, we can just maintain one $\SA$-value $s = \SA[e]$ like with the r-index.
This does not work if we allow right-extensions: if we have $x \leq b < b' \leq e' < e \leq y$ during a right-extension, where $L[x, y]$ is a run, then there is no stored sample in $\SA[b', e']$, hence we cannot maintain a value in $\SA[b', e']$ using only $\SA$-samples.
We can instead obtain such a value using $\bwd{\SA}$-samples: due to $e' - b' < e - b$, there is a run $\bwd{L}[x', y'] \neq c$ with $[x', y'] \cap [\bwd{b}, \bwd{e}] \neq \emptyset$.
Thus, there is at least one run in $\bwd{L}$ with character $c$ whose start- or end position lies in $[\bwd{b}, \bwd{e}]$.
Hence, we can instead obtain a value in $\bwd{\SA}[\bwd{b'}, \bwd{e'}]$ by storing $\SA$- and $\bwd{\SA}$-samples at run start- and end-positions using the following data structures (w.r.t.\ $T$ and $\bwd{T}$):
For each $c \in \Sigma$, we store the start and end positions of the $r_c$ runs with character $c$ in $L$ in a sorted array $R_c[1..2r_c]$ with $\pred$-support (e.g., a sparse bit vector).
Additionally, we store for each $c \in \Sigma$ an $\SA$-sample array $S_c[1..2r_c]$ with $S_c[i] = \SA[R_c[i]]$.
This enables us to compute $\bwd{s'} = \bwd{S_c}[\bwd{R_c}.\pred(\bwd{e})] - 1$ with $\bwd{s'} \in \bwd{\SA}[\bwd{b'}, \bwd{e'}]$ using the $\bwd{\LF}$-property.
Finally, we translate $\bwd{s'}$ back to a value $s' = n - \bwd{s'} - m \in \SA[b', e']$.
This corresponds to lines 12--13 in \Cref{alg:bri-prepend}, but with the directions flipped.
Finally, lines 3--5 handle the case where the $\bwd{\SA}$-interval remains the same.

Overall, we get the following running time for extensions, where the $\sigma$ part stems from the $\rank$-queries on $L'$ and the $n / r_{\min}$ term bounds the $\pred$-query on $R_c$, resp.\ $\bwd{R_c}$.

\begin{theorem}[{\cite[Theorem~1]{br_index}}]
    Let $P$ be a pattern of length $m$ and let $C$ be a valid context for $P$.
    Then there is a data structure of size $\Oh((r + \bwd{r}) \log n)$ bits, s.t.\ we can compute a valid context of $cP$ (left-extension) and $Pc$ (right-extension) from $C$ in $\Oh(\sigma \log\log_\omega \sigma + \log\log_\omega (n / r_{\min}))$ time.
\end{theorem}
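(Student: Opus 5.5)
We prove the statement for left-extension; right-extension is the same algorithm run on the structures for $\bwd{T}$ with the roles of $T$ and $\bwd{T}$ exchanged. The data structure has two parts. For $T$ we store the r-index components: $C$, the run heads $L'$ with $\pred_c$/$\suc_c$ support in $\Oh(\log\log_\omega \sigma)$ time, and move data structures for $\LF$ following Nishimoto and Tabei. Additionally we store, for every $c$, the arrays $R_c$ (run start/end positions, as sparse bit vectors with $\pred$ support) and $S_c$ (the $\SA$-samples at those positions). We store the same for $\bwd{T}$. Each of these objects has $\Oh(r)$ resp.\ $\Oh(\bwd{r})$ entries of $\Oh(\log n)$ bits, so the total space is $\Oh((r+\bwd{r})\log n)$ bits.

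\textbf{Step 1: the new $\SA$-interval.} By the result of Nishimoto and Tabei, $\rank_c(L,i)$ is answered by one $\pred_c$/$\suc_c$ query on $L'$ followed by an $\LF$ step on the move structure. This gives $[b',e']$ in $\Oh(\log\log_\omega\sigma)$ time.

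\textbf{Step 2: the new $\bwd{\SA}$-interval.} Use the observation that the reverse interval is nested in the old one:
\[
\bwd{b'}=\bwd{b}+\rank_{<c}(L,e)-\rank_{<c}(L,b-1),\qquad \bwd{e'}=\bwd{b'}+e'-b'.
\]
Compute $\rank_{<c}$ as a sum of $\rank_\alpha$ over all $\alpha<c$. That is $\Oh(\sigma)$ queries of $\Oh(\log\log_\omega\sigma)$ each, for $\Oh(\sigma\log\log_\omega\sigma)$ in total. If $b'>e'$, report that $cP$ does not occur and stop.

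\textbf{Step 3: a sample $s'\in\SA[b',e']$.} Split into two cases.
\begin{itemize}
\item If $L[b,e]$ is not a single run of $c$, then some run of $c$ has its start or end inside $[b,e]$. Take $j=R_c[\pred(e)]$; one checks that $j\in[b,e]$ and $L[j]=c$. By the $\LF$-property, $s'=S_c[\pred(e)]-1$ lies in $\SA[b',e']$. Here one case split is needed: whether $L[e]=c$, or the predecessor run end lies in $[b,e]$, or one must use $\suc$ instead.
\item If $L[b,e]$ is one run of $c$, then $[\bwd{b'},\bwd{e'}]=[\bwd{b},\bwd{e}]$ and we set $s'=s-1$, exactly as in lines 3--5 of \Cref{alg:bri-prepend}.
\end{itemize}
The $\pred$-query on $R_c$, a sparse bit vector over universe $n$ with $\Oh(r_c)$ elements, costs $\Oh(\log\log_\omega(n/r))$ via standard Elias--Fano/predecessor bounds. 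That is within $\Oh(\log\log_\omega(n/r_{\min}))$.

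\textbf{Step 4: the reverse sample $\bwd{s'}$.} Because $\SA[b',e']$ and $\bwd{\SA}[\bwd{b'},\bwd{e'}]$ describe the same occurrences, translate by $\bwd{s'}=n-s'-(m+1)$. Verify this with the translation identity stated in the paper: an occurrence of $cP$ starting at $s'$ in $T$ corresponds to one of $\bwd{P}c$ in $\bwd{T}$ starting at $n-s'-(m+1)$, up to the sentinel convention.

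The main obstacle is Step 3: proving rigorously that the chosen sample position lies in $[b,e]$ and carries character $c$, including the boundary cases at $b$ and $e$. Since $C$ gives only $s$ and no other samples, the argument must rely solely on run-boundary samples. I would argue this by contradiction: if no run boundary of character $c$ lies in $[b,e]$ but $c$ occurs in $L[b,e]$, then the whole interval lies inside one $c$-run, which is the second case above. Summing Steps 1--4 yields the claimed $\Oh(\sigma\log\log_\omega\sigma+\log\log_\omega(n/r_{\min}))$ bound.
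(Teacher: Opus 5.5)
Your plan is the paper's \Cref{alg:bri-prepend} for the br-index, with right-extension handled symmetrically. The steps match: $[b',e']$ by two rank queries, $[\bwd{b'},\bwd{e'}]$ via $\rank_{<c}$ summed over all $\alpha<c$, the new sample $s'=S_c[R_c.\pred(e)]-1$ (or $s'=s-1$ when $L[b,e]$ consists only of $c$), and a translation to the reverse text.

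Your closing contradiction argument is exactly what Step~3 needs, and it makes your hedged case split unnecessary. Some boundary of a $c$-run lies in $[b,e]$, so the largest $c$-boundary $\le e$ is at least $b$; $\pred$ therefore always works and no $\suc$ query is needed.

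There is one concrete error, in Step~4. Since $\bwd{T}[i]=T[n-i]$ for $i\in[1,n-1]$, an occurrence $T[s',s'+m]=cP$ gives $\bwd{T}[n-s'-m,\,n-s']=T[s'+m]\cdots T[s']=\bwd{P}c$. The correct value is therefore $\bwd{s'}=n-s'-m$, as in line~13 of \Cref{alg:bri-prepend}. In general, a string of length $\ell$ starting at $s$ in $T$ has its reverse starting at $n-s-\ell+1$ in $\bwd{T}$. Your $n-s'-(m+1)$ points one position too far left and is in general not an occurrence of $\bwd{P}c$; no sentinel convention in this setting absorbs the shift.

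A smaller accounting issue concerns the predecessor queries. In Step~1, and for the $\rank_\alpha$ queries of Step~2, the Nishimoto--Tabei reduction runs in $\Oh(\log\log_\omega\sigma)$ time only once the indices of the runs containing $b$ and $e$ are known. A br-index context does not carry them, because after a right-extension the interval did not move by $\LF$. You must first do a predecessor query on the run starts, costing $\Oh(\log\log_\omega(n/r))$. This still fits the claimed bound, but it is where the $\log\log_\omega(n/r_{\min})$ term actually comes from.

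The same run index also gives a cleaner Step~3. The standard bound for a predecessor structure on $R_c$ alone is $\Oh(\log\log_\omega(n/r_c))$, which need not be $\Oh(\log\log_\omega(n/r))$ because $r_c$ can be far smaller than $r$; the paper's description glosses over this too. Instead, a $\pred_c$ query on $L'$ from the run index of $e$ finds, in $\Oh(\log\log_\omega\sigma)$ time, the last $c$-run starting at or before $e$. The last boundary $\le e$ of that run is then a valid sample, by your contradiction argument.
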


\begin{algorithm}[!h]
\caption{$\mathsf{Left\text{-}Extension}(c, \langle m, b, e, s, \bwd{b}, \bwd{e}, \bwd{s} \rangle)$}\label{alg:bri-prepend}
\begin{algorithmic}[1]
\STATE $b' \gets C[c] + \rank_c(L, b - 1) + 1$;
\STATE $e' \gets C[c] + \rank_c(L, e)$;
\IF{$e' - b' = e - b$}
    \STATE $\bwd{b'} \gets \bwd{b};\ \bwd{e'} \gets \bwd{e}$;
    \STATE $s' \gets s - 1;\ \bwd{s'} \gets \bwd{s}$;
\ELSE
    \STATE $\bwd{b'} \gets \bwd{b}$;
    \FOR{$\alpha \gets 1$ \textbf{to} $c - 1$}
        \STATE $\bwd{b'} \gets \bwd{b'} + \rank_\alpha(L, e) - \rank_\alpha(L, b - 1)$;
    \ENDFOR
    \STATE $\bwd{e'} \gets \bwd{b'} + e' - b'$;
    \STATE $s' \gets S_c[R_c.\pred(e)] - 1$;
    \STATE $\bwd{s'} \gets n - s' - m$;
\ENDIF
\RETURN $\langle m + 1, b', e', s', \bwd{b'}, \bwd{e'}, \bwd{s'} \rangle$;
\end{algorithmic}
\end{algorithm}

\subsubsection{SA-Interval Enumeration}\label{sct:br_sa_enumeration}
In contrast to the r-index, where we have $s = \SA[e]$, we generally have $s \in \SA[b, e]$.
Hence, in order to enumerate $\SA[b, e]$, we need data structures to compute $\Phi$ and $\Phi^{-1}$ and either
(\textbf{i}) maintain $\SA^{-1}[s]$ or
(\textbf{ii}) find a way to detect when $\Phi^i(s) \notin \SA[b, e]$ for an integer $i$.

(\textbf{i}) can be done using $\Oh((r + \bwd{r}) \log n)$ bits of space, as we will see in \Cref{sct:sa_enumeration}.
(\textbf{ii}) can be done with the array $\PLCP[1..n]$, where $\PLCP[i]$ is the length of the longest matching prefix of $T_i$ and $T_{\Phi(i)}$ (see right in \Cref{fig:br-index}).
As soon as $\PLCP[\Phi^i(s)] < m$ holds, we output $\Phi^i(s)$ and stop performing $\Phi$-queries, because this implies $P$ does not occur at position $\Phi^{i + 1}(s)$ in $T$.
Similarly, we stop performing $\Phi^{-1}$-queries as soon as $\PLCP[\Phi^{-i}(s)] < m$ (and do not output $\Phi^{-i}(s)$), because this implies $P$ does not occur at position $\Phi^{-i}(s)$ in $T$.

We need $\Oh(1)$ time per occurrence if we implement $\Phi$ and $\Phi^{-1}$ with move data structures ($\MPhi$ and $\MPhiinv$).
By storing the irreducible $\PLCP$ values interleaved with the arrays of $\MPhi$ and $\MPhiinv$, $\PLCP[\Phi^i(s)]$ can be computed in $\Oh(1)$ time \cite[Section~3.2]{br_index}.
However, finding the intervals of $\MPhi$ and $\MPhiinv$ that contain $s$ (which we need for performing $\move$ queries) requires two sparse-bit-vector $\pred$-queries ($\Oh(\log\log_\omega (n / r))$ time).
Overall, we get the following running time.

\begin{theorem}[{\cite[Theorem~1]{br_index}}]\label{thm:br_locate}
    Let $P$ be a pattern of length $m$ and let $C$ be a valid context for $P$.
    Then there is a data structure of size $\Oh((r + \bwd{r}) \log n)$ bits, s.t.\ we can compute all occurrences of $P$ in $T$ from $C$ in $\Oh(\log\log_\omega (n / r) + occ)$ time.
\end{theorem}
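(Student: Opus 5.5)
The locate bound follows from combining the move data structures $\MPhi$ and $\MPhiinv$ with the $\PLCP$-based stopping criterion of \Cref{sct:br_sa_enumeration}. The structure consists of:
\begin{itemize}
    \item $\MPhi$ and $\MPhiinv$, each with $\Oh(r)$ intervals;
    \item the irreducible $\PLCP$ values, stored interleaved with the interval arrays;
    \item a sparse bit vector with $\pred$-support marking the interval starting positions of $\MPhi$ and of $\MPhiinv$.
\end{itemize}
Each component takes $\Oh(r \log n)$ bits, so together with the structures needed to maintain the valid context, the total is $\Oh((r + \bwd{r}) \log n)$ bits.

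The algorithm starts from $s \in \SA[b, e]$, given by the valid context $C$. First, we locate the interval of $\MPhi$ containing $s$ and the interval of $\MPhiinv$ containing $s$ via two $\pred$-queries on the sparse bit vectors. Since these have $\Oh(r)$ ones over a universe of size $n$, each query takes $\Oh(\log\log_\omega (n / r))$ time. This is the only non-constant cost and it is incurred once, which gives the additive term.

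Next, we iterate $\Phi$ from $s$ using $\move$ queries. Each $\move$ query maps a (position, interval index) pair to its image together with the new interval index in amortized-free $\Oh(1)$ time, because the move structure is balanced, as guaranteed by Nishimoto and Tabei. At each step we evaluate $\PLCP[\Phi^i(s)]$ in $\Oh(1)$ time. For this we use $\PLCP[i] = \PLCP[p] - (p - i)$ relative to the irreducible value at the start $p$ of the current $\MPhi$ interval; this holds since the interval boundaries of $\MPhi$ are exactly the positions $\SA[j]$ with $j$ a run start, where the irreducible $\PLCP$ values live. We output positions until $\PLCP[\Phi^i(s)] < m$, including that last one. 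Symmetrically, we iterate $\Phi^{-1}$ with $\MPhiinv$ and stop at the first $\Phi^{-i}(s)$ with $\PLCP[\Phi^{-i}(s)] < m$, which is not output.

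For correctness, note that $\SA[b, e]$ is a contiguous block of $\SA$. The chain $\Phi^i(s)$ walks upward in $\SA$ and the chain $\Phi^{-i}(s)$ walks downward. A position $\SA[j]$ with $j > b$ lies in the interval iff $\PLCP[\SA[j]] \geq m$ at every step inside. Hence every occurrence is output exactly once: $s$ itself on the $\Phi$ side, and no duplicates arise. Every step of the two loops either outputs an occurrence or is one of the two terminating steps, so the total cost is $\Oh(occ)$ plus the initial $\pred$-queries.

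The main obstacle is the $\Oh(1)$ evaluation of $\PLCP$ at an arbitrary position $\Phi^i(s)$ using only irreducible values. I would argue via the standard lemma that $\PLCP[i] = \PLCP[i-1] - 1$ whenever $i - 1$ and $i$ lie in the same $\MPhi$ input interval, since then $\Phi(i) = \Phi(i-1) + 1$. This reduces the computation to the interval-start value and the offset $i - p$, both available from the move structure state.
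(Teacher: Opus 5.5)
Your proposal takes the same route as the paper, which defers to the br-index: two sparse-bit-vector $\pred$-queries to find the intervals of $\MPhi$ and $\MPhiinv$ containing $s$, then $\Oh(1)$-time $\move$ queries in both directions, with the $\PLCP$-based stopping rule evaluated from values stored at interval starts. Two slips need fixing. First, for an $\MPhi$ interval starting at $p \le i$ the formula is $\PLCP[i] = \PLCP[p] - (i - p)$, as your own recurrence $\PLCP[i] = \PLCP[i-1] - 1$ gives, not $\PLCP[p] - (p - i)$. Second, on the $\Phi^{-1}$ side you hold only an $\MPhiinv$ interval index, so you must store $\PLCP[\Phi^{-1}(p)]$ at each $\MPhiinv$ interval start $p$; this sequence, not $\PLCP$ itself, decreases by one along $\MPhiinv$ intervals, and you use it to test $\PLCP[\Phi^{-i}(s)]$ before the $i$-th $\move$.
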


\section{Algorithmic Optimizations}\label{sect:opt}
We first describe the fundamental data structures of our index and the variables that we maintain during extensions (\Cref{sct:fundamental_definitions}), and then show how to compute a value in $\SA[b, e]$ from them (\Cref{sct:finding_the_initial_sa_value}).
After discussing character extensions (\Cref{sct:all_k_extensions,sct:single_extensions}), we describe our $\SA$-interval enumeration algorithm (\Cref{sct:sa_enumeration}) and practical extensions (\Cref{sct:practical_extensions}).
Construction (\Cref{sct:construction}), further practical optimizations (\Cref{sct:practical_optimizations}) and our optimized search scheme APM algorithms (\Cref{sct:apm_algorithms}) are deferred to the appendix.

\subsection{Fundamental Definitions}\label{sct:fundamental_definitions}

\begin{definition}
    Let $\MDS$ be a move data structure with $N$ intervals representing a function $f : [1, n] \rightarrow [1, n]$.
    Then, we denote with $\MDS_\p[1..N]$ the array storing its input interval starting positions in ascending order.
    Given $(i, x)$ with $i \in [1, n]$ and $x = \pred(\MDS_\p, i)$, the move query asks for $(i', x')$, where $i' = f(i)$ and $x' = \pred(\MDS_\p, i')$.
    Let $\MLF$ be a balanced move data structure for $\LF$, and let $r' = \Oh(r)$ be the number of its intervals.
    Let $L'[1..r']$ be the string containing the BWT sub-runs w.r.t.\ $\MLF$, i.e.,\ $L'[i] = L[\MLF_\p[i]]$.
    Let $\MPhi$ and $\MPhiinv$ with interval numbers $r'' = \Oh(r)$ and $r''' = \Oh(r)$ be move data structures for $\Phi$ and $\Phi^{-1}$, resp.
    Finally, let $S[1..2r']$ be an array, where $S[2i - 1] = \SA[\MLF_\p[i]]$ and $S[2i] = \SA[\MLF_\p[i + 1] - 1]$.
\end{definition}

Recall that the br-index keeps a value $s \in \SA[b, e]$ at \emph{every} extension, re-establishing it through an expensive $\pred$-query on a sparse bit vector whenever the interval shrinks.
This can be avoided due to the following observation:
After the last shrinking extension (the critical extension, see \Cref{def:extension-path}), the $\SA$- and $\bwd{\SA}$-intervals only relocate by $\LF$, resp.\ $\bwd{\LF}$.
Hence, a sample inside the interval at the critical extension stays inside the interval.
Thus, it suffices to record which sample was caught, its offset in the interval, and how many extensions in that direction have occurred since the critical extension (see \Cref{def:ctx}).
This is lightweight bookkeeping and incurs no cache misses.

\begin{definition}\label{def:extension-path}
    Let $P$ be a pattern of length $m \geq 1$.
    We call any $D = [d_1, \ldots, d_m]$ an extension path for $P$, where $d_i \in \{\mathsf{L}, \mathsf{R}\}$ is the direction of the $i$-th extension.
    For $i \in [0, m]$, let $S_i$ be the substring of $P$ matched after performing the first $i$ extensions of $D$, and let $[b_i, e_i]$ and $[\bwd{b_i}, \bwd{e_i}]$ be the $\SA$- and $\bwd{\SA}$-intervals of $S_i$ and $\bwd{S_i}$, resp.
    We call the last extension shrinking the $\SA$- and $\bwd{\SA}$-intervals the critical extension of this extension path, i.e.,\ its index $X \in [1, m]$ is maximal, s.t.\ $e_X - b_X < e_{X - 1} - b_{X - 1}$.
\end{definition}

Since each shrinking ($i$-th) left-extension (with character $c$) has a BWT sub-run of $c$ starting or ending within its $\SA$-interval $[b_{i - 1}, e_{i - 1}]$, we get the following.

\begin{lemma}\label{lem:exists_sample}
    Let $P$ be a pattern of length $m$ and let $D$ be an extension path for $P$ as defined in \Cref{def:extension-path}.
    Let $j \in [1, m]$ s.t.\ $e_j - b_j < e_{j - 1} - b_{j - 1}$.
    Then $\mathbf{(i)}$ if $d_j = \mathsf{L}$, there is an $x \in [1, 2r']$ s.t.\ $S[x] - 1 \in \SA[b_j, e_j]$, and
    $\mathbf{(ii)}$ if $d_j = \mathsf{R}$, there is an $x \in [1, 2\bwd{r'}]$ s.t.\ $\bwd{S}[x] - 1 \in \bwd{\SA}[\bwd{b_j}, \bwd{e_j}]$.
\end{lemma}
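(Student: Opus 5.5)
Case (i) is the substantive one; case (ii) is the mirror image of (i) with respect to $\bwd{T}$, so I would prove (i) and then invoke symmetry.

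\textbf{Setup for (i).} Here $d_j = \mathsf{L}$, so $S_j = cS_{j-1}$ for some character $c$. By the backward-search formulas of \Cref{sct:bwt_bwsearch}, $[b_j, e_j] = \{\LF(i) \mid i \in [b_{j-1}, e_{j-1}],\ L[i] = c\}$, and $\LF$ is increasing on the positions holding $c$. So $e_j - b_j + 1$ equals the number of occurrences of $c$ in $L[b_{j-1}, e_{j-1}]$. The shrinking assumption then gives a position $i \in [b_{j-1}, e_{j-1}]$ with $L[i] \neq c$. Since the interval is non-empty after extension (the pattern occurs), some $i'$ in the interval has $L[i'] = c$.

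\textbf{Locating a sub-run boundary.} Take the maximal stretch of $c$'s containing $i'$, restricted to $[b_{j-1}, e_{j-1}]$. At least one end of this stretch is strictly inside the interval and is adjacent to a non-$c$ position. So there is a position $q \in [b_{j-1}, e_{j-1}]$ with $L[q] = c$ such that either $q$ is the first position of a BWT run or $q$ is the last position of one. Every run boundary is also a sub-run boundary of $\MLF$, because the sub-runs refine the runs. Hence $q = \MLF_\p[y]$ or $q = \MLF_\p[y+1]-1$ for some $y \in [1, r']$, and therefore $\SA[q] = S[x]$ with $x = 2y-1$ or $x = 2y$.

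\textbf{Transferring the sample.} Because $L[q] = c$ and $q \in [b_{j-1}, e_{j-1}]$, we have $\LF(q) \in [b_j, e_j]$. Moreover $\SA[\LF(q)] = \SA[q] - 1 = S[x] - 1$; the wrap-around case cannot occur here since $L[q] = c \neq \$$. This proves (i).

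The only delicate point is the boundary argument: the run boundary must be one whose character is $c$, not a non-$c$ run boundary. Choosing the stretch of $c$'s explicitly handles this. For (ii), I would apply the same argument to $\bwd{T}$, using $\bwd{\SA}$-intervals, $\bwd{\MLF}$ and $\bwd{S}$. This works because a shrinking right-extension is exactly a shrinking left-extension of $\bwd{S_{j-1}}$ in $\bwd{T}$, and the $\SA$- and $\bwd{\SA}$-intervals have equal size.
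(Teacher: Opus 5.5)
Your proof is correct and is essentially the paper's argument. The paper justifies the lemma only by the one-line remark that a shrinking left-extension by $c$ has a BWT sub-run of $c$ starting or ending inside $[b_{j-1}, e_{j-1}]$; the sample at that boundary is carried by $\LF$ into $[b_j, e_j]$ as $S[x]-1$, and (ii) follows by symmetry. Your write-up makes explicit three points the paper leaves implicit: the choice of a $c$-position adjacent to a non-$c$ position, the fact that run boundaries are also sub-run boundaries of $\MLF$, and the equal sizes of the $\SA$- and $\bwd{\SA}$-intervals needed to transfer the shrinking hypothesis in (ii). Like the paper, you also implicitly assume that $P$ occurs in $T$.
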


With this, we can now define the variables that we maintain during character extensions.

\begin{definition}\label{def:ctx}
    Let $P$ be a pattern of length $m \geq 1$.
    Let $C = \langle m, d, \sa, \bwd{\sa}, \chi \rangle$, where
    $\sa = \langle b, e, \mu, \nu \rangle$,
    $\bwd{\sa} = \langle \bwd{b}, \bwd{e}, \bwd{\mu}, \bwd{\nu} \rangle$, and
    $\chi = \langle t, x, i, o \rangle$.
    We call $C$ a valid context for $P$ iff
    \begin{enumerate}
        \item[$\mathbf{(i)}$] $[b, e]$ and $[\bwd{b}, \bwd{e}]$ are the $\SA$- and $\bwd{\SA}$-intervals of $P$ and $\bwd{P}$, resp.,
        \item[$\mathbf{(ii)}$] $[\mu, \nu] = [\pred(\MLF_\p, b), \pred(\MLF_\p, e)]$ if $d = \mathsf{L}$, and $[\bwd{\mu}, \bwd{\nu}] = [\pred(\MLFbwd_\p, \bwd{b}), \pred(\MLFbwd_\p, \bwd{e})]$, else, and
        \item[$\mathbf{(iii)}$] there is an extension path $D$ for $P$ as defined in \Cref{def:extension-path}, s.t.\ $d = d_m$, $t = d_X$, $i = |\{j \in [X, m] \mid d_j = d_X\}|$, and
        \begin{enumerate}
            \item if $d_X = \mathsf{L}$, then $S[x] - 1 \in \SA[b_X, e_X]$ and $o = \SA^{-1}[S[x] - 1] - b_X$, and
            \item if $d_X = \mathsf{R}$, then $\bwd{S}[x] - 1 \in \bwd{\SA}[\bwd{b_X}, \bwd{e_X}]$.
        \end{enumerate}
    \end{enumerate}
\end{definition}

Since the first extension of any extension path is shrinking, one can easily see that for any choice of $P$ (that occurs in $T$) and $D$, there is a valid context for $P$.

$\mu$ and $\nu$ are the BWT sub-run indices of $b$ and $e$ and are necessary for performing $\LF$-$\move$-queries with $b$ and $e$, and finding the first and last occurrences of $c$ in $L[b, e]$ during left-extensions.
By (\textbf{ii}), we maintain $\mu$ and $\nu$ if the last performed extension was a left-extension ($d = \mathsf{L}$), and $\bwd{\mu}$ and $\bwd{\nu}$, else.
This has two reasons: we only need $\mu$ and $\nu$ to perform left-extension (and only $\bwd{\mu}$ and $\bwd{\nu}$ to perform right-extension), and we cannot efficiently maintain the other pair (in $\Oh(k)$ time) after a left-, resp.\ right-extension, as this requires $2$ $\pred$-queries on $\MLFbwd_\p$, resp.\ $\MLF_\p$.
Thus, we only perform these $\pred$-queries when the direction switches.
Finally, (\textbf{iii}) ensures that, depending on the direction $t$ of the critical extension, there is either a sample in $\SA[b_X, e_X]$ and $o$ is the offset of its position in $[b_X, e_X]$, if $t = \mathsf{L}$, or a sample in $\bwd{\SA}[\bwd{b_X}, \bwd{e_X}]$, else.
This sample can be used to compute a value in $\SA[b, e]$ in the following way.

\subsection{Computing One Value in the SA-Interval}\label{sct:finding_the_initial_sa_value}
Here, we have two cases, depending on the direction of the critical extension.
If $d_X = \mathsf{L}$, we can compute an $\SA$-value $v$ and its position $p$ from $\chi$ in $\Oh(1)$ time, because $v$ is derived from an $\SA$-sample obtained at the critical extension representing the start of the then matched substring of $P$, so its position only moves under $\LF$ afterwards, i.e.,\ its offset remains unchanged.
Else ($d_X = \mathsf{R}$), $v$ is derived from a $\bwd{\SA}$-sample and represents the end of the then matched substring of $P$, so we have to perform $m - i$ $\LF$ steps to obtain the position in $\SA[b, e]$ of its corresponding $\SA$-value.

\begin{theorem}\label{thm:initial_sa_value}
    Let $P$ be a pattern of length $m$ and let $C$ be a valid context for $P$ as defined in \Cref{def:ctx}.
    Then there is a data structure of size $\Oh((r + \bwd{r}) \log n)$ bits, s.t.\ we can compute a value $v \in \SA[b, e]$ and its position $p = \SA^{-1}[v]$ from $C$ in $\Oh(m)$ time.
\end{theorem}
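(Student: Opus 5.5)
We split on the direction $t = d_X$ of the critical extension stored in $\chi$. The data structure consists of $\MLF$ with the sample array $S$, and the mirrored $\MLFbwd$ with $\bwd{S}$. Together these take $\Oh((r+\bwd{r})\log n)$ bits. The argument rests on one invariant, which follows from the maximality of $X$ in \Cref{def:extension-path}: every extension $j > X$ keeps $e_j - b_j = e_{j-1} - b_{j-1}$.

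\textbf{Case $t = \mathsf{L}$.} We claim $v = S[x] - 1 - (i-1)$ and $p = b + o$.

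\begin{itemize}
\item \emph{At step $X$.} The value $S[x]-1$ lies at position $b_X + o$ and is the start of an occurrence of $S_X$.
\item \emph{Left-extension with unchanged width.} All occurrences of $S_{j-1}$ are preceded by the same character $c$. By the LF-property, $\LF$ maps $[b_{j-1}, e_{j-1}]$ order-preservingly onto $[b_j, e_j]$. Each SA value decreases by one and the offset is preserved.
\item \emph{Right-extension with unchanged width.} Every occurrence of $S_{j-1}$ extends to one of $S_j$ with the same start. Hence $\SA[b_{j-1}, e_{j-1}]$ and $\SA[b_j, e_j]$ are the same set of values. Because all these suffixes share the prefix $S_{j-1}$ and then the same next character, their lexicographic order is unchanged. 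So the offset and the value are both preserved.
\end{itemize}

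By induction, after step $m$ the value equals $S[x]-1$ decreased once for each left-extension after $X$. That count is $i-1$, since $i$ includes step $X$ itself. Its offset in $[b, e]$ is still $o$. Both quantities come from $C$ in $\Oh(1)$ time.

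\textbf{Case $t = \mathsf{R}$.} The symmetric argument applied to $\bwd{T}$ gives $\bwd{v} = \bwd{S}[x] - 1 - (i-1) \in \bwd{\SA}[\bwd{b}, \bwd{e}]$. Translating as in the br-index gives $v = n - \bwd{v} - m \in \SA[b,e]$. The offset $o$ is not stored in this case, so we cannot read off the position of $\bwd{v}$ or of $v$, and $p$ must be computed.

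To compute $p$, we use that $\bwd{S}[x]$ is the SA value at a sub-run boundary of $\MLFbwd$. Its position in $\bwd{\SA}$ is therefore $\MLFbwd_\p[\lceil x/2\rceil]$, or that boundary minus one for even $x$. Interpreted in $T$, this sample marks an end of the matched string. Concretely, $\bwd{S}[x]$ corresponds to some text position $q$ with known SA-rank, obtained by mirroring through $\bwd{T}$ sub-run boundaries. The position $q$ is where an occurrence of $S_X$ ends. After step $X$ there are $m - i$ further left-extensions; the occurrence of $S_X$ begins $|S_X|$ positions before its end, and the current occurrence begins another $m - i$ positions earlier. So from the known-rank position attached to this end we apply $\LF$ repeatedly until we reach rank $p = \SA^{-1}[v]$.

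If a move-query pair $(\text{position}, \text{interval index})$ is available at the start, each step costs $\Oh(1)$ via $\MLF$. The chain has length at most $m$, since $|S_X| + (m-i) \le m$. This gives the $\Oh(m)$ bound.

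\textbf{Main obstacle.} The delicate step is obtaining a legitimate starting pair for the $\LF$ chain in the case $t = \mathsf{R}$ without a $\pred$ query. We need a rank in $\SA$ for a text position adjacent to the sample, together with its interval index in $\MLF$. I would first try storing, for each $\bwd{S}$ entry, the $\SA^{-1}$ rank of the mirrored position and its $\MLF$ interval index. This is $\Oh(\bwd{r}\log n)$ extra bits and keeps the space bound. I would then double-check the off-by-one accounting among $m$, $i$, $|S_X|$ and the $-1$ in the samples. The correctness of the width-invariant arguments is routine.
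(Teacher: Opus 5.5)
Your decomposition and data structures match the paper's. Case $t=\mathsf{L}$ gives $v=S[x]-i$, $p=b+o$ exactly as in the paper. In Case $t=\mathsf{R}$, the arrays you propose under ``Main obstacle'' are the paper's $\bwd{S}_{\pos}[x]=\SA^{-1}[n-\bwd{S}[x]+1]$ and $\bwd{S}_{\p\idx}[x]=\pred(\MLF_{\p},\bwd{S}_{\pos}[x])$. Commit to them rather than treating them as a fallback. The position of $\bwd{S}[x]$ in $\bwd{\SA}$ says nothing about the rank of the mirrored text position in $\SA$, so no ``mirroring through sub-run boundaries'' can replace them.

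The real error is in the length of the $\LF$ chain, which is what determines $p$. Let $v'=n-\bwd{S}[x]+1$; this is the end in $T$ of the occurrence of $S_X$ certified by the sample. After step $X$ there are $i-1$ right-extensions and $m-X-i+1$ left-extensions, not $m-i$ left-extensions. That occurrence of $S_X$ starts $X-1$ positions before $v'$, not $|S_X|=X$. So the current occurrence of $P$ starts at $v'-(X-1)-(m-X-i+1)=v'-(m-i)$, and exactly $m-i$ steps are needed: $p=\LF^{m-i}(\bwd{S}_{\pos}[x])$. Your count $|S_X|+(m-i)$ double-counts the characters of $S_X$, which are already contained in $m-i$. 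It therefore overshoots by $X$ and returns a wrong rank. The inequality $|S_X|+(m-i)\le m$ that you use for the running time is also false; take $X=m$, $i=1$. Finally, ``apply $\LF$ until we reach rank $p$'' is not an executable rule, since $p$ is the unknown. The stopping criterion must be the explicit count $v'-v=m-i$.

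Second, your translation $v=n-\bwd{v}-m$ is off by one under the paper's convention $\bwd{T}[j]=T[n-j]$ for $j\in[1,n-1]$. An occurrence of $\bwd{P}$ starting at $\bwd{v}$ in $\bwd{T}$ is an occurrence of $P$ starting at $n-\bwd{v}-m+1$ in $T$. With your (correct) $\bwd{v}=\bwd{S}[x]-i$, this gives $v=n-\bwd{S}[x]+1-(m-i)=v'-(m-i)$, matching both the corrected chain and the paper's formula. Your version yields $v'-(m-i)-1$, which need not lie in $\SA[b,e]$. It also disagrees with your own chain endpoint $v'-X-(m-i)$. The br-index formula quoted earlier in the paper has the same off-by-one, which is presumably where yours came from.

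With these two counts fixed, your argument is the paper's proof. The only remaining difference is that you obtain $v$ by running the Case-$\mathsf{L}$ argument on $\bwd{T}$ and translating, whereas the paper first translates the sample to $v'$ and inducts in $T$.
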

\begin{proof}
    Let $D$ be an extension path satisfying (\textbf{iii}) in \Cref{def:ctx} w.r.t.\ $P$.
    We consider two cases.

    \textbf{Case 1}: $d_X = \mathsf{L}$.
    Then we have $S[x] - 1 = \SA[b_X + o] \in \SA[b_X, e_X]$ due to (\textbf{iii}(a)) in \Cref{def:ctx}.
    Let $f(Y) = |\{j \in (X, Y] \mid d_j = \mathsf{L}\}|$ count the left-extensions strictly after the critical extension; since $d_X = \mathsf{L}$, we have $f(m) = i - 1$.
    Since each remaining extension does not shrink the $\SA$- and $\bwd{\SA}$-intervals, it follows by induction on $Y \in [X, m]$ that
    \begin{align}
    \begin{split}
        \SA^{-1}[S[x] - 1 - f(Y)] = \LF^{f(Y)}(b_X + o) \\
        = \LF^{f(Y)}(b_X) + o = b_Y + o \in [b_Y, e_Y],\label{eq:induction_l}
    \end{split}
    \end{align}
    where the base case $Y = X$ holds by (\textbf{iii}(a)) in \Cref{def:ctx}, as $f(X) = 0$ and $\SA^{-1}[S[x] - 1] = b_X + o$.
    Setting $Y = m$ in \Cref{eq:induction_l} then yields $\SA^{-1}[S[x] - i] = b + o \in [b, e]$.
    Hence, we can set $v = S[x] - i$ and $p = b + o$ in $\Oh(1)$ time and are done.

    \textbf{Case 2}: $d_X = \mathsf{R}$.
    Then we have $\bwd{S}[x] - 1 \in \bwd{\SA}[\bwd{b_X}, \bwd{e_X}]$ due to (\textbf{iii}(b)) in \Cref{def:ctx}.
    Let $g(Y) = |\{j \in [X, Y] \mid d_j = \mathsf{R}\}|$ count the right-extensions among the first $Y$ extensions; since $d_X = \mathsf{R}$, we have $g(m) = i$.
    Then, for $Y \in [X, m]$, it follows by induction on $Y$ that
    \begin{align}
    \begin{split}
        \SA^{-1}[v' - (Y - g(Y))] = \\
        \LF^{Y - g(Y)}(p') \in [b_Y, e_Y],\label{eq:induction_r}
    \end{split}
    \end{align}
    where $v' = n - \bwd{S}[x] + 1$ and $p' = \SA^{-1}[v']$.
    Setting $Y = m$ yields $\SA^{-1}[v' - (m - i)] = \LF^{m - i}(p') \in [b, e]$.
    Hence we can set $v = v' - (m - i) = n - \bwd{S}[x] + 1 - (m - i)$ and set $p = \LF^{m - i}(p')$.

    It remains to show how we can compute $v$ and $p$ in this case.
    Although obtaining $v$ is easy and takes $\Oh(1)$ time, computing $p$ is more difficult:
    First, we have to find $p'$, which takes $\Oh(1)$ time if we augment our index with the array $\bwd{S}_\pos[1..2\bwd{r'}]$ containing the positions in $\SA$ of the $\bwd{\SA}$-samples, i.e.,\ $\bwd{S}_\pos[i] = \SA^{-1}[n - \bwd{S}[i] + 1]$, and set $p' = \bwd{S}_\pos[x]$.
    Now, it remains to compute $p = \LF^{m - i}(p')$ from $p'$.
    To this end, we augment our index with the array $\bwd{S}_{\p\idx}[1..2\bwd{r'}]$, where $\bwd{S}_{\p\idx}[i] = \pred(\MLF_\p, \bwd{S}_\pos[i])$.
    Then, we can initialize $(p, q) = (p', \bwd{S}_{\p\idx}[x])$ and $m - i$ times set $(p, q) = \MLF.\move(p, q)$ to get $p$.
    This takes $\Oh(m)$ time.
\end{proof}

\subsection{All-$k$-Character Extensions}\label{sct:all_k_extensions}
We now describe how to perform left-extension with Move-rb: we first give an output-sensitive algorithm for all-$k$-character extensions, and then show how to compute a single-character extension with an augmented wavelet tree.

\begin{algorithm}[!h]
\caption{$\mathsf{Left\text{-}Extension}(\langle m, d, \sa, \bwd{\sa}, \chi \rangle)$}\label{alg:prepend_all}
\begin{algorithmic}[1]
\LIF{$d = \mathsf{R}$}{$[\mu, \nu] \mkern-1mu\gets\mkern-1mu [\MLF_\p.\pred(b), \MLF_\p.\pred(e)]$}
\STATE $((c_1,B_1,E_1), \ldots, (c_k,B_k,E_k)) \gets L'.\mathsf{first\text{-}last}(\mu,\nu)$;
\STATE $\beta \gets \bwd{b}$;
\FOR{$j \gets 1$ \textbf{to} $k$}
    \STATE $\overline{b} \gets \max(b, \MLF_\p[B_j])$;
    \STATE $\overline{e} \gets \min(e, \MLF_\p[E_j + 1] - 1)$;
    \STATE $(b', \mu') \gets \MLF.\move(\overline{b}, B_j)$;
    \STATE $(e', \nu') \gets \MLF.\move(\overline{e}, E_j)$;
    \STATE $[\bwd{b'}, \bwd{e'}] \gets [\beta, \beta + e' - b']$;
    \STATE $\beta \gets \bwd{e'} + 1$;
    \LIF{$B_j \neq \mu$}{$\chi' \gets \langle \mathsf{L}, 2 B_j - 1, 1, 0 \rangle$}
    \LELSEIF{$E_j \neq \nu$}{$\chi' \gets \langle \mathsf{L}, 2 E_j, 1, e' - b' \rangle$}
    \ELSEIFONLY{$e' - b' < e - b$}
        \STATE $\chi' \gets \langle \mathsf{L}, 2 B_j, 1, \MLF_\p[B_j + 1] - 1 - \overline{b} \rangle$;
    \ENDELSEIFONLY
    \LELSE{$\chi' \gets \langle t, x, (t = \mathsf{L} \ ? \ i + 1 : i), o \rangle$}
    \STATE $\mathbf{report}(c_j, \langle m + 1, \mathsf{L}, \sa', \bwd{\sa'}, \chi' \rangle)$;
\ENDFOR
\end{algorithmic}
\end{algorithm}

\begin{lemma}\label{lem:first-last-occurrences}
    We can augment $L'$ with $\Oh(r \log n)$ bits, s.t.\ for any interval $[\mu, \nu]$ in $L'$, we can report $((c_1, B_1, E_1), \ldots, (c_k, B_k, E_k))$ in $\Oh(k)$ time, where $c_1 < \cdots < c_k$ are the distinct characters in $L'[\mu, \nu]$, and $B_i = \suc_{c_i}(L', \mu)$ and $E_i = \pred_{c_i}(L', \nu)$.
\end{lemma}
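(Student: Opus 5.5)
We reduce this to colored range listing (Muthukrishnan's document-listing technique) over $L'$, run twice: once to find the leftmost occurrences and once to find the rightmost occurrences. Define $\mathsf{prv}[1..r']$ by $\mathsf{prv}[j] = \max\{j' < j \mid L'[j'] = L'[j]\}$, or $0$ if no such $j'$ exists. Symmetrically, define $\mathsf{nxt}[j] = \min\{j' > j \mid L'[j'] = L'[j]\}$, or $r'+1$ if none exists. Store both arrays explicitly, together with a range-minimum structure on $\mathsf{prv}$ and a range-maximum structure on $\mathsf{nxt}$. Each RMQ structure takes $\Oh(r')$ bits (Fischer--Heun) and the arrays take $\Oh(r \log r) \subseteq \Oh(r \log n)$ bits, so the space bound holds with $r' = \Oh(r)$.

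For the leftmost occurrences we use the observation that $j \in [\mu,\nu]$ is the first occurrence of $L'[j]$ in $[\mu,\nu]$ iff $\mathsf{prv}[j] < \mu$. Recursive listing finds all such $j$. Query $j = \mathrm{RMQ}_{\mathsf{prv}}(\mu,\nu)$; if $\mathsf{prv}[j] \geq \mu$, stop. Otherwise report $j$ and recurse on $[\mu, j-1]$ and $[j+1, \nu]$. Every call either reports a new position or terminates immediately, and the reported positions are exactly $B_1,\dots,B_k$ (one per distinct character). So this costs $\Oh(k)$ time. The rightmost occurrences $E_i$ are listed the same way, with the condition $\mathsf{nxt}[j] > \nu$ and range-maximum queries.

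The remaining step is to output the triples sorted by character and to pair each $B_i$ with its $E_i$ in $\Oh(k)$ total time; I expect this to be the main obstacle. Pairing by character alone is simple. We keep a global array $A[1..\sigma]$, which needs $\Oh(\sigma\log n)$ bits and so fits in the budget since $\sigma \le r$. Write $B$ into $A[L'[B]]$, then read $A[L'[E]]$ for each $E$, and finally reset the touched cells. Sorting the $k$ characters in $\Oh(k)$ time is the real difficulty, because a generic sort would cost $\Oh(k\log k)$.

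My first attempt at sorting is to use a wavelet-tree-like recursion over the alphabet. Replace the plain RMQ listing by a structure that descends alphabet halves: at each node of a wavelet tree over $L'$, store the RMQ/listing structure on the node's subsequence. A descent into a node happens only if that node's alphabet range contains a character occurring in $[\mu,\nu]$, which yields the characters in sorted order. However, this costs $\Oh(k\log\sigma)$ time and $\Oh(r\log n\log\sigma)$ bits, which violates the claimed bound.

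The alternative I would pursue instead relies on how the output is consumed. Algorithm~\ref{alg:prepend_all} needs increasing $c_j$ only so that it can compute $\bwd{b'}$ by prefix sums of $e'-b'$. Hence it suffices to produce the set of $k$ characters and then sort them in $\Oh(k)$ time. To do so, we mark the present characters in a bit vector over $\Sigma$ of $\sigma \le r$ bits and scan it word-parallel. This is constant time per reported character only when the characters are dense; in the general case we can use a fusion-node/radix approach on $\Oh(\log\sigma)$-bit keys. So the justification I would aim for is a radix sort of $k$ keys from universe $\sigma$ in $\Oh(k)$ time, e.g.\ when $k \geq \sigma^{\epsilon}$, and fall back to word-level packed sorting otherwise. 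This sorting step is where I expect the argument to be delicate.
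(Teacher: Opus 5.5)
Your first two steps match the paper's proof: Muthukrishnan-style listing of the positions with $\mathsf{prv}[j] < \mu$ and $\mathsf{nxt}[j] > \nu$ via range-minimum/maximum recursion, and merging the two lists by character through a $\sigma$-size array (the paper's $V$), which fits the budget since $\sigma \leq r$. The gap is the ordering step, as you suspected, and your suggestions do not close it.

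Each sorting method you mention is linear only in a restricted regime:
\begin{itemize}
\item radix sort with base $k$ costs $\Oh(k\lceil \log\sigma/\log k\rceil)$, so it is $\Oh(k)$ only for $k \geq \sigma^{\epsilon}$;
\item fusion-based sorting costs $\Oh(k\log k/\log\omega)$, so it is $\Oh(k)$ only for $k \leq \omega^{\Oh(1)}$;
\item packed sorting needs keys of roughly at most $\omega/(\log k\log\log k)$ bits;
\item scanning a bit vector over $\Sigma$ costs $\Theta(\sigma/\omega)$ regardless of $k$.
\end{itemize}
In the remaining regime (e.g.\ $\omega = \Theta(\log n)$, $\sigma = n^{\Theta(1)}$, $k = 2^{\Theta(\sqrt{\log n})}$), no deterministic $\Oh(k)$-time integer sort is known. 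So a worst-case $\Oh(k)$ bound cannot be obtained along these lines.

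Appealing to how Algorithm~\ref{alg:prepend_all} consumes the output does not help either. The lemma explicitly demands $c_1 < \cdots < c_k$. Moreover, the invariant $\beta = \bwd{b'}$ assigns the subintervals of $[\bwd{b},\bwd{e}]$ in alphabet order, so you are back to the same sorting problem. You were right to reject the wavelet-tree variant for its $\Oh(k\log\sigma)$ time.

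The missing idea is to avoid sorting altogether and use a structure that reports the distinct characters of a range already in order. This is the range top-$K$ color reporting structure of Karpinski and Nekrich \cite[Theorem~2]{top_k_colors}. Give position $i$ of $L'$ the color $L'[i]$ and give color $c$ the priority $\sigma - c$. A query on $[\mu,\nu]$ with $K = k$ (which you know after the first-occurrence listing) then reports the $k$ distinct characters of $L'[\mu,\nu]$ in decreasing priority, i.e.\ in increasing lexicographic order. It does so in $\Oh(k)$ time, and the structure takes only $\Oh(r'\log\sigma) = \Oh(r\log\sigma)$ bits. For each reported $c_i$, emit $(c_i, B_i, E_i)$ with $(B_i, E_i)$ read from your array; this completes the proof within the stated space and time.
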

\begin{proof}
    We store the prev-occ array $A^-[i] = \pred_{L'[i]}(L',i - 1)$ and the next-occ array $A^+[i] = \suc_{L'[i]}(L',i + 1)$.
    A position $p \in [\mu, \nu]$ is the first occurrence of $L'[p]$ in $L'[\mu, \nu]$ iff $A^-[p] < \mu$, and $q \in [\mu, \nu]$ is its last occurrence iff $A^+[q] > \nu$.
    We therefore build the range-minimum-query based range-color reporting data structure of \cite[Section~3, Step~3]{document_listing} over $A^-$, which reports all positions $p \in [\mu, \nu]$ with $A^-[p] < \mu$ in $\Oh(1)$ time each, and a range-maximum-query based one over $A^+$, which analogously reports all positions $q \in [\mu, \nu]$ with $A^+[q] > \nu$.
    The two queries thus yield the pairs $(L'[p], p)$ and $(L'[q], q)$ of the first, resp.\ last, occurrence of every distinct character of $L'[\mu, \nu]$ in $\Oh(k)$ time.

    To combine the two reported sets by their characters, a hash table would suffice, but would yield only expected $\Oh(k)$ query time.
    We therefore augment our index with an array $V[1..\sigma]$ and set $V[L'[p]] \gets (p, \bot)$ for every pair $(L'[p], p)$ reported by the first range-color query, and then $V[L'[q]] \gets (p, q)$ for every pair $(L'[q], q)$ reported by the second one, where $(p, \bot) = V[L'[q]]$.

    It remains to report the triples in alphabet order.
    We augment $L'$ with the range top-$k$ color data structure of \cite[Theorem~2]{top_k_colors}, assigning position $i$ the color $L'[i]$ and color $c$ the priority $\sigma-c$.
    Given a range and a number $k$, it reports the $k$ colors of highest priority in that range in decreasing priority and $\Oh(k)$ time.
    So we issue a range top-$k$ color query on $[\mu, \nu]$, which reports the characters in lexicographic order.
    For every character $c_i$, we output $(c_i, B_i, E_i)$, where $(B_i, E_i) = V[c_i]$.

    The two range-color reporting data structures take $\Oh(r' \log n) = \Oh(r \log n)$ bits, and the range top-$k$ color data structure takes $\Oh(r' \log \sigma) = \Oh(r \log \sigma)$ bits.
    $V$ takes $\Oh(\sigma \log r') = \Oh(r \log r')$ bits.
\end{proof}

Two parallel queries would overwrite each other's entries of $V$, so every thread needs its own copy ($\Oh(\sigma \log r')$ bits per thread); where this is undesirable, a query-local hash table replaces $V$ and needs $\Oh(k)$ expected time and optimal $\Oh(k)$ words per query.

\begin{figure*}
    \centering
    \includegraphics[width=0.93\linewidth]{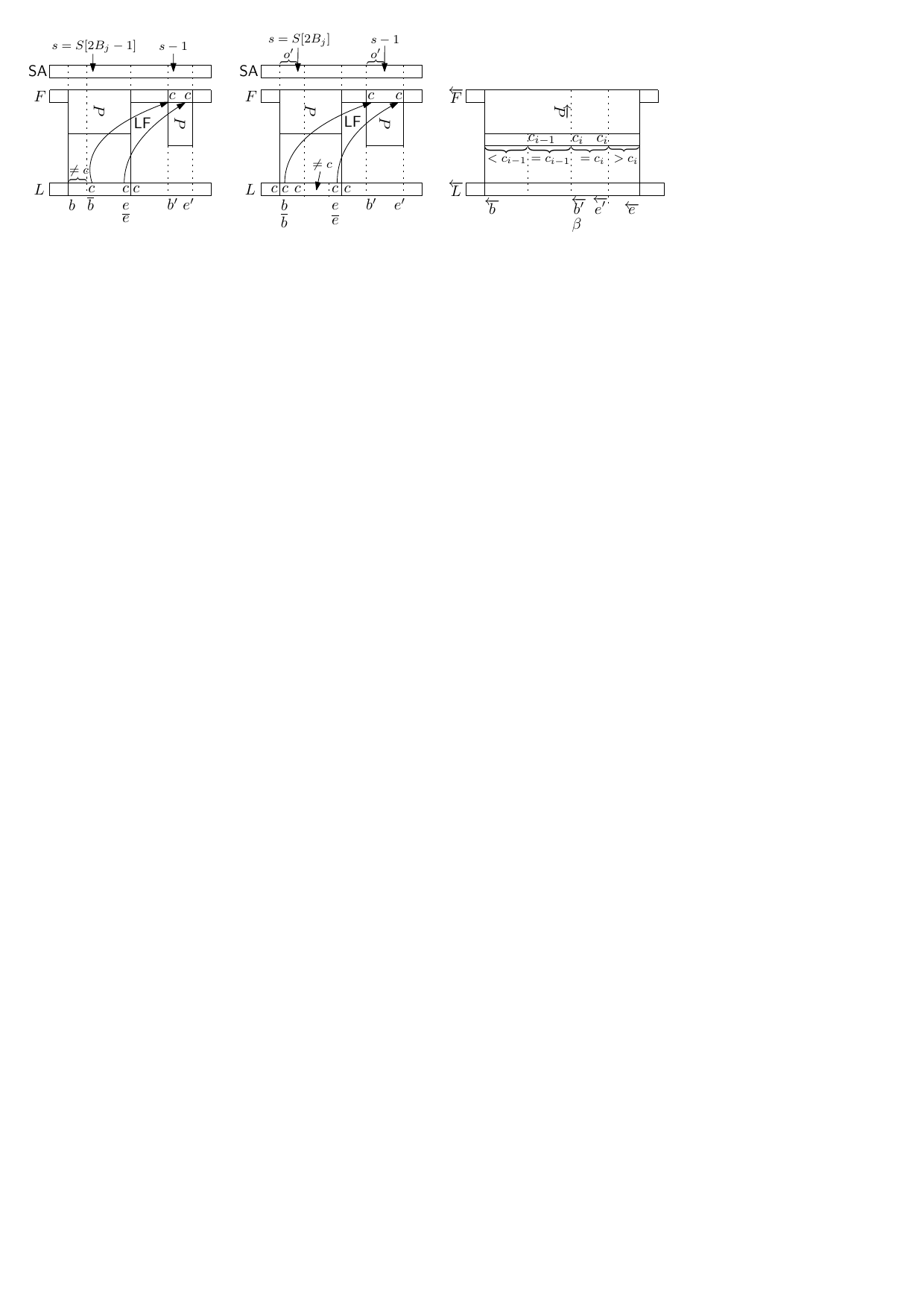}
    \caption{Case 1 (left), Case 3 (middle), and the invariant $\beta = \bwd{b'}$ (right) in the proof of \Cref{thm:left-extension}.}
    \label{fig:move-rb-extension}
\end{figure*}

\begin{theorem}\label{thm:left-extension}
    Let $P$ be a pattern of length $m$ and let $C$ be a valid context for $P$ as defined in \Cref{def:ctx}.
    Let $\bwd{\mathcal{C}}$ (resp.\ $\fwd{\mathcal{C}}$) be the set containing one valid context for every $c \in \Sigma$ such that $cP$ (resp.\ $Pc$) occurs in $T$, and let $\bwd{k} = |\bwd{\mathcal{C}}|$ and $\fwd{k} = |\fwd{\mathcal{C}}|$.
    Then there is a data structure of size $\Oh((r + \bwd{r}) \log n)$ bits that computes $\bwd{\mathcal{C}}$ and $\fwd{\mathcal{C}}$ from $C$.
    If $d = \mathsf{L}$, the running times are $\Oh(\bwd{k})$ and $\Oh(\fwd{k} + \log\log_\omega (n / \bwd{r}))$.
    If $d = \mathsf{R}$, the running times are $\Oh(\bwd{k} + \log\log_\omega (n / r))$ and $\Oh(\fwd{k})$.
\end{theorem}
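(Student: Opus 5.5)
By symmetry (right-extension is left-extension on $\bwd{T}$ with the roles of $\MLF$ and $\MLFbwd$ exchanged), it suffices to prove that \Cref{alg:prepend_all} computes $\bwd{\mathcal{C}}$ correctly, using the data structures of \Cref{lem:first-last-occurrences}, $\MLF$ and $S$. It runs in $\Oh(\bwd{k})$ time if $d = \mathsf{L}$ and in $\Oh(\bwd{k} + \log\log_\omega(n/r))$ time if $d = \mathsf{R}$. The extra term comes only from line~1, which performs two predecessor queries on $\MLF_\p$, stored as a sparse bit vector over $n$ positions with $r' = \Oh(r)$ ones. All other data structures ($L'$ with the structures of \Cref{lem:first-last-occurrences}, $\MLF$, $S$, plus the mirrored ones and the arrays $\bwd{S}_\pos$, $\bwd{S}_{\p\idx}$ from \Cref{thm:initial_sa_value}) take $\Oh((r + \bwd{r})\log n)$ bits. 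After line~1, \Cref{lem:first-last-occurrences} yields the $k$ triples in $\Oh(k)$ time, and each loop iteration costs $\Oh(1)$ (two move queries, constant arithmetic). Here $k = \bwd{k}$, because the distinct characters of $L'[\mu,\nu]$ are exactly those of $L[b,e]$.

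For correctness I would verify conditions (i)--(iii) of \Cref{def:ctx} for each reported context.

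\textbf{Condition (i), $\SA$-interval.} The occurrences of $c_j$ in $L[b,e]$ lie in sub-runs $B_j$ through $E_j$. The first one is $\overline{b}$ and the last one is $\overline{e}$. Since $\LF$ is monotone on occurrences of the same character, $[\LF(\overline{b}), \LF(\overline{e})]$ is the $\SA$-interval of $c_jP$.

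\textbf{Condition (i), $\bwd{\SA}$-interval.} The move queries also return $\mu',\nu'$, which gives (ii) with $d = \mathsf{L}$. For the reverse interval I would prove by induction on $j$ the invariant $\beta = \bwd{b} + |\{p \in [b,e] \mid L[p] < c_j\}|$ at the start of iteration $j$. This holds because the characters are processed in increasing order and each iteration adds $e'-b'+1$, the number of $c_j$'s in $L[b,e]$. Combined with the observation from the br-index section, this gives $[\bwd{b'},\bwd{e'}]$.

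\textbf{Condition (iii), the main obstacle.} I split into cases matching the branches of the algorithm.
\begin{itemize}
\item If $B_j \neq \mu$, then sub-run $B_j$ starts inside $(b,e]$, so $\overline{b} = \MLF_\p[B_j]$. Hence $\SA[\overline{b}] = S[2B_j-1]$ and $S[2B_j-1]-1 = \SA[b']$, giving offset $0$.
\item If $E_j \neq \nu$, then sub-run $E_j$ ends inside $[b,e)$, so $\overline{e} = \MLF_\p[E_j+1]-1$ and $S[2E_j]-1 = \SA[e']$, giving offset $e'-b'$.
\item Otherwise $B_j = \mu$ and $E_j = \nu$. If the interval shrinks, some character other than $c_j$ occurs in $L[b,e]$, which forces $B_j < E_j$. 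Then $\MLF_\p[B_j+1]-1 \in [\overline{b}, \overline{e}]$, and the sample $S[2B_j]$ lies in the interval at offset $\MLF_\p[B_j+1]-1-\overline{b}$, because $\LF$ preserves distances within a sub-run and across consecutive sub-runs of the same character among those in $[\overline{b},\overline{e}]$.
\end{itemize}
In each of these three cases, the new extension is the critical one, so $X = m+1$, $t = \mathsf{L}$, and $i = 1$.

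If the interval does not shrink, the critical extension of the old path remains critical for the extended path $D' = D\cdot\mathsf{L}$. In that case $i$ increases by one exactly when $t = \mathsf{L}$, as the counting definition requires. Finally, the case $d = \mathsf{R}$ differs only in line~1, which restores (ii) at the stated cost. The running times for right-extension follow by the mirrored argument.
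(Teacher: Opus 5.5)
Your proposal is correct and follows the paper's proof essentially step for step. It restores $[\mu,\nu]$ with two predecessor queries when $d=\mathsf{R}$, obtains the triples in $\Oh(k)$ time via \Cref{lem:first-last-occurrences}, and maintains the invariant $\beta=\bwd{b'}$ for the $\bwd{\SA}$-intervals. It then handles $\chi'$ by the same four cases: the sample at the start of sub-run $B_j$, the sample at the end of sub-run $E_j$, the sample at the end of sub-run $B_j=\mu$ when the interval shrinks, or $\chi$ unchanged with $i$ incremented iff $t=\mathsf{L}$.

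Your version is more explicit than the paper's in two places: it checks that $\overline{b},\overline{e}$ are the first and last occurrences of $c_j$ in $L[b,e]$, and it checks that $\mu<\nu$ in the third case. The remark about $\LF$ preserving distances ``across consecutive sub-runs'' is unnecessary there, since $b$ and $\MLF_\p[B_j+1]-1$ lie in the same sub-run.
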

\begin{proof}
    We will only show how to obtain $\bwd{\mathcal{C}}$, because computing $\fwd{\mathcal{C}}$ is symmetric.

    First, we ensure that $\mu$ and $\nu$ are the input interval indices of $b$ and $e$ in $\MLF$, which is only necessary if $d = \mathsf{R}$, because else, (\textbf{ii}) in \Cref{def:ctx} already ensures this.
    This can be done in $\Oh(\log\log_\omega (n / r))$ time if we additionally store $\MLF_\p$ and $\MLFbwd_\p$ Elias--Fano (EF) encoded~\cite{succinct_indexable_dictionary}, augmented for predecessor queries.

    Recall that computing the $\SA$-intervals of each context in $\bwd{\mathcal{C}}$ requires the first and last occurrences of each $c \in \Sigma$ in $L'[\mu, \nu]$, which the $\mathsf{first\text{-}last}$ query in \Cref{alg:prepend_all} reports in $\Oh(k)$ time by \Cref{lem:first-last-occurrences}.
    These are exactly the characters $c_j$ for which $c_jP$ occurs in $T$, and their contexts are $C_1, \ldots, C_k$.
    In the $j$-th iteration of the for-loop in lines 4--17, we compute $C_j = \langle m', d', \sa', \bwd{\sa'}, \chi' \rangle$, defined as in \Cref{def:ctx} with primed variables.

    The main insight of our algorithm is that the $\bwd{\SA}$-intervals of the contexts in $\bwd{\mathcal{C}}$ exactly cover $[\bwd{b}, \bwd{e}]$ and appear sorted in $[\bwd{b}, \bwd{e}]$ (see right in \Cref{fig:move-rb-extension}).
    Thus, if we maintain the invariant $\beta = \bwd{b'}$, then we can compute each $C_j$ in $\Oh(1)$ time:
    First, we obtain $b', e', \mu'$ and $\nu'$ as with Move-r.
    Then, we can set $[\bwd{b'}, \bwd{e'}] = [\beta, \beta + e' - b']$ due to the invariant, and set $\beta = \bwd{e'} + 1$ to maintain it.

    Finally, it remains to compute $\chi'$.
    Here, we consider four cases:

    \textbf{Case 1}: $B_j \neq \mu$ (see left in \Cref{fig:move-rb-extension}).
    In this case, the performed extension is the new critical extension (because $e' - b' < e - b$).
    Hence, we set $t' = \mathsf{L}$ and $i' = 1$.
    Since we can use the $\SA$-sample $S[2 B_j - 1]$ at the start of the $B_j$-th sub-run in $L$, we set $x' = 2 B_j - 1$ and $o' = 0$.

    \textbf{Case 2}: $B_j = \mu \land E_j \neq \nu$.
    This is symmetric to Case 1, but we use the $\SA$-sample $S[2 E_j]$ at the end of the $E_j$-th sub-run.

    \textbf{Case 3}: $B_j = \mu \land E_j = \nu \land e' - b' < e - b$ (see middle in \Cref{fig:move-rb-extension}).
    Similarly, the performed extension is the new critical extension.
    In this case, we can use the $\SA$-sample $S[2 B_j]$ at the end of the $B_j$-th sub-run, i.e.,\ set $x' = 2 B_j$ and $o' = \MLF_\p[B_j + 1] - 1 - \overline{b}$.

    \textbf{Case 4}: $e' - b' = e - b$.
    This case is simple, as the performed extension is not the new critical extension, hence $t' = t$, $x' = x$ and $o' = o$.
    We only have to increment $i$ if $t = \mathsf{L}$, i.e.,\ set $i' = i + 1$ if $t = \mathsf{L}$ and $i' = i$, else.
    
    Finally, we report the tuple $(c_j, C_j)$ to indicate that $c_jP$ with context $C_j$ occurs in $T$.
\end{proof}

\subsection{Single-Character Extensions}\label{sct:single_extensions}
To answer a single-character extension query in $\Oh(k)$ time without a direction switch, we can use the same algorithm as in \Cref{thm:left-extension}, but only report the context for the character $c$; if we increase space slightly, an augmented wavelet tree reduces this to $\Oh(\log \sigma)$ time.

\begin{theorem}\label{thm:single-character-extension}
    Let $P$ be a pattern of length $m$, let $C$ be a valid context for $P$ as defined in \Cref{def:ctx}, and let $c \in \Sigma$.
    Move-rb can be augmented to take $\Oh((r + \bwd{r})(\log n + \log(n / r_{\min}) \log \sigma))$ bits, s.t.\ valid contexts of $cP$ and $Pc$ can then be computed from $C$.
    If $d = \mathsf{L}$, the running times are $\Oh(\log \sigma)$ and $\Oh(\log \sigma + \log\log_\omega (n / \bwd{r}))$.
    If $d = \mathsf{R}$, the running times are $\Oh(\log \sigma + \log\log_\omega (n / r))$ and $\Oh(\log \sigma)$.
\end{theorem}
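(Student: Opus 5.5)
We plan to reuse the proof of \Cref{thm:left-extension} almost verbatim, replacing only the two steps that cost $\Oh(k)$: the $\mathsf{first\text{-}last}$ query of \Cref{lem:first-last-occurrences} and the accumulation of $\beta$ up to character $c$. As before, we only treat $cP$; $Pc$ is symmetric with $\bwd{T}$. If $d = \mathsf{R}$, we first recompute $[\mu, \nu]$ by two $\pred$-queries on the Elias--Fano encoded $\MLF_\p$, costing $\Oh(\log\log_\omega(n/r))$. This is the only source of the additive term.

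\textbf{What a single extension must compute.} Given $[\mu,\nu]$, we need three quantities.
\begin{itemize}
\item[(a)] $B = \suc_c(L', \mu)$ and $E = \pred_c(L', \nu)$, and whether $B \le E$.
\item[(b)] $\bwd{b'} = \bwd{b} + |\{ p \in [b,e] \mid L[p] < c\}|$.
\item[(c)] The four-case computation of $\chi'$ from the proof of \Cref{thm:left-extension}, which uses only $B$, $E$, $\mu$, $\nu$ and $\MLF_\p$.
\end{itemize}
Once (a) and (b) are known, lines 5--16 of \Cref{alg:prepend_all} for the single index $j$ with $c_j = c$ take $\Oh(1)$ time. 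This gives $b', e', \mu', \nu'$, then $\bwd{e'} = \bwd{b'} + e' - b'$, then $\chi'$. Correctness follows exactly as in \Cref{thm:left-extension}, since that argument never used the other characters except through $\beta = \bwd{b'}$.

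\textbf{The wavelet tree.} We build a wavelet tree over $L'$ whose nodes additionally store weights. Each sub-run $i$ gets weight $w_i = \MLF_\p[i+1] - \MLF_\p[i]$, its length. At every node, we store prefix sums of the weights of the sub-runs routed to that node, interleaved with the usual rank bit vectors. This answers $\rank_c(L', \cdot)$ and $\select_c(L', \cdot)$ in $\Oh(\log \sigma)$ time, which gives (a) via $B = \select_c(L', \rank_c(L', \mu-1)+1)$ and $E = \select_c(L', \rank_c(L', \nu))$. It also answers weighted $\rank_{<c}$ over a sub-run range in one root-to-leaf descent: at each level where $c$ branches right, we add the total weight of the left child's elements in the current range, read off the stored prefix sums.

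To get (b), we split $[b,e]$ into three pieces: the interior sub-runs $\mu+1, \ldots, \nu-1$, and the partial sub-runs at $\mu$ and $\nu$. The partial pieces are handled in $\Oh(1)$ time by comparing $L'[\mu]$ and $L'[\nu]$ with $c$, with care for the case $\mu = \nu$.

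\textbf{Space accounting.} This is the step I expect to be the main obstacle. Stored naively, each prefix sum takes $\log n$ bits, which would give $\Oh(r \log n \log \sigma)$ bits rather than the claimed bound. To reach $\Oh(r \log(n/r) \log\sigma)$, we store the prefix sums of each level in Elias--Fano encoding. A level holds $r'$ increasing values summing to at most $n$, so it needs $\Oh(r'\log(n/r'))$ bits, and EF supports $\Oh(1)$-time access when augmented with select. Over $\log \sigma$ levels this totals $\Oh(r\log(n/r)\log\sigma)$ bits, and the mirrored structure on $\bwd{L'}$ gives $\Oh(\bwd{r}\log(n/\bwd{r})\log\sigma)$ bits.

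Both terms are bounded by $(r+\bwd{r})\log(n/r_{\min})\log\sigma$, which gives the claimed total together with the $\Oh((r+\bwd{r})\log n)$ bits of \Cref{thm:left-extension}. For the Elias--Fano argument we need each level's prefix sums to form a single monotone sequence. We get this by concatenating the nodes of a level in order, which keeps the sequence monotone because the node ranges partition $[1,n]$. We then locate node offsets with the ordinary wavelet-tree (wavelet-matrix-style) arithmetic.
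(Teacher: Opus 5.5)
Your proposal is correct and follows essentially the same route as the paper. Both use a balanced wavelet tree over $L'$ whose levels store Elias--Fano encoded monotone prefix sums of the sub-run lengths $\MLF_\p[i+1]-\MLF_\p[i]$. Both accumulate the weighted $\rank_{<c}$ in one descent, add an $\Oh(1)$ correction for the partial boundary sub-runs at $\mu$ and $\nu$, recover $B_c$ and $E_c$ via $\select$, and reuse the four cases for $\chi'$, with the $\log\log_\omega$ term arising only from the $\pred$-queries on a direction switch.

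The differences are cosmetic. The paper stores only the prefix sums of left-routed weights ($Z_\ell$) and gets $B_c, E_c$ from one combined descent/ascent on $[\mu,\nu]$ instead of composing $\rank_c$ and $\select_c$. It also explicitly charges the $\Oh((r+\bwd{r})\log\sigma)$ bits of the routing bit vectors to the $\log n$ term.
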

\begin{proof}
    We only consider left-extension, because right-extension is symmetric.
    If $d = \mathsf{R}$, we first obtain $[\mu, \nu]$ as in \Cref{alg:prepend_all}, which takes $\Oh(\log\log_\omega (n / r))$ time.
    Let $\lambda[i]=\MLF_\p[i + 1]-\MLF_\p[i]$ be the length of the $i$-th sub-run, and let $W(i,j,c) = \sum_{x=i}^{j} \lambda[x]\mathbf{1}_{L'[x] < c}$ be the total length of the sub-runs in $L'[i,j]$ whose character is smaller than $c$.
    Assume for now that we know $W(\mu,\nu,c)$ and the first and last occurrences $B_c$ and $E_c$ of $c$ in $L'[\mu, \nu]$.
    The number of characters smaller than $c$ in $L[b, e]$ is then
    \begin{equation}
        \begin{aligned}
            \delta_c ={}& W(\mu,\nu,c)
            - (b - \MLF_\p[\mu])\mathbf{1}_{L'[\mu] < c} \\
            &- (\MLF_\p[\nu + 1] - 1 - e)\mathbf{1}_{L'[\nu] < c},
        \end{aligned}
        \label{eq:weighted_wavelet_shift}
    \end{equation}
    where the last two terms remove the parts of the boundary sub-runs outside $[b, e]$.
    As in \Cref{alg:prepend_all}, $B_c$ and $E_c$ give $b',e',\mu'$ and $\nu'$ with two move queries in $\Oh(1)$ time, and we set $\bwd{b'} = \bwd{b} + \delta_c$ and $\bwd{e'} = \bwd{b'} + e' - b'$.
    The remaining context variables follow in $\Oh(1)$ time by the same cases as in the main algorithm, so it remains to compute $W(\mu,\nu,c)$, $B_c$ and $E_c$.

    To this end, we augment $L'$ with a balanced wavelet tree \cite{wavelet_tree}.
    For a node $v$, let $L'_v$ and $D_v$ be its standard sequence and routing bit vector, resp.
    If $L'_v[j]$ originates from $L'[i]$, we set $\lambda_v[j]=\lambda[i]$.
    For every level $\ell$, we concatenate the sequences $D_v$ and $\lambda_v$ of its nodes from left to right into $D_\ell$ and $\lambda_\ell$, resp.
    We store $D_\ell$, but instead of $\lambda_\ell$ we store the monotone sequence $Z_\ell[j]=\sum_{i=1}^j\lambda_\ell[i]\mathbf{1}_{D_\ell[i]=0}$ using EF encoding \cite{succinct_indexable_dictionary}.
    If $D_v$ starts at position $a_v$ in $D_\ell$, then $Z_v[j]=Z_\ell[a_v+j-1]-Z_\ell[a_v-1]$, where $Z_\ell[0]=0$, is the total length among the first $j$ entries at $v$ passed to the left child.
    The bit vector $D_\ell$ supports rank and select in $\Oh(1)$ time, and the EF encoding provides access to every $Z_\ell[j]$ in $\Oh(1)$ time.
    Finally, let $h=\ceil{\log_2\sigma}$ be the height of the tree, let $c[\ell]$ be the routing bit of $c$ at level $\ell$, i.e.,\ the $\ell$-th bit of $c$ from the most significant one, and let $v_0$ and $v_1$ denote the left and right child of a node $v$.

\begin{algorithm}[!h]
\caption{$\mathsf{WT\text{-}Query}(\mu, \nu, c)$}\label{alg:wt_query}
\begin{algorithmic}[1]
\STATE $v \gets \mathrm{root}$; $[i, j] \gets [\mu, \nu]$; $\Lambda \gets 0$;
\FOR{$\ell \gets 1$ \textbf{to} $h$}
    \LIF{$c[\ell] = 1$}{$\Lambda \gets \Lambda + Z_v[j] - Z_v[i - 1]$}
    \STATE $[i, j] \gets [\rank_{c[\ell]}(D_v, i - 1) + 1, \rank_{c[\ell]}(D_v, j)]$;
    \STATE $v \gets v_{c[\ell]}$;
\ENDFOR
\LIF{$i > j$}{\textbf{report} $\bot$ \textbf{and stop}}
\FOR{$\ell \gets h$ \textbf{downto} $1$}
    \STATE $v \gets \mathrm{parent}(v)$;
    \STATE $[i, j] \gets [\select_{c[\ell]}(D_v, i), \select_{c[\ell]}(D_v, j)]$;
\ENDFOR
\STATE $\mathbf{report}(\Lambda, i, j)$;
\end{algorithmic}
\end{algorithm}

    We obtain $W(\mu,\nu,c)$, $B_c$ and $E_c$ with \Cref{alg:wt_query}.
    We descend from the root towards the leaf of $c$, mapping $[i,j]$ to the corresponding child interval with rank queries on the segment $D_v$ of $D_\ell$ (line 4).
    Whenever we continue to the right child, all characters passed to the left child are smaller than $c$, so we add $Z_v[j]-Z_v[i-1]$ to the accumulator $\Lambda$ (line 3), which therefore holds $W(\mu,\nu,c)$ once we reach the leaf.
    The interval we reach at the leaf of $c$ is empty iff $c$ does not occur in $L'[\mu, \nu]$ (line 7).
    Otherwise, we map its endpoints back to the root with select queries (lines 8--11), where they are $B_c$ and $E_c$.
    For readability, \Cref{alg:wt_query} treats the nodes as explicit objects, although we do not store them: we represent a node $v$ at level $\ell$ by the boundaries of its segment in $D_\ell$, from which two rank queries on $D_\ell$ give those of both children of $v$ in $\Oh(1)$ time, and we remember the $h$ pairs of boundaries of the descent for the ascent.
    We spend $\Oh(1)$ time per level in both loops, so the whole extension takes $\Oh(h) = \Oh(\log \sigma)$ time.

    At every wavelet-tree level, $D_\ell$ and the EF encoding of $Z_\ell$ for $L'$ take $\Oh(r)$ bits plus $\Oh(r \log(n/r))$ bits, and the analogous data structures for $\bwd{L'}$ take $\Oh(\bwd{r})$ bits plus $\Oh(\bwd{r} \log(n/\bwd{r}))$ bits.
    Since $r_{\min} \leq r, \bwd{r}$, summing over the $\Oh(\log \sigma)$ levels yields $\Oh((r + \bwd{r}) \log(n / r_{\min}) \log \sigma)$ bits for the EF encodings, plus $\Oh((r + \bwd{r}) \log \sigma)$ bits for the bit vectors.
    The latter is $\Oh((r + \bwd{r}) \log n)$ bits because $\log \sigma \leq \log n$, which proves the bound.
    The augmentation therefore leaves the index size of \Cref{thm:left-extension} unchanged up to a constant factor whenever $\log(n / r_{\min}) \log \sigma = \Oh(\log n)$, i.e.,\ unless both the alphabet and the compression rate $n / r_{\min}$ are large.
\end{proof}

\subsection{SA-Interval Enumeration}\label{sct:sa_enumeration}
Now it remains to discuss our $\SA$-interval enumeration algorithm (see \Cref{alg:locate}).

\begin{theorem}\label{thm:locate}
    Let $P$ be a pattern of length $m$ and let $C$ be a valid context for $P$ as defined in \Cref{def:ctx}.
    There is a data structure of size $\Oh((r + \bwd{r}) \log n)$ bits s.t.\ we can compute all $occ$ occurrences of $P$ in $T$ from $C$ in $\Oh(m + occ)$ time.
\end{theorem}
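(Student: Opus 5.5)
We obtain a starting value $v \in \SA[b, e]$ together with its position $p = \SA^{-1}[v]$ from \Cref{thm:initial_sa_value} in $\Oh(m)$ time, using only $\Oh((r + \bwd{r}) \log n)$ bits. Because $p$ is known explicitly, the enumeration is option (\textbf{i}) of \Cref{sct:br_sa_enumeration}: no $\PLCP$ values are needed, and we can stop exactly after $e - p$ forward steps and $p - b$ backward steps. Concretely, we output $\Phi^{-j}(v) = \SA[p + j]$ for $j = 1, \ldots, e - p$ and $\Phi^{j}(v) = \SA[p - j]$ for $j = 1, \ldots, p - b$, plus $v$ itself. This gives exactly the $occ = e - b + 1$ occurrences.

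To make each step cost $\Oh(1)$, we use move data structures $\MPhi$ and $\MPhiinv$ of size $\Oh(r \log n)$ bits. Once we know $\pred(\MPhi_\p, v)$ and $\pred(\MPhiinv_\p, v)$, consecutive move queries take constant time each. The remaining obstacle is initializing these two interval indices without the $\Oh(\log\log_\omega(n/r))$ predecessor queries that br-index and b-move pay. My plan is to store them alongside every sample that can become $v$, and to keep them up to date through the $\Oh(m)$ preprocessing.

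\textbf{Case $t = \mathsf{L}$.} Here $v = S[x] - i$. For each sample $S[y]$ we additionally store the $\MPhi$- and $\MPhiinv$-interval indices of $S[y] - 1$; this is $\Oh(r \log r)$ extra bits. The value $v$ is obtained from $S[x] - 1$ by subtracting $i - 1 \le m$. To follow such a decrement, we walk backwards through the interval starting positions. One decrement of a text position changes the $\Phi$-interval index by a linear scan. I expect this to be $\Oh(1)$ amortized per decrement: the input intervals of $\MPhi$ partition $[1, n]$ in text order, so decrementing the position $\le m$ times moves left across at most $m$ boundaries, i.e.\ at most $\Oh(m)$ scan steps in total. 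Stepping left from the current index while $\MPhi_\p[q] > v$ stays within $\Oh(m)$ because each step crosses a distinct boundary passed by the decrements. The same argument applies to $\MPhiinv$.

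\textbf{Case $t = \mathsf{R}$.} Here $v = n - \bwd{S}[x] + 1 - (m - i)$. We store, analogously to $\bwd{S}_\pos$ and $\bwd{S}_{\p\idx}$, the $\MPhi$- and $\MPhiinv$-interval indices of $n - \bwd{S}[x] + 1$ for each $\bwd{\SA}$-sample, and then decrement by $m - i \le m$ with the same leftward scan. The total cost is therefore $\Oh(m)$ for initialization and $\Oh(occ)$ for enumeration, within $\Oh((r + \bwd{r}) \log n)$ bits.

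The step I expect to be the main obstacle is justifying the $\Oh(m)$ bound for the leftward scans. Intervals of $\MPhi$ can be short, so $m$ decrements could in principle cross more than $m$ boundaries if many intervals have length one. Each decrement crosses at most one boundary, though, since the intervals are contiguous and nonempty. Hence the number of boundaries crossed is at most $m$, which settles the bound.
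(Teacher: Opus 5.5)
Your proposal is correct and follows the paper's proof. Both obtain $v$ and $p$ via \Cref{thm:initial_sa_value}, store the $\MPhi$/$\MPhiinv$ interval indices with every $\SA$- and $\bwd{\SA}$-sample, walk back over the at most $m$ interval starting positions in $(v, v']$, and then enumerate with $p - b$ $\Phi$- and $e - p$ $\Phi^{-1}$-move queries in $\Oh(occ)$ time. The differences are cosmetic: the paper stores the indices of $S[i]$ rather than $S[i]-1$, and it finds the predecessor by exponential search in $\Oh(\log m)$ time instead of your linear scan, which does not matter because the $\Oh(m)$ term is already paid for computing $p$.
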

\begin{proof}
    We start by using \Cref{thm:initial_sa_value} to compute a value $v \in \SA[b, e]$ and its position $p = \SA^{-1}[v]$ in $\Oh(m)$ time, and then compute $\SA[b, p - 1]$ using $\Phi$-$\move$-queries and $\SA[p + 1, e]$ using $\Phi^{-1}$-$\move$-queries.
    To this end, we need $w = \pred(\MPhi_\p, v)$ and $w' = \pred(\MPhiinv_\p, v)$, for which we augment our index with four arrays:
    $S_\idx[1..2r']$, $S'_\idx[1..2r']$, $\bwd{S}_\idx[1..2\bwd{r'}]$ and $\bwd{S'}_\idx[1..2\bwd{r'}]$, where $S_\idx[i] = \pred(\MPhi_\p, S[i])$, $S'_\idx[i] = \pred(\MPhiinv_\p, S[i])$, $\bwd{S}_\idx[i] = \pred(\MPhi_\p, n - \bwd{S}[i] + 1)$ and $\bwd{S'}_\idx[i] = \pred(\MPhiinv_\p, n - \bwd{S}[i] + 1)$.

    Recall from the proof of \Cref{thm:initial_sa_value} that we obtain $v$ from a stored sample $v'$ by subtracting at most $m$ from $v'$, i.e.,\ $v' - v \leq m$.
    Thus, we can compute $w$ by initializing $w$ as in line 3 or 10 (depending on $t$), and then decrementing $w$ until $\MPhi_\p[w] \leq v$.
    This takes $\Oh(\log m)$ time with an exponential search.
    Obtaining $w'$ is analogous.
    Finally, we recover and report $\SA[b, e]$ in $\Oh(occ)$ time (lines 14--21).
\end{proof}

\begin{theorem}\label{thm:locate_plcp}
    Let $P$ be a pattern of length $m$ and let $C$ be a valid context for $P$ as defined in \Cref{def:ctx}.
    There is a data structure of size $\Oh((r + \bwd{r}) \log n)$ bits s.t.\ we can compute all $occ$ occurrences of $P$ in $T$ from $C$ in $\Oh(\log\log_\omega (n / r) + occ)$ time.
\end{theorem}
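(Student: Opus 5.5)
Rather than computing the position $p = \SA^{-1}[v]$ (which costs $\Oh(m)$ in Case 2 of \Cref{thm:initial_sa_value}), the plan is to avoid positions altogether and stop enumeration with the $\PLCP$ criterion of \Cref{sct:br_sa_enumeration}. The only quantity we need from the context is a text position $v \in \SA[b, e]$. By \Cref{thm:initial_sa_value}, $v$ is obtainable in $\Oh(1)$ time in both cases: $v = S[x] - i$ if $t = \mathsf{L}$, and $v = n - \bwd{S}[x] + 1 - (m - i)$ if $t = \mathsf{R}$. Only $p$ required the $m - i$ $\LF$-steps. So the first step is to compute $v$ in $\Oh(1)$ time and discard $p$.

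Second, we need the interval indices $w = \pred(\MPhi_\p, v)$ and $w' = \pred(\MPhiinv_\p, v)$ so that we can start $\move$-queries from $v$. Instead of the exponential search of \Cref{thm:locate}, whose cost depends on $m$, we store $\MPhi_\p$ and $\MPhiinv_\p$ Elias--Fano encoded with predecessor support. This takes $\Oh(r \log(n/r))$ bits, and one predecessor query takes $\Oh(\log\log_\omega (n / r))$ time. This is the only non-constant term and gives the additive $\log\log_\omega(n/r)$.

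Third, we enumerate in both directions from $v$. Following br-index, we store the irreducible $\PLCP$ values interleaved with the interval arrays of $\MPhi$ and $\MPhiinv$. Then, given $(j, y)$ with $y = \pred(\MPhi_\p, j)$, the value $\PLCP[j]$ is computable in $\Oh(1)$ time: it equals the stored $\PLCP[\MPhi_\p[y]] - (j - \MPhi_\p[y])$, using that $\PLCP[j+1] \geq \PLCP[j]-1$ with equality inside a $\Phi$-interval. The same holds for $\MPhiinv$, using the $\PLCP$ of $\Phi^{-1}(j)$, i.e.\ the array $\PLCP[\Phi^{-1}(\cdot)]$. We then run the stopping rule of \Cref{sct:br_sa_enumeration}:
\begin{itemize}
\item Report $v$.
\item Apply $\Phi$-$\move$ repeatedly, reporting $\Phi^k(v)$ while the previously reported value $u$ satisfies $\PLCP[u] \geq m$.
\item Symmetrically, apply $\Phi^{-1}$-$\move$, reporting a new value $u'$ only if $\PLCP[u'] \geq m$.
\end{itemize}
Each step takes $\Oh(1)$ time. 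Correctness follows because the occurrences are exactly the contiguous block $\SA[b, e]$ and $\PLCP$ detects its boundaries. Each direction performs at most one wasted step, so this phase costs $\Oh(occ)$.

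I expect the main obstacle to be the space accounting and the exact $\PLCP$ bookkeeping, in particular making sure the interleaved irreducible $\PLCP$ values really yield $\PLCP$ at arbitrary positions within a (balanced) $\Phi$/$\Phi^{-1}$ move interval. This requires that the interval starts of $\MPhi$ include all $\Phi$-run starts, where $\PLCP$ is irreducible, so that the additive formula holds inside each interval. Both move structures, the $\PLCP$ samples and the EF arrays take $\Oh(r \log n)$ bits, and together with the index of \Cref{thm:left-extension} this gives $\Oh((r + \bwd{r}) \log n)$ bits and total time $\Oh(\log\log_\omega (n / r) + occ)$.
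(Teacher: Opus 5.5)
Your proposal is correct and follows the paper's proof. You compute only $v$ in $\Oh(1)$ time as in \Cref{thm:initial_sa_value}, skipping the $m - i$ $\LF$-steps for $p$; you find the $\MPhi$/$\MPhiinv$ intervals containing $v$ with two predecessor queries in $\Oh(\log\log_\omega (n/r))$ time; and you enumerate with the br-index's interleaved irreducible $\PLCP$ values and stopping rule in $\Oh(occ)$ time, with your details on storing $\PLCP[\Phi^{-1}(\cdot)]$ at the $\MPhiinv$ interval starts (and on balanced intervals refining the irreducible positions) being exactly what the paper delegates to the br-index.
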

\begin{proof}
    We compute $v$ as in \Cref{thm:initial_sa_value} in $\Oh(1)$ time and proceed as the br-index with interleaved irreducible $\PLCP$ values (see \Cref{thm:br_locate} and \Cref{sct:br_sa_enumeration}).
\end{proof}

In practice, this is not faster, as the $\Oh(m)$ term in \Cref{thm:locate} is absorbed by the $m$ extensions performed before; we therefore use \Cref{alg:locate}.

\subsection{Practical Character Extensions}\label{sct:practical_extensions}
The data structures in \Cref{lem:first-last-occurrences,thm:single-character-extension} achieve strong worst-case bounds, but in practice, we use a simpler approach that is faster and more memory-efficient to obtain the first and last occurrences $B_c$ and $E_c$ of every character $c$ in $L'[\mu,\nu]$, which \Cref{alg:prepend_all} needs.
We precompute previous and next occurrences at regularly spaced block boundaries in $L'$ (see \Cref{sct:practical_optimizations} for the block size) and inspect only the two blocks containing $\mu$ and $\nu$.
More precisely, we store the arrays $L'_\pred[0..\ceil{r' / \sigma}][1..\sigma]$ and $L'_\suc[0..\ceil{r' / \sigma}][1..\sigma]$, where $L'_\suc[b][c] = \suc_c(L', b \sigma)$ and $L'_\pred[b][c] = \pred_c(L', b \sigma)$.
Analogously, we store $\bwd{L'}_\pred$ and $\bwd{L'}_\suc$.

To compute $(c_1,B_1,E_1), \ldots, (c_k,B_k,E_k)$, we initialize $B \gets L'_\suc[\ceil{\mu / \sigma}]$ and $E \gets L'_\pred[\floor{\nu / \sigma}]$ ($\Oh(\sigma)$ time).
Then, we scan with $i$ over $L'[\mu, \ceil{\mu / \sigma} \sigma]$ from right to left and set $B[L'[i]] \gets i$ at each position.
Similarly, we scan with $i$ over $L'[\floor{\nu / \sigma} \sigma + 1, \nu]$ from left to right and set $E[L'[i]] \gets i$ at each position.
Finally, we iterate with $c$ from $1$ to $\sigma$ and report $(c,B[c],E[c])$ whenever $B[c] \leq E[c]$, preserving lexicographic order.

\begin{algorithm}[!h]
\caption{$\mathsf{Locate}(\langle m, d, \sa, \bwd{\sa}, \chi \rangle)$}\label{alg:locate}
\begin{algorithmic}[1]
\IF{$t = \mathsf{L}$}
    \STATE $v \gets S[x] - i;\ p \gets b + o$;
    \STATE $w \gets S_\idx[x];\ w' \gets S'_\idx[x]$;
\ELSE
    \STATE $v \gets n - \bwd{S}[x] + 1 - (m - i)$;
    \STATE $p \gets \bwd{S}_\pos[x];\ q \gets \bwd{S}_{\p\idx}[x]$;
    \FOR{$k$ \textbf{from} $1$ \textbf{to} $m - i$}
        \STATE $(p, q) \gets \MLF.\move(p, q)$;
    \ENDFOR
    \STATE $w \gets \bwd{S}_\idx[x];\ w' \gets \bwd{S'}_\idx[x]$;
\ENDIF
\STATE $w \gets \max \{j \in [1, r''] \mid \MPhi_\p[j] \leq v\}$;
\STATE $w' \gets \max \{j \in [1, r'''] \mid \MPhiinv_\p[j] \leq v\}$;
\STATE $s \gets v;\ \mathbf{report}(s)$;
\FOR{$j$ \textbf{from} $p - 1$ \textbf{downto} $b$}
    \STATE $(s, w) \gets \MPhi.\move(s, w);\ \mathbf{report}(s)$;
\ENDFOR
\STATE $s \gets v$;
\FOR{$j$ \textbf{from} $p + 1$ \textbf{to} $e$}
    \STATE $(s, w') \gets \MPhiinv.\move(s, w');\ \mathbf{report}(s)$;
\ENDFOR
\end{algorithmic}
\end{algorithm}

A query reads at most $2\sigma$ table entries and at most $2\sigma$ characters from $L'$, and therefore takes $\Oh(\sigma)$ time.
It incurs only two additional random memory accesses (at $L'_\pred$ and $L'_\suc$), as $L'[\mu]$ and $L'[\nu]$ are accessed afterwards in the main algorithm.
$L'_\pred$ and $L'_\suc$ require $\Oh(r' \log r')$ bits and can be constructed in $\Oh(r')$ time.

\section{Experimental Evaluation}
We implemented the algorithms from \Cref{sect:opt} in \texttt{C++20}, using SDSL's sd-array \cite{sd_array,sdsl} for sparse bit vectors and \texttt{malloc\_count} to measure peak memory.
We also implemented a variant of \texttt{move-rb}, called \texttt{move-rb-rlzsa}, that uses the RLZ-encoded suffix array from \cite{rlzsa} instead of $\MPhi$ and $\MPhiinv$.
We performed all measurements on a system with two AMD EPYC 7452 CPUs (32 cores/64 threads, 2.35--3.35 GHz, 2/16/128 MB L1/L2/L3 cache) and 1 TB of 3200 MT/s DDR4 RAM using the GCC 13.3.0 compiler and the compile flags \texttt{"-march=native -DNDEBUG -Ofast"} on Ubuntu 24.04.3.
All indexes are constructed and queried with a single thread.
\Cref{tab:texts} shows three 50 GB texts: German Wikipedia entries (dewiki), chromosome 19 haplotypes (chr19), and SARS-CoV-2 genomes (sars2); we restrict chr19 and sars2 to A, C, G and T as required (hardcoded) by \texttt{b-move} and \texttt{columba}.
Our implementation and all data sets used are available on Zenodo (\url{https://zenodo.org/records/22879117}) and GitHub (\url{https://github.com/LukasNalbach/Move-r}).

\subsection{Competitors}
We compare \texttt{move-rb} against the state-of-the-art compressed bi-directional r-indexes \texttt{br-index} \cite{br_index} and \texttt{b-move} \cite{b_move}, and the bi-directional FM-index \texttt{columba} \cite{columba_dynamic_partitioning, columba_in_text_verification}.
\texttt{br-index} and \texttt{columba} use divsufsort (for $n \geq 2^{31}$) to construct $\SA$ and the BWT, while \texttt{b-move} uses Big-BWT \cite{pfp}, like \texttt{move-rb}.
As \texttt{b-move} is the run-length-compressed version of \texttt{columba}, it uses the same APM algorithms.
\Cref{tab:texts} also shows the index sizes: \texttt{move-rb} is 2.0--2.2$\times$ larger than \texttt{br-index}, has the same size as \texttt{b-move} on sars2, is 24\% smaller on chr19 (see \Cref{sct:practical_optimizations}), and is 36--42$\times$ smaller than the uncompressed \texttt{columba}; \texttt{move-rb-rlzsa} is 1.2--2.4$\times$ larger than \texttt{move-rb}.
Since \texttt{b-move} and \texttt{columba} only support DNA alphabets, they cannot be tested on dewiki.

\begin{figure*}[p]
\centering
\vspace*{\fill}

\begin{tabular}{l|r|r|r|r|r|r|r|r|r}
\hline
text & size & $\sigma$ & $n/r$ & $n/\bwd{r}$ & br-index & b-move & columba & Move-rb & Move-rb-rlzsa \\
\hline
 sars2 & \num{50} GB &   \num{4} & \num{1000.93} & \num{1000.39} & \num{1801} MB & \num{3875} MB & \num{139} GB &  \num{3879} MB &  \num{9268} MB \\
 chr19 & \num{50} GB &   \num{4} & \num{1119.46} & \num{1119.13} & \num{1624} MB & \num{4342} MB & \num{139} GB &  \num{3315} MB &  \num{4975} MB \\
dewiki & \num{50} GB & \num{209} &  \num{348.85} &  \num{349.49} & \num{5273} MB &            -- &           -- & \num{11120} MB & \num{13403} MB \\
\hline
\end{tabular}

{\captionsetup{type=table}\caption{Sizes and compression rates ($n / r$ and $n / \bwd{r}$) of the tested texts, and sizes of the tested indexes. An en dash denotes an unavailable measurement.}\label{tab:texts}}

\vspace{0.2cm}

\begin{subfigure}[h]{.31\textwidth}
\hspace*{-0.4cm}
\begin{tikzpicture}[marks]
\begin{axis}[
    xmode=log,
    constructionstyle,
    title={sars2},
    ylabel={construction thr.\ [MB/s]},
    xlabel={peak memory usage [bytes/$n$]},
    y label style={at={(0.05,0.5)}},
]

\addplot[bmove] coordinates { (2.96715,0.483009) };
\addlegendentry{algo=build\_bmove};
\addplot[brindex] coordinates { (16.5343,0.164353) };
\addlegendentry{algo=build\_br\_index};
\addplot[columba] coordinates { (15.1399,0.598632) };
\addlegendentry{algo=build\_columba};
\addplot[moverb] coordinates { (0.121992,2.18966) };
\addlegendentry{algo=build\_move\_rb\_move};
\addplot[moverbrlzsa] coordinates { (0.347659,1.85704) };
\addlegendentry{algo=build\_move\_rb\_rlzsa};

\legend{};
\end{axis}
\end{tikzpicture}
\end{subfigure}
\begin{subfigure}[h]{.31\textwidth}
\begin{tikzpicture}[marks]
\begin{axis}[
    xmode=log,
    constructionstyle,
    title={chr19},
    xlabel={peak memory usage [bytes/$n$]},
    ylabel={\phantom{p}},
]

\addplot[bmove] coordinates { (2.25366,0.677471) };
\addlegendentry{algo=build\_bmove};
\addplot[brindex] coordinates { (16.5308,0.154512) };
\addlegendentry{algo=build\_br\_index};
\addplot[columba] coordinates { (15.1399,0.566445) };
\addlegendentry{algo=build\_columba};
\addplot[moverb] coordinates { (0.116973,2.12531) };
\addlegendentry{algo=build\_move\_rb\_move};
\addplot[moverbrlzsa] coordinates { (0.116966,2.04772) };
\addlegendentry{algo=build\_move\_rb\_rlzsa};

\legend{};
\end{axis}
\end{tikzpicture}
\end{subfigure}
\begin{subfigure}[h]{.31\textwidth}
\begin{tikzpicture}[marks]
\begin{axis}[
    xmode=log,
    constructionstyle,
    title={dewiki},
    xlabel={peak memory usage [bytes/$n$]},
    ylabel={\phantom{p}},
]

\addplot[brindex] coordinates { (16.5977,0.187268) };
\addlegendentry{algo=build\_br\_index};
\addplot[moverb] coordinates { (0.454988,1.94918) };
\addlegendentry{algo=build\_move\_rb\_move};
\addplot[moverbrlzsa] coordinates { (0.449633,1.74366) };
\addlegendentry{algo=build\_move\_rb\_rlzsa};

\legend{};
\end{axis}
\end{tikzpicture}
\end{subfigure}

\vspace*{-0.35cm}

\centering
\begin{tikzpicture}[legendmarks]
\begin{axis}[legend,legend columns=5]
\addplot[brindex] coordinates { (0,0) };
\addlegendentry{\texttt{br-index}};
\addplot[bmove] coordinates { (0,0) };
\addlegendentry{\texttt{b-move}};
\addplot[columba] coordinates { (0,0) };
\addlegendentry{\texttt{columba}};
\addplot[moverb] coordinates { (0,0) };
\addlegendentry{\texttt{move-rb}};
\addplot[moverbrlzsa] coordinates { (0,0) };
\addlegendentry{\texttt{move-rb-rlzsa}};
\end{axis}
\end{tikzpicture}

\vspace*{-0.1cm}
\caption{Construction throughput versus peak memory usage during index construction.}
\label{fig:construction-memory-usage}

\vspace{0.2cm}

\begin{subfigure}[h]{.31\textwidth}
\hspace*{-0.4cm}
\begin{tikzpicture}[marks]
\begin{axis}[ymin=1e0,ymax=1e4,
    querystyle_apm,
    xticklabels={},
    title={sars2},
    ylabel={count thr. [$1/s$] (ham.)},
    y label style={at={(0.05,0.5)}},
    xmode=log,
    ymode=log,
    xlabel={\vphantom{pattern length}}
]

\addplot[brindex] coordinates { (10,4.78526) (20,0.774743) (40,2.66542) (80,28.0467) (160,34.8465) (320,23.1973) (640,10.8606) (1280,7.6142) };
\addlegendentry{algo=count\_br\_index\_native};
\addplot[moverb] coordinates { (10,94.4704) (20,102.286) (40,1516.64) (80,5233.77) (160,3875.03) (320,2364.14) (640,1028.44) (1280,680.715) };
\addlegendentry{algo=count\_move\_rb\_move};

\legend{};
\end{axis}
\end{tikzpicture}
\end{subfigure}
\begin{subfigure}[h]{.31\textwidth}
\begin{tikzpicture}[marks]
\begin{axis}[ymin=1e0,ymax=1e6,querystyle_apm,
    xticklabels={},
    title={chr19},
    ylabel={\phantom{p}},
    xmode=log,
    ymode=log,
    xlabel={\vphantom{pattern length}}
]

\addplot[brindex] coordinates { (10,7.10974) (20,3.1641) (40,2.4521) (80,67.521) (160,920.746) (320,4942.02) (640,5639.36) (1280,3648.79) };
\addlegendentry{algo=count\_br\_index\_native};
\addplot[moverb] coordinates { (10,86.7893) (20,303.283) (40,1062.78) (80,13976.2) (160,160543) (320,460172) (640,396108) (1280,275210) };
\addlegendentry{algo=count\_move\_rb\_move};

\legend{};
\end{axis}
\end{tikzpicture}
\end{subfigure}
\begin{subfigure}[h]{.31\textwidth}
\begin{tikzpicture}[marks]
\begin{axis}[ymin=1e-1,ymax=1e6,querystyle_apm,
    xticklabels={},
    title={dewiki},
    ylabel={\phantom{p}},
    xmode=log,
    ymode=log,
    xlabel={\vphantom{pattern length}}
]

\addplot[brindex] coordinates { (10,0.0613695) (20,0.0806703) (40,0.846152) (80,2.48722) (160,20.5811) (320,201.331) (640,390.441) (1280,412.474) };
\addlegendentry{algo=count\_br\_index\_native};
\addplot[moverb] coordinates { (10,26.9207) (20,138.994) (40,5422.5) (80,5650.86) (160,37997.8) (320,215060) (640,365195) (1280,414752) };
\addlegendentry{algo=count\_move\_rb\_move};

\legend{};
\end{axis}
\end{tikzpicture}
\end{subfigure}

\vspace{-1.35cm}
\begin{subfigure}[h]{.31\textwidth}
\hspace*{-0.4cm}
\begin{tikzpicture}[marks]
\begin{axis}[ymin=1e3,ymax=1e8,
    querystyle_apm,
    xticklabels={},
    ylabel={locate thr. [$1/s$] (ham.)},
    y label style={at={(0.05,0.5)}},
    xmode=log,
    ymode=log,
    title={\vphantom{Ig}},
    xlabel={\vphantom{pattern length}}
]

\addplot[brindex] coordinates { (20,452389) (40,57031.8) (80,284056) (160,238553) (320,98866.8) (640,22586) (1280,5671.37) };
\addlegendentry{algo=locate\_br\_index\_native};
\addplot[columba] coordinates { (20,1.87585e+06) (40,1.70734e+06) (80,1.96437e+06) (160,1.94149e+06) (320,1.83237e+06) (640,1.37301e+06) (1280,681204) };
\addlegendentry{algo=locate\_columba};
\addplot[bmove] coordinates { (20,909400) (40,951979) (80,1.10325e+06) (160,1.024e+06) (320,867474) (640,444293) (1280,152998) };
\addlegendentry{algo=locate\_columba\_rlc};
\addplot[moverb] coordinates { (20,6.42541e+06) (40,4.8229e+06) (80,6.34363e+06) (160,5.59317e+06) (320,3.90036e+06) (640,1.34493e+06) (1280,374297) };
\addlegendentry{algo=locate\_move\_rb\_move};
\addplot[moverbrlzsa] coordinates { (20,6.39836e+07) (40,2.55151e+07) (80,4.76206e+07) (160,2.49084e+07) (320,8.97464e+06) (640,1.71732e+06) (1280,403913) };
\addlegendentry{algo=locate\_move\_rb\_rlzsa};

\legend{};
\end{axis}
\end{tikzpicture}
\end{subfigure}
\begin{subfigure}[h]{.31\textwidth}
\begin{tikzpicture}[marks]
\begin{axis}[ymin=1e3,ymax=1e8,querystyle_apm,
    xticklabels={},
    ylabel={\phantom{p}},
    xmode=log,
    ymode=log,
    title={\vphantom{Ig}},
    xlabel={\vphantom{pattern length}}
]

\addplot[brindex] coordinates { (20,372612) (40,56073.2) (80,45957.4) (160,8846.4) (320,18246.6) (640,12594.1) (1280,5729.2) };
\addlegendentry{algo=locate\_br\_index\_native};
\addplot[columba] coordinates { (20,1.09834e+06) (40,1.00289e+06) (80,1.17828e+06) (160,601061) (320,771606) (640,592002) (1280,397340) };
\addlegendentry{algo=locate\_columba};
\addplot[bmove] coordinates { (20,848885) (40,600551) (80,685872) (160,463447) (320,552001) (640,352228) (1280,192018) };
\addlegendentry{algo=locate\_columba\_rlc};
\addplot[moverb] coordinates { (20,5.79088e+06) (40,4.52647e+06) (80,3.74344e+06) (160,1.14393e+06) (320,1.28509e+06) (640,716054) (1280,363177) };
\addlegendentry{algo=locate\_move\_rb\_move};
\addplot[moverbrlzsa] coordinates { (20,4.58147e+07) (40,2.04906e+07) (80,8.69383e+06) (160,1.48744e+06) (320,1.57242e+06) (640,782034) (1280,375528) };
\addlegendentry{algo=locate\_move\_rb\_rlzsa};

\legend{};
\end{axis}
\end{tikzpicture}
\end{subfigure}
\begin{subfigure}[h]{.31\textwidth}
\begin{tikzpicture}[marks]
\begin{axis}[ymin=1e2,ymax=1e8,querystyle_apm,
    xticklabels={},
    ylabel={\phantom{p}},
    xmode=log,
    ymode=log,
    title={\vphantom{Ig}},
    xlabel={\vphantom{pattern length}}
]

\addplot[brindex] coordinates { (10,318378) (20,5118.52) (40,745.944) (80,568.671) (160,146.804) (320,598.418) (640,627.528) (1280,516.473) };
\addlegendentry{algo=locate\_br\_index\_native};
\addplot[moverb] coordinates { (10,7.31258e+06) (20,1.8323e+06) (40,1.19248e+06) (80,581436) (160,230011) (320,538879) (640,529707) (1280,499884) };
\addlegendentry{algo=locate\_move\_rb\_move};
\addplot[moverbrlzsa] coordinates { (10,5.81153e+07) (20,2.95186e+06) (40,1.64334e+06) (80,663769) (160,242821) (320,562836) (640,538545) (1280,495446) };
\addlegendentry{algo=locate\_move\_rb\_rlzsa};

\legend{};
\end{axis}
\end{tikzpicture}
\end{subfigure}

\vspace{-1.35cm}
\begin{subfigure}[h]{.31\textwidth}
\hspace*{-0.4cm}
\begin{tikzpicture}[marks]
\begin{axis}[ymin=1e5,ymax=1e8,
    querystyle_apm,
    xlabel={pattern length},
    ylabel={locate thr. [$1/s$] (edit)},
    y label style={at={(0.05,0.5)}},
    xmode=log,
    ymode=log,,
    title={\vphantom{Ig}}
]

\addplot[columba] coordinates { (20,1.33696e+06) (40,1.42917e+06) (80,1.34394e+06) (160,1.22813e+06) (320,1.55483e+06) (640,1.36112e+06) (1280,828384) };
\addlegendentry{algo=locate\_columba};
\addplot[bmove] coordinates { (20,523839) (40,489525) (80,470758) (160,414654) (320,462121) (640,386244) (1280,229694) };
\addlegendentry{algo=locate\_columba\_rlc};
\addplot[moverb] coordinates { (20,5.27563e+06) (40,4.34548e+06) (80,4.732e+06) (160,4.06949e+06) (320,3.50736e+06) (640,2.01783e+06) (1280,625990) };
\addlegendentry{algo=locate\_move\_rb\_move};
\addplot[moverbrlzsa] coordinates { (20,2.09949e+07) (40,1.33959e+07) (80,1.54486e+07) (160,1.0804e+07) (320,7.61368e+06) (640,2.88973e+06) (1280,686311) };
\addlegendentry{algo=locate\_move\_rb\_rlzsa};

\legend{};
\end{axis}
\end{tikzpicture}
\end{subfigure}
\begin{subfigure}[h]{.31\textwidth}
\begin{tikzpicture}[marks]
\begin{axis}[ymin=1e5,ymax=1e7,querystyle_apm,
    xlabel={pattern length},
    ylabel={\phantom{p}},
    xmode=log,
    ymode=log,,
    title={\vphantom{Ig}}
]

\addplot[columba] coordinates { (20,714669) (40,720644) (80,896454) (160,658656) (320,825919) (640,487781) (1280,282250) };
\addlegendentry{algo=locate\_columba};
\addplot[bmove] coordinates { (20,503202) (40,491082) (80,480209) (160,446471) (320,487548) (640,307223) (1280,156122) };
\addlegendentry{algo=locate\_columba\_rlc};
\addplot[moverb] coordinates { (20,3.96068e+06) (40,2.00234e+06) (80,2.20246e+06) (160,849777) (320,1.01968e+06) (640,600377) (1280,301809) };
\addlegendentry{algo=locate\_move\_rb\_move};
\addplot[moverbrlzsa] coordinates { (20,1.09534e+07) (40,4.41251e+06) (80,3.62966e+06) (160,1.03771e+06) (320,1.20102e+06) (640,652162) (1280,311544) };
\addlegendentry{algo=locate\_move\_rb\_rlzsa};

\legend{};
\end{axis}
\end{tikzpicture}
\end{subfigure}
\begin{subfigure}[h]{.31\textwidth}
\begin{tikzpicture}[marks]
\begin{axis}[ymin=1e5,ymax=1e8,querystyle_apm,
    xlabel={pattern length},
    ylabel={\phantom{p}},
    xmode=log,
    ymode=log,,
    title={\vphantom{Ig}}
]

\addplot[moverb] coordinates { (10,5.16966e+06) (20,1.17815e+06) (40,140602) (80,235716) (160,282849) (320,277565) (640,300730) (1280,403425) };
\addlegendentry{algo=locate\_move\_rb\_move};
\addplot[moverbrlzsa] coordinates { (10,1.34878e+07) (20,1.39526e+06) (40,151395) (80,249623) (160,299757) (320,293016) (640,305030) (1280,406089) };
\addlegendentry{algo=locate\_move\_rb\_rlzsa};

\legend{};
\end{axis}
\end{tikzpicture}
\end{subfigure}

\vspace*{-0.35cm}

\centering
\begin{tikzpicture}[legendmarks]
\begin{axis}[legend,legend columns=5]
\addplot[brindex] coordinates { (0,0) };
\addlegendentry{\texttt{br-index}};
\addplot[bmove] coordinates { (0,0) };
\addlegendentry{\texttt{b-move}};
\addplot[columba] coordinates { (0,0) };
\addlegendentry{\texttt{columba}};
\addplot[moverb] coordinates { (0,0) };
\addlegendentry{\texttt{move-rb}};
\addplot[moverbrlzsa] coordinates { (0,0) };
\addlegendentry{\texttt{move-rb-rlzsa}};
\end{axis}
\end{tikzpicture}

\vspace*{-0.1cm}
\caption{APM query throughput as a geometric mean over the results for $k \in \{4, 7, 10, 13\}$. All indexes use their native APM algorithms.}
\label{fig:apm}

\vspace*{\fill}
\end{figure*}

\subsection{Construction}
We measured construction throughput and peak memory usage of each index (see \Cref{fig:construction-memory-usage}).
\texttt{move-rb} is constructed 3.7$\times$ faster than \texttt{columba}, 10--14$\times$ faster than \texttt{br-index} and 3.1--4.6$\times$ faster than \texttt{b-move}, while requiring 124--129$\times$, 36--141$\times$ and 19--24$\times$ less memory, resp.
The gains over \texttt{b-move} stem from omitting the slow $\PLCP$ construction, from never keeping an $\Oh(n)$-space data structure in RAM, and from further practical improvements (see \Cref{sct:construction}).
The much larger gains over \texttt{br-index} and \texttt{columba} follow from them building $\SA$ and BWT in RAM, which takes $\Oh(n)$ words, whereas \texttt{Big-BWT} stores both on disk.

\subsection{Extension and SA-Interval Enumeration}\label{sct:extension_performance}

Since all tested indexes use their own APM algorithms, we also evaluate raw index performance in isolation.
For each text and $m \in \{10 \cdot 2^x \mid x \in [0, 7]\}$, we generated two sets of $N$ random substring patterns of length $m$: the first is matched through a fixed, random series of left- and right-extensions, the second additionally enumerates the resulting $\SA$-intervals.
We report the throughputs as $m N / t_{\mathsf{ext}}$ and $(m N + occ) / t_{\mathsf{ext+enum}}$, where $t_{\mathsf{ext}}$ and $t_{\mathsf{ext+enum}}$ are the times to extend, resp.\ to extend and enumerate all $occ$ occurrences of all $N$ patterns in a set (see \Cref{fig:raw_performance}; pattern statistics in \Cref{tab:patterns_count,tab:patterns_locate}).

\texttt{move-rb} extends 9--23$\times$ (typ.\ 12$\times$) faster than \texttt{br-index} and 1.2--2.9$\times$ (typ.\ 1.8$\times$) faster than \texttt{b-move}.
The much larger \texttt{columba} extends faster than \texttt{move-rb} only for short patterns ($m \leq 320$); for longer ones, \texttt{move-rb} is up to 2.5$\times$ faster.
Regarding $\SA$-interval enumeration, \texttt{move-rb} is 6.5--14$\times$ (typ.\ 9$\times$) faster than \texttt{br-index} and 1.5--3.8$\times$ (typ.\ 3$\times$) faster than \texttt{b-move}, while \texttt{columba} enumerates up to 4.6$\times$ faster than \texttt{move-rb}, except for long patterns on chr19 ($m \geq 640$), where \texttt{move-rb} is up to 2.0$\times$ faster.
Finally, \texttt{move-rb-rlzsa} enumerates up to 30$\times$ faster than \texttt{move-rb} when $occ \gg m N$, and 2--10$\times$ (typ.\ 4.4$\times$) faster than \texttt{columba}, while never being slower than \texttt{columba}.
\Cref{sct:raw_index_performance} discusses the trends behind these numbers.

\subsection{Approximate Pattern Matching}
We did not measure edit count queries, because obtaining their number requires filtering the occurrences, hence locating.
\texttt{b-move}'s and \texttt{columba}'s own APM algorithms do not offer count queries, so \Cref{fig:apm} compares counting only against \texttt{br-index}.
While \texttt{br-index}'s APM algorithm is hardcoded to the pigeonhole principle and supports Hamming distance only, \texttt{b-move} and \texttt{move-rb} support arbitrary search schemes and both distance metrics.
We used the currently best-known search schemes, called minU \cite{min_u,search_schemes}.
We generated patterns analogously to \Cref{sct:extension_performance} for each $k \in \{4, 7, 10, 13\}$ (see \Cref{tab:patterns_count_apm,tab:patterns_locate_apm_ham,tab:patterns_locate_apm_edit}); throughput is defined analogously to \Cref{sct:extension_performance}, but with $t_{\mathsf{count}}$ and $t_{\mathsf{locate}}$: the times to count/locate all $N$ patterns in a set.

Here, $occ$ differs between the indexes: w.r.t.\ edit distance, \texttt{b-move} and \texttt{columba} report up to $9.5\times$ fewer occurrences than \texttt{move-rb} (typ.\ $2.7\times$), because their filter violates the coverage condition (see \Cref{sct:coverage}).
We therefore always use $occ_{\mathsf{move-rb}}$ in the numerator, so that every reported speedup is a ratio of running times.

\Cref{fig:apm} shows the results as a geometric mean over $k$, as they span orders of magnitude (OoM) over $k$; the results for specific $k$ are shown in \Cref{sct:apm_per_k}.
At every individual $k$, \texttt{move-rb} and \texttt{move-rb-rlzsa} outperform \texttt{b-move} and \texttt{br-index}, and \texttt{b-move} outperforms \texttt{br-index}; only \texttt{columba}'s rank relative to \texttt{move-rb} varies with $k$.

\subsubsection{Results}
Compared with \texttt{br-index}, \texttt{move-rb} counts patterns w.r.t.\ Hamming distance 1--4 (typ.\ 2.4) OoM faster and locates them 1--3 (typ.\ 2) OoM faster.
Compared with \texttt{b-move}, it locates 1.9--7.5$\times$ (typ.\ 4$\times$) faster w.r.t.\ Hamming distance and 1.9--10$\times$ (typ.\ 4.6$\times$) faster w.r.t.\ edit distance.
For edit distance, the gap is wider on sars2 (typ.\ 7.1$\times$) than on chr19 (typ.\ 3.0$\times$), because sars2 has 3 OoM more occurrences per pattern (see \Cref{tab:patterns_locate_apm_edit}) and \texttt{move-rb}'s $\SA$-interval deduplication saves the most work there (see \Cref{sct:duplicate_nested_sa_intervals}).
Compared with \texttt{columba}, \texttt{move-rb} locates up to 5.3$\times$ faster w.r.t.\ Hamming distance and up to 5.5$\times$ faster w.r.t.\ edit distance (typ.\ 2.1$\times$); \texttt{columba} takes the lead only at the longest patterns, by at most 1.8$\times$ (Hamming, $m \geq 640$) and 1.3$\times$ (edit, $m = 1280$).
\texttt{move-rb-rlzsa} locates up to 10$\times$ (Hamming) and 4$\times$ (edit) faster than \texttt{move-rb}, again with the largest gains where $occ \gg m N$.
Finally, \texttt{move-rb}'s peak memory usage during locating (including the index size and the reported occurrences with their CIGARs) is lower than \texttt{b-move}'s by a factor of 1.5 (at most 6.5) for Hamming distance and 2.5 (at most 7.4) for edit distance, in the geometric mean (see \Cref{tab:mem}).
We store each search context's matched string RLZ-compressed w.r.t.\ $P$ in $\Oh(k)$ space, whereas \texttt{b-move} stores it uncompressed in $\Oh(m)$ space (see \Cref{sct:matched_strings}).

To compare the index data structures directly, we also run our optimized APM algorithms on every index (see \Cref{sct:raw_index_amp_performance}).
\texttt{move-rb} then leads \texttt{br-index} by 8--650$\times$ (typ.\ 80$\times$) for counting and 6.6--630$\times$ (typ.\ 46$\times$) for locating, and \texttt{b-move} by 2.2--10.5$\times$ (typ.\ 6$\times$) and 3.2--8.2$\times$ (typ.\ 4.6$\times$), resp.
\texttt{columba} is the only index that ever outperforms \texttt{move-rb}, and only for short patterns: at $m = 10$ for counting (by up to 1.9$\times$) and for $m \leq 320$ for locating (by up to 2.2$\times$); for all other $m$, \texttt{move-rb} leads by 1.5--4.2$\times$ (counting) and 1.3--2.4$\times$ (locating).

\section{Conclusion}
We have shown that existing bi-directional r-indexes can be significantly sped up, both in theory and in practice.
Without a direction switch, Move-rb performs an all-$k$-character extension in output-optimal $\Oh(k)$ time, where $k \leq \sigma$ is the number of characters that can extend the pattern, and a single-character extension in the same time, both within $\Oh((r + \bwd{r}) \log n)$ bits of space.
Augmenting the index with $\Oh((r + \bwd{r}) \log(n / r_{\min}) \log \sigma)$ further bits reduces a single-character extension to $\Oh(\log \sigma)$ time.
A direction switch incurs only $\Oh(\log\log_\omega (n / r_{\min}))$ additional time.
Matching a pattern through $m$ single-character extensions with $s$ direction switches and locating all occurrences thus takes $\Oh(\sum_{i=1}^m k_i + s \log\log_\omega(n / r_{\min}) + occ)$ time with the smaller index, resp.\ $\Oh(m \log \sigma + s \log\log_\omega(n / r_{\min}) + occ)$ time with the augmented one, where $k_i \leq \sigma$ is the number of characters that can extend the pattern before extension $i$.

In practice, Move-rb can be constructed 10--14$\times$ faster than br-index \cite{br_index}, 3--4.6$\times$ faster than b-move \cite{b_move} and 3.7$\times$ faster than columba, while requiring 36--141$\times$, 19--24$\times$ and 124--129$\times$ less memory, resp.
It answers APM queries 1--4 OoM faster than br-index, 1.9--10$\times$ faster than b-move and even up to 5.5$\times$ faster than the uncompressed FM-index columba, while being 2$\times$ larger than br-index, up to 24\% smaller than b-move and 36--42$\times$ smaller than columba.
Its peak memory usage during locating, including the index size, is $1.5\times$ (up to $6.5\times$) and $2.5\times$ (up to $7.4\times$) lower than b-move's for Hamming distance and edit distance, resp.

\clearpage
\bibliographystyle{siamplain}
\bibliography{paper}

\begin{thebibliography}{10}

\bibitem{br_index}
{\sc Y.~Arakawa, G.~Navarro, and K.~Sadakane}, {\em Bi-directional r-indexes}, in 33rd Annual Symposium on Combinatorial Pattern Matching {CPM}, 2022, pp.~11:1--11:14.

\bibitem{wavelet_tree}
{\sc D.~Belazzougui and G.~Navarro}, {\em Optimal lower and upper bounds for representing sequences}, {ACM} Transactions on Algorithms, 11 (2015), pp.~31:1--31:21.

\bibitem{move_r}
{\sc N.~Bertram, J.~Fischer, and L.~Nalbach}, {\em Move-r: Optimizing the r-index}, in 22nd International Symposium on Experimental Algorithms {SEA}, 2024, pp.~1:1--1:19.

\bibitem{pfp}
{\sc C.~Boucher, T.~Gagie, A.~Kuhnle, and G.~Manzini}, {\em Prefix-free parsing for building big {BWTs}}, in 18th International Workshop on Algorithms in Bioinformatics {WABI}, 2018, pp.~2:1--2:16.

\bibitem{bwt}
{\sc M.~Burrows and D.~J. Wheeler}, {\em A block-sorting lossless data compression algorithm}, Tech. Report 124, Digital Equipment Corporation Systems Research Center, Palo Alto, CA, 1994.

\bibitem{search_schemes}
{\sc L.~Depuydt, J.~Fostier, S.~Gottlieb, G.~Kucherov, K.~Reinert, and L.~Renders}, {\em Search schemes for approximate pattern matching: An overview}, in The Expanding World of Compressed Data: {A} Festschrift for Giovanni Manzini's 60th Birthday, 2025, pp.~9:1--9:16.

\bibitem{b_move}
{\sc L.~Depuydt, L.~Renders, S.~V. de~Vyver, L.~Veys, T.~Gagie, and J.~Fostier}, {\em b-move: faster lossless approximate pattern matching in a run-length compressed index}, Algorithms for Molecular Biology, 20 (2025), p.~15.

\bibitem{rlzsa}
{\sc P.~Dinklage, J.~Fischer, L.~Nalbach, and J.~Zumbrink}, {\em {RLZ-r} and {LZ-End-r}: Enhancing {Move-r}}, Computing Research Repository {CoRR}, abs/2507.17300 (2025).

\bibitem{indexing_compressed_text}
{\sc P.~Ferragina and G.~Manzini}, {\em Indexing compressed text}, Journal of the {ACM}, 52 (2005), pp.~552--581.

\bibitem{r_index}
{\sc T.~Gagie, G.~Navarro, and N.~Prezza}, {\em Fully functional suffix trees and optimal text searching in {BWT}-runs bounded space}, Journal of the {ACM}, 67 (2020), pp.~2:1--2:54.

\bibitem{sdsl}
{\sc S.~Gog, T.~Beller, A.~Moffat, and M.~Petri}, {\em From theory to practice: Plug and play with succinct data structures}, in 13th International Symposium on Experimental Algorithms {SEA}, 2014, pp.~326--337.

\bibitem{word_ram}
{\sc T.~Hagerup}, {\em Sorting and searching on the word {RAM}}, in 15th Annual Symposium on Theoretical Aspects of Computer Science {STACS}, 1998, pp.~366--398.

\bibitem{suffix_filter}
{\sc J.~K{\"{a}}rkk{\"{a}}inen and J.~C. Na}, {\em Faster filters for approximate string matching}, in 9th Workshop on Algorithm Engineering and Experiments {ALENEX}, 2007, pp.~84--90.

\bibitem{top_k_colors}
{\sc M.~Karpinski and Y.~Nekrich}, {\em Top-$k$ color queries for document retrieval}, in 22nd Annual {ACM-SIAM} Symposium on Discrete Algorithms {SODA}, 2011, pp.~401--411.

\bibitem{two_bwt}
{\sc T.~W. Lam, R.~Li, A.~Tam, S.~Wong, E.~Wu, and S.~Yiu}, {\em High throughput short read alignment via bi-directional {BWT}}, in {IEEE} International Conference on Bioinformatics and Biomedicine {BIBM}, 2009, pp.~31--36.

\bibitem{suffix_array}
{\sc U.~Manber and G.~Myers}, {\em Suffix arrays: {A} new method for on-line string searches}, in 1st Annual {ACM-SIAM} Symposium on Discrete Algorithms {SODA}, 1990, pp.~319--327.

\bibitem{document_listing}
{\sc S.~Muthukrishnan}, {\em Efficient algorithms for document retrieval problems}, in 13th Annual {ACM-SIAM} Symposium on Discrete Algorithms {SODA}, 2002, pp.~657--666.

\bibitem{repetitiveness_measures}
{\sc G.~Navarro}, {\em Indexing highly repetitive string collections, part {I:} repetitiveness measures}, {ACM} Computing Surveys, 54 (2022), pp.~29:1--29:31.

\bibitem{move_data_structure}
{\sc T.~Nishimoto and Y.~Tabei}, {\em Optimal-time queries on {BWT}-runs compressed indexes}, in 48th International Colloquium on Automata, Languages, and Programming {ICALP}, 2021, pp.~101:1--101:15.

\bibitem{sd_array}
{\sc D.~Okanohara and K.~Sadakane}, {\em Practical entropy-compressed rank/select dictionary}, in 9th Workshop on Algorithm Engineering and Experiments {ALENEX}, 2007, pp.~60--70.

\bibitem{succinct_indexable_dictionary}
{\sc R.~Raman, V.~Raman, and S.~R. Satti}, {\em Succinct indexable dictionaries with applications to encoding $k$-ary trees, prefix sums and multisets}, {ACM} Transactions on Algorithms, 3 (2007), pp.~43:1--43:25.

\bibitem{columba_in_text_verification}
{\sc L.~Renders, L.~Depuydt, and J.~Fostier}, {\em Approximate pattern matching using search schemes and in-text verification}, in 9th International Work-Conference on Bioinformatics and Biomedical Engineering {IWBBIO}, 2022, pp.~419--435.

\bibitem{min_u}
{\sc L.~Renders, L.~Depuydt, S.~Rahmann, and J.~Fostier}, {\em Lossless approximate pattern matching: Automated design of efficient search schemes}, Journal of Computational Biology, 31 (2024), pp.~975--989.

\bibitem{columba_dynamic_partitioning}
{\sc L.~Renders, K.~Marchal, and J.~Fostier}, {\em Dynamic partitioning of search patterns for approximate pattern matching using search schemes}, iScience, 24 (2021), p.~102687.

\bibitem{bidirectional_fm_index}
{\sc T.~Schnattinger, E.~Ohlebusch, and S.~Gog}, {\em Bidirectional search in a string with wavelet trees and bidirectional matching statistics}, Information and Computation, 213 (2012), pp.~13--22.

\end{thebibliography}

\appendix

\section{Construction}\label{sct:construction}
In this section, we describe how to construct the implemented data structures that are not already present in Move-r \cite{move_r}.
To construct $\bwd{S}_\pos$, we first build a hash map $H$ that maps each value $v$ to the set $\{x \mid n - \bwd{S}[x] + 1 = v\}$, for which we use the variant emhash6 of the hash set and hash map library \texttt{emhash}.
Since $\bwd{S}[2i - 1] = \bwd{S}[2i]$ holds iff the $i$-th input interval of $\MLFbwd$ has length $1$, and $\bwd{\SA}$ is a permutation, each key is mapped to at most $2$ indices.
Then, we iterate over $\SA$ and check at position $i$ whether there is a mapping $\SA[i] \mapsto X$ in $H$ and write $\bwd{S}_\pos[x] = i$ for each $x \in X$, if it exists.
This takes $\Oh(n)$ expected time.
Finally, $S_\idx$, $S'_\idx$, $\bwd{S}_\idx$, $\bwd{S'}_\idx$ and $\bwd{S}_{\p\idx}$ can be constructed in $\Oh((r + \bwd{r}) \log r)$ time using binary searches over the input interval starting positions of the respective move data structures.

\section{Practical Optimizations}\label{sct:practical_optimizations}
In practice, we do not use the arrays $S_\idx$, $S'_\idx$, $\bwd{S}_\idx$, $\bwd{S'}_\idx$ and $\bwd{S}_{\p\idx}$.
Instead, we use the EF encoded copies of $\MLF_\p$ and $\MLFbwd_\p$ and additionally store $\MPhi_\p$ and $\MPhiinv_\p$ EF encoded.
Furthermore, these four EF encodings only contain every $\kappa$-th entry of $\MLF_\p$, $\MLFbwd_\p$, $\MPhi_\p$ and $\MPhiinv_\p$, resp., where $\kappa \geq 1$ is an integer parameter.
Preliminary testing revealed that setting $\kappa = 4$ results in all four EF encodings accounting for a negligibly small fraction of the overall index size, while query times remain unchanged.
We store every array in the index bit-packed.

The description in \Cref{sct:practical_extensions} uses a fixed block size $\sigma$; in practice, we make it tunable with a parameter $\gamma$, so that a block has size $\gamma \sigma$, a query takes $\Oh(\gamma \sigma)$ time, and the tables require $\Oh((r' / \gamma) \log r')$ bits.
Preliminary experiments showed that $\gamma = 4$ offers the best trade-off.
In contrast to $\kappa$, the choice of $\gamma$ changes the speed by $0$--$30\%$, depending on the text.

\texttt{b-move}'s index being 1.3$\times$ larger on chr19 (see \Cref{tab:texts}) comes from the balancing parameter $a$, which bounds the number of intervals of a move data structure by $(a/(a-1))r$ \cite[Theorem~6]{move_r}.
\texttt{b-move} hardcodes $a = 2$ for $\Phi$ and $\Phi^{-1}$, allowing $2r$ intervals each, and omits balancing for $\LF$ and $\bwd{\LF}$, while we use $a = 8$ (as recommended in \cite{move_r}), allowing $(8/7)r$.
On chr19, $\Phi$ and $\Phi^{-1}$ are maximally unbalanced, so $a = 2$ doubles the intervals; on sars2, $a = 2$ adds only $6\%$ \cite[Table~1]{move_r}.

\section{Optimized Approximate Pattern Matching Algorithms}\label{sct:apm_algorithms}
In this section, we describe our optimized APM algorithms.

The br-index's own APM algorithm is hardcoded to the pigeonhole principle and supports Hamming distance only, while b-move and Move-rb support arbitrary search schemes and both distance metrics.
Our APM algorithms, however, can drive the br-index's data structures for both distance metrics as well; we use this in \Cref{sct:raw_index_amp_performance} to compare all indexes on a common algorithm.

During the execution of a search scheme it is possible that an $\SA$-interval is reported multiple times.
In contrast to b-move, the APM algorithms in br-index and Move-rb maintain the reported $\SA$-intervals in a hash set during the matching phase.
br-index uses the hash set in the \texttt{C++} standard library.
Move-rb uses the space-efficient hash set implementation \texttt{sparse-map} (see our GitHub repository).

\subsection{Hamming Distance}

\begin{table*}[t!]
\centering
\scriptsize
\setlength{\tabcolsep}{2.1pt}
\renewcommand{\arraystretch}{1.0}
\begin{tabular}{|rr|rrr|rr|rrr|rr|rrr||rr|r|rr|r|r|}
\hline
\multicolumn{2}{|c|}{} & \multicolumn{13}{c||}{Hamming distance} & \multicolumn{7}{c|}{edit distance} \\
\cline{3-15}\cline{16-22}
\multicolumn{2}{|c|}{} & \multicolumn{5}{c|}{sars2} & \multicolumn{5}{c|}{chr19} & \multicolumn{3}{c||}{dewiki} & \multicolumn{3}{c|}{sars2} & \multicolumn{3}{c|}{chr19} & \multicolumn{1}{c|}{dewiki} \\
\hline
$k$ & $m$ & $\frac{\pk{\iM}}{\isz{\iM}}$ & $\frac{\pk{\iB}}{\isz{\iB}}$ & $\frac{\pks{\iR}}{\isz{\iR}}$ & $\frac{\pk{\iB}}{\pk{\iM}}$ & $\frac{\pks{\iR}}{\pks{\iM}}$ & $\frac{\pk{\iM}}{\isz{\iM}}$ & $\frac{\pk{\iB}}{\isz{\iB}}$ & $\frac{\pks{\iR}}{\isz{\iR}}$ & $\frac{\pk{\iB}}{\pk{\iM}}$ & $\frac{\pks{\iR}}{\pks{\iM}}$ & $\frac{\pk{\iM}}{\isz{\iM}}$ & $\frac{\pks{\iR}}{\isz{\iR}}$ & $\frac{\pks{\iR}}{\pks{\iM}}$ & $\frac{\pk{\iM}}{\isz{\iM}}$ & $\frac{\pk{\iB}}{\isz{\iB}}$ & $\frac{\pk{\iB}}{\pk{\iM}}$ & $\frac{\pk{\iM}}{\isz{\iM}}$ & $\frac{\pk{\iB}}{\isz{\iB}}$ & $\frac{\pk{\iB}}{\pk{\iM}}$ & $\frac{\pk{\iM}}{\isz{\iM}}$ \\
\hline
 \num{4} &   \num{10} & -- & -- & -- & -- & -- & -- & -- & -- & -- & -- & \num{1.04} & \num{1.04} & \num{0.48} & -- & -- & -- & -- & -- & -- & \num{1.04} \\
 \num{4} &   \num{20} & \num{1.03} & \textbf{1.37} &    \num{1.04} & \textbf{1.33} & \num{0.48} & \num{1.24} & \textbf{3.61} & \textbf{1.25} & \textbf{3.80} & \num{0.55} & \num{1.02} & \num{1.03} & \num{0.48} & \num{1.10} & \textbf{1.99} & \textbf{1.80} & \num{1.97} & \textbf{7.93} & \textbf{5.26} & \num{1.02} \\
 \num{4} &   \num{40} & \num{1.01} &    \num{1.19} &    \num{1.03} &    \num{1.17} & \num{0.47} & \num{1.12} & \textbf{1.65} &    \num{1.13} & \textbf{1.93} & \num{0.52} & \num{1.00} & \num{1.01} & \num{0.48} & \num{1.10} & \textbf{2.04} & \textbf{1.85} & \num{1.24} & \textbf{2.90} & \textbf{3.06} & \num{1.00} \\
 \num{4} &   \num{80} & \num{1.01} &    \num{1.19} &    \num{1.02} &    \num{1.17} & \num{0.47} & \num{1.01} &    \num{1.08} &    \num{1.01} & \textbf{1.41} & \num{0.49} & \num{1.00} & \num{1.00} & \num{0.47} & \num{1.10} & \textbf{2.16} & \textbf{1.96} & \num{1.02} &    \num{1.13} & \textbf{1.46} & \num{1.00} \\
 \num{4} &  \num{160} & \num{1.01} &    \num{1.19} &    \num{1.02} &    \num{1.17} & \num{0.47} & \num{1.00} &    \num{1.00} &    \num{1.00} & \textbf{1.31} & \num{0.49} & \num{1.00} & \num{1.00} & \num{0.47} & \num{1.05} & \textbf{2.38} & \textbf{2.26} & \num{1.00} &    \num{1.01} & \textbf{1.32} & \num{1.00} \\
 \num{4} &  \num{320} & \num{1.01} &    \num{1.18} &    \num{1.02} &    \num{1.17} & \num{0.47} & \num{1.00} &    \num{1.00} &    \num{1.00} & \textbf{1.31} & \num{0.49} & \num{1.00} & \num{1.00} & \num{0.47} & \num{1.05} & \textbf{2.77} & \textbf{2.63} & \num{1.00} &    \num{1.00} & \textbf{1.31} & \num{1.00} \\
 \num{4} &  \num{640} & \num{1.01} &    \num{1.10} &    \num{1.02} &    \num{1.08} & \num{0.47} & \num{1.00} &    \num{1.00} &    \num{1.00} & \textbf{1.31} & \num{0.49} & \num{1.00} & \num{1.00} & \num{0.47} & \num{1.05} & \textbf{3.57} & \textbf{3.39} & \num{1.00} &    \num{1.00} & \textbf{1.31} & \num{1.00} \\
 \num{4} & \num{1280} & \num{1.01} &    \num{1.19} &    \num{1.02} &    \num{1.17} & \num{0.47} & \num{1.00} &    \num{1.00} &    \num{1.00} & \textbf{1.31} & \num{0.49} & \num{1.00} & \num{1.00} & \num{0.47} & \num{1.05} & \textbf{3.20} & \textbf{3.03} & \num{1.00} &    \num{1.00} & \textbf{1.31} & \num{1.00} \\
 \num{7} &   \num{20} & \num{1.21} & \textbf{3.93} & \textbf{1.23} & \textbf{3.25} & \num{0.52} & \num{1.24} & \textbf{3.61} & \textbf{1.25} & \textbf{3.80} & \num{0.55} & \num{1.00} & \num{1.00} & \num{0.48} & -- & -- & -- & -- & -- & -- & \num{1.00} \\
 \num{7} &   \num{40} & \num{1.01} &    \num{1.19} &    \num{1.02} &    \num{1.17} & \num{0.47} & \num{1.12} & \textbf{2.31} &    \num{1.13} & \textbf{2.69} & \num{0.52} & \num{1.00} & \num{1.01} & \num{0.48} & \num{1.11} & \textbf{2.91} & \textbf{2.63} & \num{1.50} & \textbf{6.95} & \textbf{6.07} & \num{1.00} \\
 \num{7} &   \num{80} & \num{1.01} &    \num{1.19} &    \num{1.02} &    \num{1.17} & \num{0.47} & \num{1.03} &    \num{1.16} &    \num{1.03} & \textbf{1.48} & \num{0.50} & \num{1.00} & \num{1.00} & \num{0.48} & \num{1.10} & \textbf{2.97} & \textbf{2.69} & \num{1.06} & \textbf{1.93} & \textbf{2.38} & \num{1.00} \\
 \num{7} &  \num{160} & \num{1.01} &    \num{1.19} &    \num{1.02} &    \num{1.17} & \num{0.47} & \num{1.00} &    \num{1.01} &    \num{1.00} & \textbf{1.32} & \num{0.49} & \num{1.00} & \num{1.00} & \num{0.47} & \num{1.11} & \textbf{3.49} & \textbf{3.15} & \num{1.00} &    \num{1.02} & \textbf{1.33} & \num{1.00} \\
 \num{7} &  \num{320} & \num{1.01} &    \num{1.18} &    \num{1.02} &    \num{1.17} & \num{0.47} & \num{1.00} &    \num{1.00} &    \num{1.00} & \textbf{1.31} & \num{0.49} & \num{1.00} & \num{1.00} & \num{0.47} & \num{1.11} & \textbf{4.07} & \textbf{3.68} & \num{1.00} &    \num{1.00} & \textbf{1.31} & \num{1.00} \\
 \num{7} &  \num{640} & \num{1.01} &    \num{1.19} &    \num{1.02} &    \num{1.17} & \num{0.47} & \num{1.00} &    \num{1.00} &    \num{1.00} & \textbf{1.31} & \num{0.49} & \num{1.00} & \num{1.00} & \num{0.47} & \num{1.11} & \textbf{5.24} & \textbf{4.73} & \num{1.00} &    \num{1.00} & \textbf{1.31} & \num{1.00} \\
 \num{7} & \num{1280} & \num{1.01} &    \num{1.10} &    \num{1.02} &    \num{1.08} & \num{0.47} & \num{1.00} &    \num{1.00} &    \num{1.00} & \textbf{1.31} & \num{0.49} & \num{1.00} & \num{1.00} & \num{0.47} & \num{1.11} & \textbf{5.33} & \textbf{4.78} & \num{1.00} &    \num{1.00} & \textbf{1.31} & \num{1.00} \\
\num{10} &   \num{20} & -- & -- & -- & -- & -- & -- & -- & -- & -- & -- & \num{1.00} & \num{1.00} & \num{0.48} & -- & -- & -- & -- & -- & -- & -- \\
\num{10} &   \num{40} & \num{1.01} & \textbf{1.37} &    \num{1.02} & \textbf{1.35} & \num{0.47} & \num{1.12} & \textbf{3.61} &    \num{1.13} & \textbf{4.21} & \num{0.52} & \num{1.00} & \num{1.00} & \num{0.48} & \num{1.21} & \textbf{4.56} & \textbf{3.76} & \num{1.07} & \textbf{1.72} & \textbf{2.11} & \num{1.00} \\
\num{10} &   \num{80} & \num{1.01} & \textbf{1.37} &    \num{1.02} & \textbf{1.35} & \num{0.47} & \num{1.06} & \textbf{1.65} &    \num{1.06} & \textbf{2.04} & \num{0.51} & \num{1.01} & \num{1.01} & \num{0.48} & \num{1.21} & \textbf{4.92} & \textbf{4.06} & \num{1.25} & \textbf{2.55} & \textbf{2.66} & \num{1.00} \\
\num{10} &  \num{160} & \num{1.01} & \textbf{1.37} &    \num{1.02} & \textbf{1.35} & \num{0.47} & \num{1.00} &    \num{1.01} &    \num{1.00} & \textbf{1.32} & \num{0.49} & \num{1.00} & \num{1.00} & \num{0.47} & \num{1.11} & \textbf{3.47} & \textbf{3.14} & \num{1.01} &    \num{1.11} & \textbf{1.44} & \num{1.00} \\
\num{10} &  \num{320} & \num{1.01} &    \num{1.19} &    \num{1.02} &    \num{1.17} & \num{0.47} & \num{1.00} &    \num{1.00} &    \num{1.00} & \textbf{1.31} & \num{0.49} & \num{1.00} & \num{1.00} & \num{0.47} & \num{1.21} & \textbf{6.71} & \textbf{5.54} & \num{1.00} &    \num{1.00} & \textbf{1.32} & \num{1.00} \\
\num{10} &  \num{640} & \num{1.01} & \textbf{1.37} &    \num{1.02} & \textbf{1.35} & \num{0.47} & \num{1.00} &    \num{1.00} &    \num{1.00} & \textbf{1.31} & \num{0.49} & \num{1.00} & \num{1.00} & \num{0.47} & \num{1.11} & \textbf{8.24} & \textbf{7.42} & \num{1.00} &    \num{1.00} & \textbf{1.32} & \num{1.00} \\
\num{10} & \num{1280} & \num{1.01} & \textbf{1.21} &    \num{1.02} &    \num{1.19} & \num{0.47} & \num{1.00} &    \num{1.00} &    \num{1.00} & \textbf{1.31} & \num{0.49} & \num{1.00} & \num{1.00} & \num{0.47} & \num{1.06} & \textbf{7.29} & \textbf{6.88} & \num{1.00} &    \num{1.01} & \textbf{1.32} & \num{1.00} \\
\num{13} &   \num{20} & -- & -- & -- & -- & -- & -- & -- & -- & -- & -- & \num{1.07} & \num{1.08} & \num{0.49} & -- & -- & -- & -- & -- & -- & -- \\
\num{13} &   \num{40} & \num{1.01} &    \num{1.19} &    \num{1.02} &    \num{1.17} & \num{0.47} & \num{1.25} & \textbf{6.22} & \textbf{1.25} & \textbf{6.54} & \num{0.55} & \num{1.00} & \num{1.00} & \num{0.47} & -- & -- & -- & -- & -- & -- & \num{1.00} \\
\num{13} &   \num{80} & \num{1.01} & \textbf{1.37} &    \num{1.02} & \textbf{1.35} & \num{0.47} & \num{1.12} & \textbf{2.31} &    \num{1.13} & \textbf{2.69} & \num{0.52} & \num{1.00} & \num{1.00} & \num{0.47} & \num{1.21} & \textbf{4.91} & \textbf{4.05} & \num{1.07} & \textbf{2.58} & \textbf{3.17} & \num{1.00} \\
\num{13} &  \num{160} & \num{1.01} & \textbf{1.37} &    \num{1.02} & \textbf{1.35} & \num{0.47} & \num{1.00} &    \num{1.08} &    \num{1.00} & \textbf{1.41} & \num{0.49} & \num{1.00} & \num{1.00} & \num{0.47} & \num{1.21} & \textbf{5.43} & \textbf{4.47} & \num{1.00} &    \num{1.06} & \textbf{1.38} & \num{1.00} \\
\num{13} &  \num{320} & \num{1.01} & \textbf{1.37} &    \num{1.02} & \textbf{1.35} & \num{0.47} & \num{1.00} &    \num{1.01} &    \num{1.00} & \textbf{1.32} & \num{0.49} & \num{1.00} & \num{1.00} & \num{0.47} & \num{1.11} & \textbf{4.91} & \textbf{4.42} & \num{1.00} &    \num{1.04} & \textbf{1.36} & \num{1.00} \\
\num{13} &  \num{640} & \num{1.01} & \textbf{1.38} &    \num{1.02} & \textbf{1.36} & \num{0.47} & \num{1.00} &    \num{1.00} &    \num{1.00} & \textbf{1.31} & \num{0.49} & \num{1.00} & \num{1.00} & \num{0.47} & \num{1.21} & \textbf{8.23} & \textbf{6.78} & \num{1.00} &    \num{1.01} & \textbf{1.32} & \num{1.00} \\
\num{13} & \num{1280} & \num{1.01} &    \num{1.20} &    \num{1.02} &    \num{1.18} & \num{0.47} & \num{1.00} &    \num{1.00} &    \num{1.00} & \textbf{1.31} & \num{0.49} & \num{1.00} & \num{1.00} & \num{0.47} & -- & -- & -- & \num{1.00} &    \num{1.01} & \textbf{1.32} & \num{1.00} \\
\hline
\end{tabular}
\caption{
Peak memory usage during locating, per $k$ and $m$, for Hamming and edit distance.
$\iM=$ Move-rb, $\iB=$ b-move and $\iR=$ br-index.
For an index $X$, $\isz{X}$ is its size and $\pk{X}$ its peak memory usage while locating with CIGAR output, \emph{including} the reported occurrences and their CIGARs.
$\pks{X}$ is the peak without CIGAR output; br-index cannot compute CIGARs, so it is compared against Move-rb without CIGAR output as well. An en dash denotes an unavailable measurement.
}
\label{tab:mem}
\end{table*}

In br-index's and b-move's APM algorithms, information about the currently matched position in $P$ is recomputed at each character extension.
Our algorithm computes this information only once for each position and search, thus reducing computational overhead.
Both br-index's and b-move's Hamming distance APM algorithms implement the search procedure recursively.
Our algorithm is implemented completely iteratively.
This reduces overhead from function calls and improves cache locality.

For Hamming distance, the CIGAR string of each occurrence is \texttt{``mM''}, indicating $m$ matches/mismatches, so CIGARs are of no use here.
In contrast to b-move and columba, which only support CIGAR output, our algorithm can also compute MD tags, which encode mismatches and their replacement characters.
We compute MD tags in the following way:
During the matching phase, we maintain a stack storing the currently mismatched characters and their positions in the pattern in the order in which they were matched by the algorithm.
Whenever we emit an $\SA$-interval, we construct an MD tag from the stack.

\paragraph{Results.}
\Cref{tab:mem} shows that deduplicating the reported $\SA$-intervals using a hash set also reduces peak memory usage.
Throughout, $\peak$ denotes the index plus the working memory of the search (\Cref{tab:mem}).
For Hamming distance, the peak memory relative to the index size, $\peak/\idx$, has geometric mean $1.03$ for \texttt{move-rb} and $1.35$ for \texttt{b-move} over the two texts \texttt{b-move} supports, with maxima $1.25$ and $6.22$, both at $k = 13$ and $m = 40$ on chr19.
\texttt{move-rb}'s working memory thus stays small throughout (at most $25\%$ of its index size), whereas \texttt{b-move}'s can grow to several times its index: \texttt{b-move} enumerates duplicate $\SA$-intervals, which inflates its intermediate occurrence array, whereas \texttt{move-rb} deduplicates them with a hash set during the matching phase.
Together with \texttt{b-move}'s $1.3\times$ larger index on chr19 (see \Cref{tab:texts}), this makes its peak memory $1.5\times$ that of \texttt{move-rb} in the geometric mean, and $6.5\times$ at most.
Without CIGAR output, \texttt{br-index}'s peak memory is $0.49\times$ that of \texttt{move-rb} in the geometric mean, close to the ratio of their index sizes (see \Cref{tab:texts}).

\subsection{Edit Distance}
The matching phase of b-move's (and columba's) edit distance APM algorithm is sophisticated.
It uses optimizations from the papers \cite{columba_dynamic_partitioning, min_u, columba_in_text_verification}.
According to \cite{columba_dynamic_partitioning}, it reports only a subset of all possible occurrences $\Occ_\edit(T, P, k) = \{(p, l, e) \mid d_\edit(P, T[p, p + l - 1]) = e \leq k\}$.
We adopt the definition of \emph{redundant} occurrences from \cite{columba_dynamic_partitioning}: An occurrence is considered redundant when another match of equal or smaller edit distance lies within $2k+1$ positions.
Thus, we want to compute a set $\Occ$ with the following \emph{coverage} property: for each $(p, l, e) \in \Occ_\edit(T, P, k)$, there is a $(p', l', e') \in \Occ$ with $|p' - p| \leq 2k+1$ and $e' \leq e$.
The matching phase of \cite{columba_dynamic_partitioning} already reports occurrences that satisfy coverage (see \Cref{lem:matching_phase}), except for those clipped at the boundaries of $T$. A subsequent filtering step, however, removes some of these reported occurrences in a way that breaks coverage.

Our algorithm builds upon b-move's algorithm.
It runs essentially the same search as b-move, but filters the occurrences in a coverage-preserving manner (see \Cref{sct:coverage}), and it finds occurrences clipped at the boundaries of $T$ (see \Cref{sct:boundary_occurrences}).
On request, it reports a CIGAR alignment together with the exact edit distance of each occurrence (see \Cref{sct:cigars}); to do so without access to the full text $T$, it keeps each context's matched string compressed in $\Oh(k)$ space (see \Cref{sct:matched_strings}) rather than $\Oh(m)$ space like b-move.
Our algorithm further improves on b-move by employing several minor optimizations, such as simpler code, fewer arithmetic operations, fewer function calls and fewer function parameters.
However, the main performance optimization of our algorithm is that we avoid enumerating duplicate and nested reported $\SA$-intervals (see \Cref{sct:duplicate_nested_sa_intervals}).
We first recall search schemes and b-move's algorithm before detailing these improvements.

\subsubsection{Search schemes}\label{sct:search_schemes}
Both b-move and our algorithm match a pattern approximately within the framework of search schemes~\cite{search_schemes}, which we recall briefly.

A search scheme $\mathcal{S} = \{S_1, \ldots, S_N\}$ consists of a set of searches.
Each search $S_j = (\pi, L, U)$ splits the pattern into the same number $t = |\pi| = |L| = |U|$ of parts $P_1\cdots P_t$, and specifies that in the $i$-th iteration,
part $P_{\pi[i]}$ is matched, and only those contexts with at least $L[i]$ and at most $U[i]$ errors are continued in the next iteration.
Let $O = O_1\cdots O_t$ be an approximate occurrence (string) of $P$ in $T$, where $O_i$ denotes the part aligned to $P_i$.
Then we call the array $e[1..t]$ the error configuration of $O$ w.r.t.\ $P$, where $e[i]$ is the cumulative number of errors up to the $i$-th part (in matching order), i.e.,\ $e[i] = \sum_{j = 1}^id(O_{\pi[j]}, P_{\pi[j]})$.
We call a search valid iff it satisfies the connectivity property, which ensures that in each iteration, the part $P_{\pi[i]}$ to match is directly adjacent to the already matched parts.
More formally, it holds $\forall i \in [2, t] : \pi[i] \in \{\min \pi[1, i - 1] - 1, \max \pi[1, i - 1] + 1\}$.
Finally, we call a search scheme lossless iff it covers every possible error configuration.
We assume a lossless search scheme throughout.
See \cite{search_schemes} for an overview of state-of-the-art search schemes.

\subsubsection{b-move's algorithm}\label{sct:bmove_algorithm}
b-move executes the searches of the scheme, then enumerates all reported $\SA$-intervals and filters the resulting occurrences.

\paragraph{Executing a search.}
b-move extends the currently matched substring character by character in the part's direction, branching over the up to $\sigma$ possible extensions and maintaining, per part, a banded edit-distance matrix that bounds the accumulated errors and prunes an extension once the bound is exceeded.
The $\SA$-intervals of the substrings that match a part within the error bound are continued in the next part, and those surviving after the last part report their $\SA$-intervals.
More precisely, a surviving context reports the occurrences of the \emph{cluster centers} of the final column of its banded matrix, i.e.,\ those cells that are local minima of the column's values~\cite{columba_dynamic_partitioning}; every other cell is dominated by a neighboring cell and is not reported.
Our algorithm executes the search in the same way, so we do not repeat its details; the differences described below concern only how the reported occurrences are represented, enumerated and filtered, and how occurrences at the text boundaries are recovered.

\paragraph{Occurrence enumeration and filtering.}
Let $I = \{[p_1, q_1], \ldots, [p_N, q_N]\}$ be the multiset of $\SA$-intervals reported across all searches of the scheme.
b-move enumerates the $\SA$-values of every interval in $I$ and filters the resulting occurrences in 2 stages:

Stage 1: Occurrences with the same starting position are reduced to one.
For each position, the occurrence with minimum edit distance has priority, and ties are broken in favor of the shortest occurrence.

Stage 2: The occurrences are processed in ascending order of their starting position while a separate array of kept occurrences is maintained.
The current occurrence is appended to the output array, unless it lies within $2k$ positions of the most recently kept occurrence, in which case only the better of the two is kept (the one with the smaller edit distance, or with equal edit distance and shorter length).

During the filtering, Stage 1 alone does not violate coverage.
However, Stage 2 does.
Since it compares the current occurrence only with the most recently kept one, the kept occurrence can drift arbitrarily far to the right: each discard is justified only against its immediate predecessor.
For example, let $k \geq 2$ and consider three reported occurrences $o_1 = (1, l, 2)$, $o_2 = (2k + 1, l, 1)$ and $o_3 = (4k + 1, l, 0)$, in the notation $(p, l, e)$ of $\Occ_\edit$.
Processing them left to right, $o_2$ replaces $o_1$ (they are $2k$ apart and $o_2$ has the smaller edit distance), and then $o_3$ replaces $o_2$ for the same reason, so only $o_3$ remains.
But now no kept occurrence lies within $2k+1$ positions of $o_1$, since $|(4k + 1) - 1| = 4k > 2k+1$, so coverage is violated.

\subsubsection{Ensuring the coverage condition}\label{sct:coverage}
Our algorithm performs Stage 1 implicitly through the sequence $E$ of \Cref{sct:duplicate_nested_sa_intervals}, and replaces b-move's incorrect Stage 2 by a corrected, two-sided variant, where each discarded occurrence is compared against a kept occurrence on \emph{each} side, rather than only against its predecessor.
Let $o_1$ and $o_2$ be the last two kept occurrences, with positions $p_1 < p_2$ and edit distances $d_1$ and $d_2$, and let $o_3$ (position $p_3$, edit distance $d_3$) be the occurrence currently processed.
We discard the middle occurrence $o_2$ whenever it is dominated on both sides, i.e.,\ $\dom(o_1, o_2, o_3) = (d_1 \leq d_2) \land (d_3 \leq d_2) \land (p_3 - p_1 \leq 2k+3)$ holds; this test is repeated, so several consecutive occurrences may be discarded before $o_3$ is appended.

The matching phase reports only the cluster centers of each context (see \Cref{sct:bmove_algorithm}), hence not an occurrence at \emph{every} position of $\Occ_\edit(T, P, k)$. It does, however, still cover $\Occ_\edit(T, P, k)$, and with the small radius $k$ that we rely on below.

\begin{lemma}[{\cite[Lemma~1]{columba_dynamic_partitioning}}]\label{lem:matching_phase}
    For each $(p, l, e) \in \Occ_\edit(T, P, k)$, the matching phase reports an occurrence $(p', l', e')$ with $e' \leq e$ and $|p' - p| \leq e - e' \leq k$.
\end{lemma}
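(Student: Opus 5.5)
Given $(p,l,e)\in\Occ_\edit(T,P,k)$, I will follow the banded alignment matrix along the search that is responsible for this occurrence. Then I will argue that the column from which a cluster center is reported has values that change by at most one between neighbouring cells.

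The first step is to pick the search. The search scheme is lossless, so the error configuration of the alignment of $P$ to $T[p,p+l-1]$ is covered by some search $S_j=(\pi,L,U)$. Because the cumulative error counts respect every $L[i]$ and $U[i]$, none of the prefix contexts along this alignment is pruned. I will check this by induction over the parts in matching order. The context that matches the part currently being extended keeps the substring of $T$ adjacent to the occurrence inside its banded matrix. As a result, the last column of the surviving context's matrix has a cell whose value is at most $e$. That cell corresponds to the start position $p$ in text coordinates, or to the end position, depending on the final direction.

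The second step is to descend to a cluster center. In an edit-distance DP column, consecutive cells differ by at most $1$: one can insert or delete a single text character at the boundary and adjust the alignment accordingly. I start at the cell for $p$, whose value $e_0\le e$, and move to a neighbouring cell with strictly smaller value for as long as one exists. Every move lowers the value by exactly $1$, and values are nonnegative. So the walk ends at a local minimum, which is a reported cluster center $(p',l',e')$ with $e'\le e_0$. Its distance from $p$ is $|p'-p|\le e_0-e'\le e-e'$. Since $e\le k$ and $e'\ge 0$, this also gives $|p'-p|\le k$.

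The step I expect to be the main obstacle is matching the cell indices of the last column with text start positions when the final part is matched in the right-extension direction rather than the left. In that direction the column indexes end positions, so I first have to reduce to the start-position case. For this I plan to use that the SA-interval already fixes the starting position of the matched substring. Neighbouring cells then differ only in $l$ along the free end, and the same monotone walk argument goes through with $p'$ shifted consistently. Ties between equal neighbouring values, which form plateaus in the local-minimum definition, also need care. I would settle this by reporting the leftmost cell of each plateau, so that the walk still ends at a reported cell.
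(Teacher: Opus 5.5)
The paper does not prove this lemma; it imports it from Renders et al. Your skeleton is the natural argument: fix a search covering the error configuration of an optimal alignment, then walk downhill along the final column of the last part, using that adjacent cells differ by at most one. The obstacle you anticipate is vacuous. If the last part is matched rightwards, all cells of that column share the start position, so $p'=p$; if it is matched leftwards, each step shifts $p$ by exactly one.

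The genuine gap is your claim that the walk ends at a cell reported by the search you fixed. A search keeps, and finally reports, only contexts whose error count lies in $[L[t],U[t]]$, and lossless schemes routinely have $L[t]>0$. For $k=1$, the scheme $\{(12,00,01),(21,01,01)\}$ has $L[2]=1$ in its second search, whose last part is matched leftwards. Since the walk only changes the last piece, it can stop at a value $v<L[t]$, which this search discards.

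The repair is strong induction on $e$:
\begin{enumerate}
\item If $v\ge L[t]$, the stop cell is reported with $e'=v$ and $|p'-p|\le e-v$.
\item If $v<L[t]\le e$, the stop cell represents some $(\hat p,\hat l,\hat e)\in\Occ_{\edit}(T,P,k)$ with $\hat e\le v<e$. The cell value only upper-bounds the true distance, because the part boundaries are fixed along the search path. The induction hypothesis, applied in whichever search covers that occurrence, yields a reported $(p',l',e')$ with $e'\le\hat e$ and $|p'-\hat p|\le\hat e-e'$. Hence $|p'-p|\le(\hat e-e')+(e-v)\le e-e'$.
\end{enumerate}
This composition is exactly why the bound is stated in the additive form $e-e'$.

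Your plateau remedy is a second, actual error. You cannot choose what the matching phase reports, and reporting only the leftmost cell of a plateau would in general violate the lemma. Plateaus of minima do occur. For example, a part $AB$ matched leftwards against the pieces $B$, $XB$, $AXB$ of a text $\ldots AXB$ yields the value $1$ for all three, while the empty piece and any longer piece yield at least $2$. Under your rule, the cell for $B$ would have its nearest reported cell at distance $2$ with the same error $e'=e=1$, whereas the lemma then requires $p'=p$.

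What the argument needs instead is that the cluster centers are exactly the cells without a strictly smaller neighbour; as the paper puts it, every other cell is dominated by a neighbouring cell. Your walk halts precisely at such a cell, so that cell is reported, and ties need no special treatment.
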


\begin{theorem}\label{thm:coverage}
    The occurrences reported by our algorithm satisfy the coverage condition.
\end{theorem}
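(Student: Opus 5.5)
Fix an arbitrary $(p, l, e) \in \Occ_\edit(T, P, k)$. The plan is to chain three steps: \Cref{lem:matching_phase} gives a nearby reported occurrence, Stage~1 via $E$ keeps a representative at the same position, and the two-sided filter preserves a dominating occurrence nearby. Occurrences clipped at the boundaries of $T$ are handled separately by \Cref{sct:boundary_occurrences}. We assume that mechanism supplies the same guarantee as \Cref{lem:matching_phase}, i.e., a reported $(p', l', e')$ with $e' \leq e$ and $|p' - p| \leq e - e'$.

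First, \Cref{lem:matching_phase} yields a reported $o' = (p', l', e')$ with $e' \leq e$ and $|p'-p| \leq e - e'$. Stage~1 keeps, at position $p'$, an occurrence with edit distance at most $e'$, so we may assume $o'$ itself enters the two-sided filter. If $o'$ survives, we are done, since $|p'-p| \leq k \leq 2k+1$.

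Otherwise $o'$ is discarded as the middle element of some test $\dom(o_1, o', o_3)$. Then $d_1, d_3 \leq e'$, $p_1 < p' < p_3$, and $p_3 - p_1 \leq 2k+3$. I then prove an invariant by induction over the processing order: for every discarded occurrence $o$ at position $q$ with distance $d$, and for each side, there is a \emph{finally kept} occurrence on that side with distance at most $d$ within distance $D(o)$ of $q$.

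The naive chaining bound grows, so this is the main obstacle. A kept neighbour $o_1$ may itself be discarded later, which pushes the witness further left. The fix I would try first is to track the gap spanned. When $o_2$ is discarded, its neighbours $o_1$ and $o_3$ become adjacent in the kept list. Whenever a later discard removes $o_1$, the new neighbours again span at most $2k+3$. Hence at any time, every discarded $o$ lies strictly between two consecutive kept occurrences $a < b$ with $b - a \leq 2k+3$ and $d_a, d_b \leq d_o$. To establish this, I would show inductively that each pair of kept neighbours bracketing discarded elements was certified by one $\dom$ test, and that the distance bounds carry over because the discarded neighbour had distance at most $d_o$. Consequently $\min(q - a, b - q) \leq k+1$.

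Finally, combine the two bounds. With $q = p'$, some kept $o^*$ has distance at most $e' \leq e$ and $|p^* - p'| \leq k+1$. Therefore $|p^* - p| \leq (k+1) + (e - e')$. To reach $2k+1$, use $e - e' \leq k$ when $o'$ survives. In the discarded case, use $e' \geq d^*$ together with $e - e' \leq e - d^* \leq k$. If this gives only $2k+1$ with a slack issue at the boundary, I would sharpen the bracket bound using the strict inequalities $p_1 < p' < p_3$, which give $\min \leq k+1$ only when the span is exactly $2k+3$ and $p'$ is centred, and absorb the remaining unit there.
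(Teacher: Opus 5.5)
Your proposal is correct and follows essentially the paper's proof: \Cref{lem:matching_phase} plus Stage~1 supplies $o'$, and the bracket invariant yields a kept occurrence with distance at most $e'$ within $\floor{(2k+3)/2} = k+1$ of $p'$ (a discarded occurrence always lies strictly between two neighbours that span at most $2k+3$ and have distance at most $e'$, and every later replacement of a neighbour is certified by a $\dom$-test). Your closing hedge about slack is unnecessary, since $(k+1) + (e - e') \leq 2k+1$ is exactly the coverage bound.
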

\begin{proof}
    Fix $(p, l, e) \in \Occ_\edit(T, P, k)$.
    By \Cref{lem:matching_phase}, Stage 2 receives an occurrence $o = (p', l', e')$ with $e' \leq e$ and $|p' - p| \leq e - e' \leq k$.
    We keep an occurrence, say at position $p''$, within $k+1$ of $p'$ and of edit distance at most $e'$; the coverage condition then follows from $|p'' - p| \leq (e - e') + (k+1) \leq 2k+1$.

    If $o$ is kept, we are done.
    Otherwise, $o$ is discarded by some $\dom$-test, with the last kept occurrence $a$ on its left and the currently processed occurrence $b$ on its right (see \Cref{alg:coverage_filter}).
    The test gives $a.p < p' < b.p$, $b.p - a.p \leq 2k+3$ and $a.d, b.d \leq e'$.
    Each later discard of $a$ or $b$ is again a $\dom$-test; it replaces the discarded occurrence by its neighbor on the same side, which is farther from $p'$ and of edit distance at most $e'$, and keeps $b.p - a.p \leq 2k+3$.
    When the sweep ends, $a$ and $b$ are both kept, so the nearer of the two is within $\floor{(2k+3)/2} = k+1$ of $p'$ and of edit distance at most $e'$.
\end{proof}

\Cref{alg:coverage_filter} realizes this rule in a single left-to-right sweep.
It receives the reported occurrences in an array $O[1..M]$, sorted by ascending starting position and already reduced by Stage 1 (see \Cref{sct:duplicate_nested_sa_intervals}); each occurrence $o = \langle p, l, d \rangle$ exposes its starting position $o.p$, length $o.l$ and edit distance $o.d$.
The algorithm keeps the surviving occurrences as a prefix $O[1..w]$: before appending the current occurrence $O[i]$, it repeatedly discards the last kept occurrence $O[w]$ as long as it is dominated by its two neighbors $O[w-1]$ and $O[i]$.
Since $w \leq i$ holds whenever we write to $O[w]$, each such write overwrites an already-processed entry of $O$; hence the algorithm runs in-place, i.e.,\ no separate output array is created.
This reduces running time and memory usage, especially when locating short patterns with many occurrences, where the output occurrence array accounts for most of the peak memory.

\begin{algorithm}[!h]
\caption{$\mathsf{Filter\text{-}In\text{-}Place}(O[1..M], k)$}\label{alg:coverage_filter}
\begin{algorithmic}[1]
\STATE $w \gets 0$;
\FOR{$i \gets 1$ \textbf{to} $M$}
    \WHILE{$w \geq 2 \land \dom(O[w-1], O[w], O[i])$}
        \STATE $w \gets w - 1$;
    \ENDWHILE
    \STATE $w \gets w + 1$;
    \STATE $O[w] \gets O[i]$;
\ENDFOR
\RETURN $O[1..w]$;
\end{algorithmic}
\end{algorithm}

\subsubsection{Recovering occurrences clipped at text boundaries}\label{sct:boundary_occurrences}
b-move's algorithm misses occurrences of $P$ whose alignment deletes a prefix of $P$ before position $1$ of $T$ or a suffix of $P$ after position $n - 1$.
We recover these occurrences directly.
Using the bidirectional index, we decode $T[1, w]$ using the backward index and $T[n - w, n - 1]$, where $w = \min(n - 1, m + k)$, using the forward index, applying the standard BWT inversion algorithm with $\LF$.
We then align $P$ against $T[1, w]$ and, symmetrically, $\bwd{P}$ against $\bwd{T}[1, w] = \bwd{T[n - w, n - 1]}$, using a banded edit-distance matrix, and report the best occurrence starting at position $1$ and the best one ending at position $n - 1$.
Since all occurrences clipped at the start begin at position $1$, and all clipped at the end lie within $2k+1$ positions of each other, the two reported occurrences suffice for coverage.

\subsubsection{Maintaining the matched strings in compressed space}\label{sct:matched_strings}
In both the b-move algorithm and our algorithm, an occurrence's length, its edit distance and its CIGAR alignment are all derived from the \emph{matched string} $S$: the string that the search contexts represent.
The error count accumulated during the search is the sum of the per-part alignment costs; it upper-bounds, but need not equal, $d_\edit(P, S)$, and $S$ may even be longer than the optimal alignment, i.e.,\ a prefix of $S$ may yield an alignment with a smaller edit distance.
Before enumerating the occurrences of a reported context, we therefore align $P$ against the prefix $S[1, l]$ of its matched string $S$ that minimizes $d_\edit(P, S[1, l])$, using a banded edit-distance matrix.
This yields the length $l$ and the exact edit distance $d = d_\edit(P, S[1, l])$; we call $(d, l)$ the \emph{distance-length tuple} of the context.

\paragraph{RLZ-compressed matched strings.}
b-move stores each context's matched string as a plain array of characters, copied whenever a part is entered and appended to at every extension, requiring $\Oh(m)$ time and space and a heap allocation per extension.
We instead store the matched strings compressed using relative Lempel--Ziv (RLZ) with respect to $P$.
Since each occurrence deviates from $P$ in at most $k$ positions, its matched string decomposes into $\leq k+1$ RLZ phrases w.r.t.\ $P$.
A context $A$ extending some context $B$ extends $B$'s phrases (to the left or right) using a reference pointer to $B$'s most recently added phrase.
Thereby, each extension adds only $\Oh(1)$ extra space.

\paragraph{Discarding unused phrases.}
Since a context references the phrases of the context it extends, a phrase is shared by all contexts that descend from the one which created it.
As the search creates and prunes contexts constantly, we must free a phrase as soon as it is not used by any context; else the space would grow with the total number of contexts explored, instead of with the number of contexts kept simultaneously.
However, we cannot free a phrase when the context that created it is pruned, because its descendants may still reference it.
Reference counting detects the right moment to discard a phrase:
Once a phrase is no longer referenced, it recursively decrements the reference counters of all phrases it consists of by following their chain of reference pointers.

\paragraph{Storing the phrases in a pool.}
By storing all phrases in a pool with a free-list, we have two advantages:
First, we can implement reference pointers as indices into the pool, thus saving space.
Second, no phrase incurs a heap allocation: freeing a phrase only marks its slot in the pool as unused, and every new phrase reuses such a slot, both in $\Oh(1)$ time.

\paragraph{Resulting space.}
The overall space occupied by the phrases is $\Oh(c_{\mathsf{max}} k)$, where $c_{\mathsf{max}}$ is the maximum number of simultaneously kept contexts, which is inherent to the algorithm and already optimized to be low.
In b-move's algorithm, the space occupied by the matched strings is $\Oh(c_{\mathsf{max}} m)$, and each extension incurs a heap allocation.

\subsubsection{Avoiding enumeration of duplicate and nested SA-intervals}\label{sct:duplicate_nested_sa_intervals}
b-move enumerates the $\SA$-values of every interval in $I$ separately before filtering the resulting occurrences.
Since the intervals in $I$ may coincide or be nested within one another, the same $\SA$-value is enumerated several times, and the subsequent filtering re-examines each copy.
We eliminate both sources of redundant work with the following two optimizations.

\paragraph{Optimization 1: Avoiding redundant $\SA$-interval enumerations.}
It suffices to enumerate the \emph{fundamental} intervals, i.e.,\ the distinct intervals in $I$ that are not properly contained in any other interval in $I$.

\paragraph{Optimization 2: Avoiding redundant occurrence filtering.}
Stage 1 of the occurrence filtering does not have to be performed explicitly:
Let $S_1, \ldots, S_N$ be the matched strings corresponding to the intervals in $I$, and let $(d_i, l_i)$ be the distance-length tuple of $S_i$ (see \Cref{sct:matched_strings}).
Let $f$ be a function that maps each position in an interval in $I$ to its minimum distance-length tuple, i.e.,\ $f(i) = \min \{(d_j, l_j) \mid j \in [1, N] \land i \in [p_j, q_j]\}$, where $i \in \cup \ I$.
Consider the sequence $E = \langle p'_1, q'_1, d'_1, l'_1 \rangle, \ldots, \langle p'_M, q'_M, d'_M, l'_M \rangle$, where each interval $[p'_i, q'_i]$ is maximal s.t.\ $\forall j \in [p'_i, q'_i] : f(j) = (d'_i, l'_i)$ and $p'_1 < \cdots < p'_M$.
Intuitively, $E$ stores the maximal regions in $I$, where the occurrences have the same (minimum) distance-length tuples.
If we have $E$, then we can implicitly perform filtering Stage 1 while enumerating the fundamental intervals, sorted by their starting positions:
While enumerating a fundamental $\SA$-interval at position $i$, we keep track of the interval $[p'_j, q'_j] \ni i$ in $E$ and report $\SA[i]$ with distance $d'_j$ and length $l'_j$.
$E$ can be computed with a sweep-line algorithm in $\Oh(N \log N)$ time and $\Oh(N)$ space.
The sweep line is the current position $j$ in $\SA$, and the state consists of $f(j)$ and a stack containing all intervals in $I$ that contain $j$.

\subsubsection{Computing CIGAR strings and MD tags}\label{sct:cigars}
For each reported context, its CIGAR is obtained by tracing back the very banded matrix (see \Cref{sct:matched_strings}) that computes its distance-length tuple $(d, l)$, so a single dynamic program yields both the exact distance and the alignment, a sequence of match, mismatch, insertion and deletion operations.
Since all occurrences reported by one context share the same matched string, the alignment is computed once per reported $\SA$-interval rather than once per occurrence.
We store it the same way: our algorithm returns one CIGAR per reported $\SA$-interval in a single array and keeps with each occurrence only an index into this array, so a CIGAR occupies space once per interval rather than once per occurrence.
columba and b-move instead attach a full copy of the CIGAR to every occurrence, duplicating it across all occurrences of the same interval; the more occurrences a pattern has, the more this inflates their output (which is included in $\pk{(\cdot)}$ in \Cref{tab:mem}).
While columba and b-move are not capable of computing the SAM MD tag, we store with each mismatch and deletion the corresponding pattern/text character, so that the aligned string and its SAM MD tag can be derived.

\paragraph{Results.}
Recall that \texttt{b-move} stores the matched strings uncompressed in $\Oh(c_{\mathsf{max}} m)$ space (see \Cref{sct:matched_strings}) and stores and enumerates duplicate and nested $\SA$-intervals (see \Cref{sct:duplicate_nested_sa_intervals}), whereas our algorithm needs $\Oh(c_{\mathsf{max}} k)$ space for the matched strings, enumerates only the fundamental intervals and filters the occurrences in place (\Cref{alg:coverage_filter}).
The edit-distance columns of \Cref{tab:mem} show the effect.

\texttt{move-rb}'s peak memory usage stays close to its index size: over the two texts \texttt{b-move} supports, the ratio $\peak/\idx$ has geometric mean $1.10$ and reaches at most $1.97$, i.e.,\ the working memory is typically $10\%$ of the index; on dewiki it is typically below $1\%$ ($1.003$ in the geometric mean, at most $1.04$).
For \texttt{b-move}, the same ratio is $2.37$ in the geometric mean and reaches $8.24$.
Together with \texttt{b-move}'s $1.3\times$ larger index on chr19 (see \Cref{tab:texts}), this makes its peak memory $2.5\times$ that of \texttt{move-rb} in the geometric mean, at least $1.3\times$, and up to $7.4\times$ on sars2 for $k = 10$ and $m = 640$.

Two trends are visible.
First, the factor grows with $k$ (from $2.1\times$ at $k = 4$ to $2.6$--$2.8\times$ for $k \geq 7$), because a larger error bound keeps more contexts alive simultaneously, and \texttt{b-move} pays $\Oh(m)$ space per context where we pay $\Oh(k)$.
Second, it is much larger on sars2 ($3.6\times$ in the geometric mean) than on chr19 ($1.7\times$), because sars2 has 3 OoM more occurrences per pattern (see \Cref{tab:patterns_locate_apm_edit}), so \texttt{b-move}'s repeated enumeration of duplicate and nested $\SA$-intervals inflates its intermediate occurrence array the most there.
The maximum of $7.4\times$ combines both effects, as it occurs at a large $k$ and a large $m$ on sars2.

\section{Raw Index Performance}\label{sct:raw_index_performance}
\begin{figure*}[t]
\centering
\begin{subfigure}[b]{.31\textwidth}
\hspace*{-0.4cm}
\begin{tikzpicture}[marks]
\begin{axis}[
    querystyle_raw_performance,
    xticklabels={},
    title={sars2},
    ylabel={extension thr. [$1/s$]},
    y label style={at={(0.05,0.5)}},
    xmode=log,
    ymode=log,
    xlabel={\vphantom{pattern length}}
]

\addplot[brindex] coordinates { (10,106347) (20,103544) (40,105505) (80,111336) (160,116834) (320,122866) (640,131476) (1280,145805) };
\addlegendentry{algo=count\_br\_index};
\addplot[columba] coordinates { (10,5.4612e+06) (20,2.51915e+06) (40,2.16882e+06) (80,1.8174e+06) (160,1.56849e+06) (320,1.34705e+06) (640,1.16334e+06) (1280,1.09218e+06) };
\addlegendentry{algo=count\_columba};
\addplot[bmove] coordinates { (10,716370) (20,667508) (40,651113) (80,640639) (160,630945) (320,630673) (640,657086) (1280,748013) };
\addlegendentry{algo=count\_columba\_rlc};
\addplot[moverb] coordinates { (10,1.50716e+06) (20,1.16072e+06) (40,1.13085e+06) (80,1.14412e+06) (160,1.186e+06) (320,1.24731e+06) (640,1.33569e+06) (1280,1.49866e+06) };
\addlegendentry{algo=count\_move\_rb\_move};

\legend{};
\end{axis}
\end{tikzpicture}
\end{subfigure}
\begin{subfigure}[b]{.31\textwidth}
\begin{tikzpicture}[marks]
\begin{axis}[ymin=1e5,ymax=1e7,querystyle_raw_performance,
    xticklabels={},
    title={chr19},
    ylabel={\phantom{p}},
    xmode=log,
    ymode=log,
    xlabel={\vphantom{pattern length}}
]

\addplot[brindex] coordinates { (10,164515) (20,180177) (40,243478) (80,318880) (160,384117) (320,427749) (640,455063) (1280,472666) };
\addlegendentry{algo=count\_br\_index};
\addplot[columba] coordinates { (10,4.42761e+06) (20,2.15597e+06) (40,1.84222e+06) (80,1.76652e+06) (160,1.73629e+06) (320,1.73873e+06) (640,1.74286e+06) (1280,1.75064e+06) };
\addlegendentry{algo=count\_columba};
\addplot[bmove] coordinates { (10,591119) (20,791046) (40,1.22849e+06) (80,1.80732e+06) (160,2.42636e+06) (320,2.94103e+06) (640,3.31331e+06) (1280,3.53677e+06) };
\addlegendentry{algo=count\_columba\_rlc};
\addplot[moverb] coordinates { (10,1.71764e+06) (20,1.64447e+06) (40,2.17938e+06) (80,2.87845e+06) (160,3.48916e+06) (320,3.93049e+06) (640,4.22318e+06) (1280,4.40903e+06) };
\addlegendentry{algo=count\_move\_rb\_move};

\legend{};
\end{axis}
\end{tikzpicture}
\end{subfigure}
\begin{subfigure}[b]{.31\textwidth}
\begin{tikzpicture}[marks]
\begin{axis}[ymin=1e4,ymax=1e7,querystyle_raw_performance,
    xticklabels={},
    title={dewiki},
    ylabel={\phantom{p}},
    xmode=log,
    ymode=log,
    xlabel={\vphantom{pattern length}}
]

\addplot[brindex] coordinates { (10,14857.5) (20,22309.5) (40,38635.5) (80,64560.6) (160,101199) (320,145609) (640,191995) (1280,238160) };
\addlegendentry{algo=count\_br\_index};
\addplot[moverb] coordinates { (10,310647) (20,505017) (40,863676) (80,1.37774e+06) (160,1.98237e+06) (320,2.57758e+06) (640,3.12342e+06) (1280,3.54133e+06) };
\addlegendentry{algo=count\_move\_rb\_move};

\legend{};
\end{axis}
\end{tikzpicture}
\end{subfigure}

\vspace{-1.35cm}
\begin{subfigure}[b]{.31\textwidth}
\hspace*{-0.4cm}
\begin{tikzpicture}[marks]
\begin{axis}[
    querystyle_raw_performance,
    xticklabels={},
    title={},
    ylabel={ext. + enum. thr. [$1/s$]},
    y label style={at={(0.05,0.5)}},
    xlabel={pattern length},
    xmode=log,
    ymode=log,
    title={\vphantom{Ig}}
]

\addplot[brindex] coordinates { (10,915315) (20,917937) (40,918297) (80,919292) (160,921903) (320,926433) (640,933304) (1280,927246) };
\addlegendentry{algo=locate\_br\_index};
\addplot[columba] coordinates { (10,3.14556e+07) (20,3.28118e+07) (40,3.28584e+07) (80,3.26389e+07) (160,3.21158e+07) (320,3.117e+07) (640,2.94662e+07) (1280,2.47202e+07) };
\addlegendentry{algo=locate\_columba};
\addplot[bmove] coordinates { (10,2.12638e+06) (20,2.1212e+06) (40,2.1206e+06) (80,2.13082e+06) (160,2.13777e+06) (320,2.15507e+06) (640,2.19682e+06) (1280,2.24603e+06) };
\addlegendentry{algo=locate\_columba\_rlc};
\addplot[moverb] coordinates { (10,7.14356e+06) (20,7.15541e+06) (40,7.14728e+06) (80,7.14705e+06) (160,7.12838e+06) (320,7.12797e+06) (640,7.08969e+06) (1280,6.93538e+06) };
\addlegendentry{algo=locate\_move\_rb\_move};
\addplot[moverbrlzsa] coordinates { (10,2.13429e+08) (20,2.12514e+08) (40,2.10573e+08) (80,2.09242e+08) (160,2.0407e+08) (320,1.95488e+08) (640,1.70454e+08) (1280,1.12239e+08) };
\addlegendentry{algo=locate\_move\_rb\_rlzsa};

\legend{};
\end{axis}
\end{tikzpicture}
\end{subfigure}
\begin{subfigure}[b]{.31\textwidth}
\begin{tikzpicture}[marks]
\begin{axis}[ymin=1e5,ymax=1e9,querystyle_raw_performance,
    title={},
    ylabel={\phantom{p}},
    xlabel={pattern length},
    xmode=log,
    ymode=log,
    title={\vphantom{Ig}}
]

\addplot[brindex] coordinates { (10,812380) (20,813890) (40,748472) (80,629913) (160,585132) (320,533508) (640,488867) (1280,470556) };
\addlegendentry{algo=locate\_br\_index};
\addplot[columba] coordinates { (10,2.08708e+07) (20,3.09203e+07) (40,2.58802e+07) (80,1.22141e+07) (160,7.65102e+06) (320,4.64134e+06) (640,2.95852e+06) (1280,1.95136e+06) };
\addlegendentry{algo=locate\_columba};
\addplot[bmove] coordinates { (10,1.92883e+06) (20,1.95496e+06) (40,1.89335e+06) (80,1.77426e+06) (160,1.83069e+06) (320,1.91574e+06) (640,2.11213e+06) (1280,2.56042e+06) };
\addlegendentry{algo=locate\_columba\_rlc};
\addplot[moverb] coordinates { (10,7.20505e+06) (20,7.33994e+06) (40,6.5964e+06) (80,5.59795e+06) (160,5.16698e+06) (320,4.64906e+06) (640,4.17185e+06) (1280,3.81075e+06) };
\addlegendentry{algo=locate\_move\_rb\_move};
\addplot[moverbrlzsa] coordinates { (10,2.07692e+08) (20,2.22262e+08) (40,7.49496e+07) (80,2.52268e+07) (160,1.59431e+07) (320,9.69483e+06) (640,6.25099e+06) (1280,4.60775e+06) };
\addlegendentry{algo=locate\_move\_rb\_rlzsa};

\legend{};
\end{axis}
\end{tikzpicture}
\end{subfigure}
\begin{subfigure}[b]{.31\textwidth}
\begin{tikzpicture}[marks]
\begin{axis}[ymin=1e5,ymax=1e9,querystyle_raw_performance,
    title={},
    ylabel={\phantom{p}},
    xlabel={pattern length},
    xmode=log,
    ymode=log,
    title={\vphantom{Ig}}
]

\addplot[brindex] coordinates { (10,1.62629e+06) (20,1.32196e+06) (40,1.04129e+06) (80,909264) (160,673630) (320,646709) (640,347041) (1280,267153) };
\addlegendentry{algo=locate\_br\_index};
\addplot[moverb] coordinates { (10,1.06087e+07) (20,1.20776e+07) (40,9.80729e+06) (80,9.37695e+06) (160,8.16238e+06) (320,8.76773e+06) (640,4.56583e+06) (1280,3.34999e+06) };
\addlegendentry{algo=locate\_move\_rb\_move};
\addplot[moverbrlzsa] coordinates { (10,2.30363e+08) (20,1.56146e+08) (40,6.46538e+07) (80,3.544e+07) (160,1.7199e+07) (320,1.35478e+07) (640,5.48432e+06) (1280,3.55035e+06) };
\addlegendentry{algo=locate\_move\_rb\_rlzsa};

\legend{};
\end{axis}
\end{tikzpicture}
\end{subfigure}

\vspace*{-0.35cm}

\centering
\begin{tikzpicture}[legendmarks]
\begin{axis}[legend,legend columns=5]
\addplot[brindex] coordinates { (0,0) };
\addlegendentry{\texttt{br-index}};
\addplot[bmove] coordinates { (0,0) };
\addlegendentry{\texttt{b-move}};
\addplot[columba] coordinates { (0,0) };
\addlegendentry{\texttt{columba}};
\addplot[moverb] coordinates { (0,0) };
\addlegendentry{\texttt{move-rb}};
\addplot[moverbrlzsa] coordinates { (0,0) };
\addlegendentry{\texttt{move-rb-rlzsa}};
\end{axis}
\end{tikzpicture}

\vspace*{-0.1cm}
\caption{Raw extension and $\SA$-interval enumeration throughput versus pattern length.}
\label{fig:raw_performance}
\end{figure*}
\Cref{fig:raw_performance} shows the raw extension and $\SA$-interval enumeration throughput.
\texttt{br-index} is the slowest index on every text, at every $m$ and for both workloads.

\paragraph{Extension.}
Going from $m = 10$ to $m = 1280$, \texttt{move-rb}'s throughput stays flat on sars2 (it varies by a factor of $1.3$ over all $m$), but grows by $2.6\times$ on chr19 and by $11\times$ on dewiki.
On sars2, the $\SA$-intervals stay large: even at $m = 1280$, a pattern still has $2.7 \cdot 10^5$ occurrences on average, whereas on chr19 they have collapsed to a few hundred (see \Cref{tab:patterns_count}).
This explains the extension-throughput trends summarized in \Cref{sct:extension_performance}.

\texttt{columba} cannot profit from a shrinking interval at all, as an FM-index extension costs the same regardless of the interval size; its throughput even decreases with $m$ (by $5\times$ on sars2 and $2.5\times$ on chr19).
It is therefore the fastest index for short patterns ($m \leq 320$ on sars2, $m \leq 20$ on chr19) and is overtaken by \texttt{move-rb} beyond that.

\paragraph{$\SA$-interval enumeration.}
Here, the throughput is governed by the number $occ$ of reported occurrences relative to the matching work $m N$.
\texttt{move-rb-rlzsa}'s advantage over \texttt{move-rb} is largest when $occ \gg m N$ (short patterns with many occurrences), because it decodes contiguous ranges of $\SA$ by copying from its reference, whereas $\MPhi$ and $\MPhiinv$ perform one $\move$-query per occurrence.
The advantage shrinks as $m$ grows and $occ$ falls: on chr19 and dewiki, where the occurrences per pattern drop by more than 3 OoM between $m = 10$ and $m = 1280$ (see \Cref{tab:patterns_locate}), it decays to 1.2$\times$ and 1.1$\times$, resp.; on sars2, where the intervals stay large, it stays at 16--30$\times$ throughout.

\texttt{columba} enumerates 3.6--4.6$\times$ faster than \texttt{move-rb} on sars2 at every $m$, and up to 4.2$\times$ faster on chr19 for $m \leq 160$; at $m = 320$ the two are tied, and for $m \geq 640$ \texttt{move-rb} is faster (by 2.0$\times$ at $m = 1280$).
\texttt{columba}'s direct $\SA$ access pays off exactly when there are many occurrences to report, and stops paying off once the extension phase accounts for most of the running time.
\texttt{columba}'s index is 15--28$\times$ larger than \texttt{move-rb-rlzsa}'s (see \Cref{tab:texts}), which nevertheless enumerates faster at every measured $m$.

\section{Raw Index APM Performance}\label{sct:raw_index_amp_performance}
\begin{figure*}[t]
\centering
\input{charts/apm-same-alg}
\caption{APM query throughput as a geometric mean over the results for $k \in \{4, 7, 10, 13\}$. All indexes use our optimized APM algorithms.}
\label{fig:apm_same_alg}
\end{figure*}
\Cref{fig:apm_same_alg} shows the APM query throughput when all indexes run \texttt{move-rb}'s APM algorithms, with each curve a geometric mean over $k \in \{4, 7, 10, 13\}$.
The algorithm is now fixed, so the remaining differences are due to the index data structures alone.

For counting w.r.t.\ Hamming distance, \texttt{move-rb} leads \texttt{br-index} by 13--58$\times$ (typ.\ 41$\times$) on sars2, 8--51$\times$ (typ.\ 28$\times$) on chr19 and 2.5--2.8 (typ.\ 2.7) OoM on dewiki, and it leads \texttt{b-move} by 2.2--10.5$\times$ (typ.\ 6.1$\times$).
For locating, it leads \texttt{br-index} by 6.6--44$\times$ (typ.\ 17$\times$) on the DNA texts and by 1.3--2.8 (typ.\ 2.4) OoM on dewiki, and \texttt{b-move} by 3.2--8.2$\times$ (typ.\ 4.6$\times$).
The leads over \texttt{br-index} and over \texttt{b-move} both grow with $m$ and with $k$: over \texttt{b-move}, from typ.\ 5.5$\times$ at $k = 4$ to typ.\ 8.1$\times$ at $k = 13$ for counting, and from typ.\ 4.3$\times$ to 5.8$\times$ for locating.
The reason is that a larger $k$ or $m$ means more extensions per pattern, so the per-extension cost of the index carries more weight; that cost is exactly what this experiment isolates.
Only the uncompressed \texttt{columba} is ever faster than \texttt{move-rb}: for counting only at $m = 10$ (by up to 1.9$\times$), and for locating only for $m \leq 320$ (by up to 2.2$\times$).
For all other $m$, \texttt{move-rb} leads \texttt{columba} by 1.5--4.2$\times$ (counting) and 1.3--2.4$\times$ (locating), and its lead over \texttt{columba} also grows with $k$ (typ.\ 2.1$\times$ at $k = 4$ to 3.1$\times$ at $k = 13$ for counting).
Thus, \texttt{move-rb} remains the fastest compressed index even when every index runs our APM algorithms.

\section{Per-k APM Performance}\label{sct:apm_per_k}
\Cref{fig:apm_native_k4,fig:apm_native_k7,fig:apm_native_k10,fig:apm_native_k13} show the native APM query throughput individually for each error bound $k \in \{4, 7, 10, 13\}$, and \Cref{fig:apm_same_alg_k4,fig:apm_same_alg_k7,fig:apm_same_alg_k10,fig:apm_same_alg_k13} show the same measurements with all indexes running our optimized APM algorithms.
At every individual $k$, \texttt{move-rb} and \texttt{move-rb-rlzsa} outperform \texttt{b-move} and \texttt{br-index}, and \texttt{b-move} outperforms \texttt{br-index}, so the geometric means of \Cref{fig:apm,fig:apm_same_alg} are representative.
Only \texttt{columba}'s rank relative to \texttt{move-rb} varies with $k$: on sars2 at $m = 640$, for instance, \texttt{move-rb} leads at $k \in \{4, 10\}$ and \texttt{columba} at $k \in \{7, 13\}$.
A larger $k$ enlarges the search space and the number of reported occurrences, which generally lowers the throughput of every index.

\paragraph{Omitted measurements.}
Some of the shortest $m$ are missing, for all indexes alike, and the more, the larger $k$ is (see \Cref{tab:patterns_count_apm,tab:patterns_locate_apm_ham,tab:patterns_locate_apm_edit}): at $k = 13$ on sars2 and chr19, for instance, only $m \geq 40$ is measured w.r.t.\ Hamming distance, and only $m \geq 80$ w.r.t.\ edit distance (on sars2, also $m = 1280$ is missing).
We omitted these measurements because the number of occurrences per pattern, and hence the memory needed to locate them, explodes there.
Already at the smallest $m$ we still measure, a pattern has up to $2.2 \cdot 10^7$ occurrences (sars2, Hamming distance, $k = 7$, $m = 20$; see \Cref{tab:patterns_locate_apm_ham}), and halving $m$ multiplies this number by up to almost 3 OoM.

How the speedups change with $k$ differs per index.
Against \texttt{br-index}, \texttt{move-rb}'s native lead grows with $k$: on the DNA texts, from typ.\ 1.7 OoM ($k = 4$) to typ.\ 2.4 OoM ($k = 13$) for counting w.r.t.\ Hamming distance, and from typ.\ 1.3 to 2.2 OoM for locating.
The reason is that \texttt{br-index} is hardcoded to the pigeonhole principle, whose search space grows much faster with $k$ than that of minU.
Against \texttt{b-move}, the native lead is roughly constant in $k$ w.r.t.\ Hamming distance (typ.\ 3.5--3.9$\times$), and also w.r.t.\ edit distance on sars2 (typ.\ 6.1--7.8$\times$).
Only on chr19 does it fall with $k$ w.r.t.\ edit distance (typ.\ 3.4$\times$ to 2.1$\times$), following the number of occurrences per pattern on which \texttt{move-rb}'s $\SA$-interval deduplication can save.
With the algorithms held fixed, all leads grow with $k$: over \texttt{b-move}, from typ.\ 5.5$\times$ ($k = 4$) to 8.1$\times$ ($k = 13$) for counting and from 4.3$\times$ to 5.8$\times$ for locating; over \texttt{columba}, from 2.1$\times$ to 3.1$\times$ for counting.
A larger $k$ means more extensions per pattern, so the per-extension cost of the index, which this experiment isolates, carries more weight.
Finally, with the native algorithms, \texttt{columba} overtakes \texttt{move-rb} in 9 of 52 Hamming and 5 of 47 edit measurements, by at most 2.5$\times$, resp.\ 2.1$\times$ (both at $k = 7$ and $m = 1280$ on sars2).
All but two of them are at $m \geq 640$, so they largely average out in \Cref{fig:apm}.
\section{Query Pattern Statistics}\label{sct:patt_tables}
For every text and pattern length $m$, $k \in \{4, 7, 10, 13\}$ and for the three APM pattern sets \Cref{tab:patterns_count,tab:patterns_locate,tab:patterns_count_apm,tab:patterns_locate_apm_ham,tab:patterns_locate_apm_edit} report the number $N$ of generated patterns and the average number $\floor{occ / N}$ of occurrences per pattern.

\paragraph{Pattern generation and measurement.}
We generate the extension pattern sets (\Cref{tab:patterns_count,tab:patterns_locate}) with our tool \texttt{move-rb-gen-ext-queries} and the APM pattern sets (\Cref{tab:patterns_count_apm,tab:patterns_locate_apm_ham,tab:patterns_locate_apm_edit}) with \texttt{move-rb-gen-apm-queries}, and we run the corresponding measurements with \texttt{move-rb-bench-ext-queries} and \texttt{move-rb-bench-apm-queries}.
Each pattern is a substring of $T$ starting at a uniformly random position, and all indexes are run on the same pattern sets.
Rather than fixing $N$, each generator calibrates it per pattern set with \texttt{move-rb-rlzsa}, our fastest index: it keeps adding patterns until \texttt{move-rb-rlzsa}'s query time on the set reaches $X$ seconds (the \texttt{-{}-time} parameter), but always includes at least $Y$ patterns (the \texttt{-{}-min} parameter), so that even the fastest index runs long enough to be timed reliably while no set is reduced to a single pattern.
For the APM sets, we use $X = 5$ for \texttt{sars2} and \texttt{chr19}, and $X = 1$ for \texttt{dewiki}, where the gap between \texttt{br-index} and \texttt{move-rb} is so large that a larger $X$ would make \texttt{br-index}'s measurements prohibitively slow; for the extension sets, $X = 1$ throughout.
Consequently, $N$ is small exactly where a single query is expensive, i.e.,\ for a large $k$ and a small $m$: there, a handful of patterns already reaches $X$ seconds, and each of them contributes millions of occurrences (see \Cref{tab:patterns_count_apm,tab:patterns_locate_apm_ham,tab:patterns_locate_apm_edit}).
The floor $Y$ binds only in these cases; we set $Y = 4$ for \texttt{sars2} and \texttt{chr19}, and $Y = 2$ for \texttt{dewiki}, whereas for the extension sets $N$ always stays far above the floor.
Each benchmark then repeats the whole pattern set, per index, until at least $10$ seconds have elapsed (its own \texttt{-{}-time} parameter), and reports the average time per run; this times every index accurately while keeping the total measurement time bounded.

\setcounter{section}{6}
\setcounter{figure}{0}

\begin{figure*}[p]
\centering
\vspace*{\fill}
\begin{subfigure}[b]{.31\textwidth}
\hspace*{-0.7cm}
\begin{tikzpicture}[marks]
\begin{axis}[ymin=1e-5,ymax=1e-1,
    querystyle_kfig,
    title={sars2},
    ylabel={count thr.\\ (ham.) [1/$\mu$s]},
    y label style={at={(0.0,0.5)},align=center},
    xmin=10,
    xmax=1280,
    xticklabels={},
    xmode=log,
    ymode=log,
    xlabel={\vphantom{pattern length}}
]

\addplot[brindex] coordinates { (10,2.19709e-05) (20,2.89395e-05) (40,0.00020914) (80,0.000198459) (160,0.000134097) (320,7.80153e-05) (640,4.11102e-05) (1280,2.05946e-05) };
\addlegendentry{algo=count\_br\_index\_native};
\addplot[moverb] coordinates { (10,0.000459527) (20,0.00352825) (40,0.0154566) (80,0.0131948) (160,0.00920761) (320,0.00553387) (640,0.00295338) (1280,0.00143675) };
\addlegendentry{algo=count\_move\_rb\_move};

\legend{};
\end{axis}
\end{tikzpicture}
\end{subfigure}
\begin{subfigure}[b]{.31\textwidth}
\begin{tikzpicture}[marks]
\begin{axis}[ymin=1e-5,ymax=1e0,querystyle_kfig,
    title={chr19},
    ylabel={\phantom{p}},
    xmin=10,
    xmax=1280,
    xticklabels={},
    xmode=log,
    ymode=log,
    xlabel={\vphantom{pattern length}}
]

\addplot[brindex] coordinates { (10,3.28163e-05) (20,2.86382e-05) (40,0.00034288) (80,0.00270969) (160,0.0169777) (320,0.0177846) (640,0.0121985) (1280,0.00852923) };
\addlegendentry{algo=count\_br\_index\_native};
\addplot[moverb] coordinates { (10,0.000424701) (20,0.00204283) (40,0.0250445) (80,0.186175) (160,0.696222) (320,0.73817) (640,0.605838) (1280,0.460457) };
\addlegendentry{algo=count\_move\_rb\_move};

\legend{};
\end{axis}
\end{tikzpicture}
\end{subfigure}
\begin{subfigure}[b]{.31\textwidth}
\begin{tikzpicture}[marks]
\begin{axis}[ymin=1e-6,ymax=1e0,querystyle_kfig,
    title={dewiki},
    ylabel={\phantom{p}},
    xmin=10,
    xmax=1280,
    xticklabels={},
    xmode=log,
    ymode=log,
    xlabel={\vphantom{pattern length}}
]

\addplot[brindex] coordinates { (10,9.30524e-07) (20,6.03164e-06) (40,6.97353e-05) (80,0.000187695) (160,0.000648553) (320,0.000826678) (640,0.000820943) (1280,0.000812535) };
\addlegendentry{algo=count\_br\_index\_native};
\addplot[moverb] coordinates { (10,0.000362932) (20,0.00722399) (40,0.0355053) (80,0.115473) (160,0.328826) (320,0.479355) (640,0.560841) (1280,0.611009) };
\addlegendentry{algo=count\_move\_rb\_move};

\legend{};
\end{axis}
\end{tikzpicture}
\end{subfigure}

\vspace{-1.35cm}
\begin{subfigure}[b]{.31\textwidth}
\hspace*{-0.7cm}
\begin{tikzpicture}[marks]
\begin{axis}[ymin=1e-2,ymax=1e2,
    querystyle_kfig,
    ylabel={locate thr.\\ (ham.) [1/$\mu$s]},
    y label style={at={(0.0,0.5)},align=center},
    xmin=10,
    xmax=1280,
    xticklabels={},
    xmode=log,
    ymode=log,
    title={\vphantom{Ig}},
    xlabel={\vphantom{pattern length}}
]

\addplot[brindex] coordinates { (20,0.675933) (40,0.849427) (80,0.759015) (160,0.556007) (320,0.272855) (640,0.0640838) (1280,0.0228983) };
\addlegendentry{algo=locate\_br\_index\_native};
\addplot[columba] coordinates { (20,1.9373) (40,1.9639) (80,1.99574) (160,1.99421) (320,1.9984) (640,1.80008) (1280,1.31244) };
\addlegendentry{algo=locate\_columba};
\addplot[bmove] coordinates { (20,0.988408) (40,0.991161) (80,1.1352) (160,1.12303) (320,1.1238) (640,0.784414) (1280,0.389176) };
\addlegendentry{algo=locate\_columba\_rlc};
\addplot[moverb] coordinates { (20,6.80191) (40,6.86713) (80,6.70023) (160,6.32287) (320,5.10282) (640,2.32239) (1280,0.924499) };
\addlegendentry{algo=locate\_move\_rb\_move};
\addplot[moverbrlzsa] coordinates { (20,80.6563) (40,99.0588) (80,76.0882) (160,45.5184) (320,17.5174) (640,3.50105) (1280,1.08221) };
\addlegendentry{algo=locate\_move\_rb\_rlzsa};

\legend{};
\end{axis}
\end{tikzpicture}
\end{subfigure}
\begin{subfigure}[b]{.31\textwidth}
\begin{tikzpicture}[marks]
\begin{axis}[ymin=1e-2,ymax=1e2,querystyle_kfig,
    ylabel={\phantom{p}},
    xmin=10,
    xmax=1280,
    xticklabels={},
    xmode=log,
    ymode=log,
    title={\vphantom{Ig}},
    xlabel={\vphantom{pattern length}}
]

\addplot[brindex] coordinates { (20,0.69268) (40,0.614168) (80,0.122316) (160,0.103758) (320,0.0615708) (640,0.0270059) (1280,0.0131952) };
\addlegendentry{algo=locate\_br\_index\_native};
\addplot[columba] coordinates { (20,1.1605) (40,1.40851) (80,1.53934) (160,1.6042) (320,1.0457) (640,0.751617) (1280,0.529152) };
\addlegendentry{algo=locate\_columba};
\addplot[bmove] coordinates { (20,0.898605) (40,1.02896) (80,0.926489) (160,0.962157) (320,0.824057) (640,0.560872) (1280,0.355151) };
\addlegendentry{algo=locate\_columba\_rlc};
\addplot[moverb] coordinates { (20,6.37893) (40,6.26065) (80,3.68791) (160,2.6794) (320,1.83046) (640,1.04502) (1280,0.594867) };
\addlegendentry{algo=locate\_move\_rb\_move};
\addplot[moverbrlzsa] coordinates { (20,84.7002) (40,67.4534) (80,8.5387) (160,4.3445) (320,2.42243) (640,1.1874) (1280,0.624085) };
\addlegendentry{algo=locate\_move\_rb\_rlzsa};

\legend{};
\end{axis}
\end{tikzpicture}
\end{subfigure}
\begin{subfigure}[b]{.31\textwidth}
\begin{tikzpicture}[marks]
\begin{axis}[ymin=1e-3,ymax=1e2,querystyle_kfig,
    ylabel={\phantom{p}},
    xmin=10,
    xmax=1280,
    xticklabels={},
    xmode=log,
    ymode=log,
    title={\vphantom{Ig}},
    xlabel={\vphantom{pattern length}}
]

\addplot[brindex] coordinates { (10,0.318378) (20,0.123975) (40,0.038455) (80,0.0052307) (160,0.0035344) (320,0.00228963) (640,0.0012547) (1280,0.00140827) };
\addlegendentry{algo=locate\_br\_index\_native};
\addplot[moverb] coordinates { (10,7.31258) (20,8.32097) (40,3.89776) (80,1.6513) (160,1.31769) (320,1.07876) (640,0.789537) (1280,1.0322) };
\addlegendentry{algo=locate\_move\_rb\_move};
\addplot[moverbrlzsa] coordinates { (10,58.1153) (20,29.59) (40,7.59809) (80,2.06854) (160,1.53158) (320,1.17793) (640,0.810177) (1280,1.02473) };
\addlegendentry{algo=locate\_move\_rb\_rlzsa};

\legend{};
\end{axis}
\end{tikzpicture}
\end{subfigure}

\vspace{-1.35cm}
\begin{subfigure}[b]{.31\textwidth}
\hspace*{-0.7cm}
\begin{tikzpicture}[marks]
\begin{axis}[ymin=1e-1,ymax=1e2,
    querystyle_kfig,
    ylabel={locate thr.\\ (edit) [1/$\mu$s]},
    y label style={at={(0.0,0.5)},align=center},
    xmin=10,
    xmax=1280,
    xlabel={pattern length},
    xmode=log,
    ymode=log,
    title={\vphantom{Ig}}
]

\addplot[columba] coordinates { (20,1.33696) (40,1.40358) (80,1.66197) (160,1.46907) (320,1.65369) (640,1.63496) (1280,0.900762) };
\addlegendentry{algo=locate\_columba};
\addplot[bmove] coordinates { (20,0.523839) (40,0.527205) (80,0.521015) (160,0.533955) (320,0.525206) (640,0.467673) (1280,0.302525) };
\addlegendentry{algo=locate\_columba\_rlc};
\addplot[moverb] coordinates { (20,5.27563) (40,5.32147) (80,5.3756) (160,5.19526) (320,4.70726) (640,2.932) (1280,0.943634) };
\addlegendentry{algo=locate\_move\_rb\_move};
\addplot[moverbrlzsa] coordinates { (20,20.9949) (40,22.8852) (80,22.6716) (160,18.865) (320,14.2163) (640,5.00375) (1280,1.08542) };
\addlegendentry{algo=locate\_move\_rb\_rlzsa};

\legend{};
\end{axis}
\end{tikzpicture}
\end{subfigure}
\begin{subfigure}[b]{.31\textwidth}
\begin{tikzpicture}[marks]
\begin{axis}[ymin=1e-1,ymax=1e2,querystyle_kfig,
    ylabel={\phantom{p}},
    xmin=10,
    xmax=1280,
    xlabel={pattern length},
    xmode=log,
    ymode=log,
    title={\vphantom{Ig}}
]

\addplot[columba] coordinates { (20,0.714669) (40,0.7716) (80,1.38412) (160,1.50846) (320,1.26261) (640,0.668534) (1280,0.473091) };
\addlegendentry{algo=locate\_columba};
\addplot[bmove] coordinates { (20,0.503202) (40,0.537056) (80,0.754164) (160,0.856875) (320,0.727268) (640,0.499424) (1280,0.294791) };
\addlegendentry{algo=locate\_columba\_rlc};
\addplot[moverb] coordinates { (20,3.96068) (40,4.60671) (80,2.93385) (160,2.26569) (320,1.64984) (640,0.978064) (1280,0.547116) };
\addlegendentry{algo=locate\_move\_rb\_move};
\addplot[moverbrlzsa] coordinates { (20,10.9534) (40,16.8865) (80,5.33006) (160,3.42276) (320,2.11036) (640,1.1013) (1280,0.575407) };
\addlegendentry{algo=locate\_move\_rb\_rlzsa};

\legend{};
\end{axis}
\end{tikzpicture}
\end{subfigure}
\begin{subfigure}[b]{.31\textwidth}
\begin{tikzpicture}[marks]
\begin{axis}[ymin=1e-1,ymax=1e2,querystyle_kfig,
    ylabel={\phantom{p}},
    xmin=10,
    xmax=1280,
    xlabel={pattern length},
    xmode=log,
    ymode=log,
    title={\vphantom{Ig}}
]

\addplot[moverb] coordinates { (10,5.16966) (20,7.1933) (40,2.21285) (80,0.902516) (160,1.19696) (320,1.10304) (640,0.8404) (1280,0.678855) };
\addlegendentry{algo=locate\_move\_rb\_move};
\addplot[moverbrlzsa] coordinates { (10,13.4878) (20,9.75258) (40,2.72512) (80,1.00716) (160,1.39104) (320,1.24501) (640,0.871189) (1280,0.68895) };
\addlegendentry{algo=locate\_move\_rb\_rlzsa};

\legend{};
\end{axis}
\end{tikzpicture}
\end{subfigure}

\vspace*{-0.35cm}

\centering
\begin{tikzpicture}[legendmarks]
\begin{axis}[legend,legend columns=5]
\addplot[brindex] coordinates { (0,0) };
\addlegendentry{\texttt{br-index}};
\addplot[bmove] coordinates { (0,0) };
\addlegendentry{\texttt{b-move}};
\addplot[columba] coordinates { (0,0) };
\addlegendentry{\texttt{columba}};
\addplot[moverb] coordinates { (0,0) };
\addlegendentry{\texttt{move-rb}};
\addplot[moverbrlzsa] coordinates { (0,0) };
\addlegendentry{\texttt{move-rb-rlzsa}};
\end{axis}
\end{tikzpicture}

\vspace*{-0.1cm}
\caption{APM query throughput for $k = 4$. All indexes use their native APM algorithms.}
\label{fig:apm_native_k4}
\vspace{0.2cm}
\begin{subfigure}[b]{.31\textwidth}
\hspace*{-0.7cm}
\begin{tikzpicture}[marks]
\begin{axis}[ymin=1e-7,ymax=1e-2,
    querystyle_kfig,
    title={sars2},
    ylabel={count thr.\\ (ham.) [1/$\mu$s]},
    y label style={at={(0.0,0.5)},align=center},
    xmin=10,
    xmax=1280,
    xticklabels={},
    xmode=log,
    ymode=log,
    xlabel={\vphantom{pattern length}}
]

\addplot[brindex] coordinates { (10,1.04223e-06) (20,4.09062e-07) (40,7.77287e-06) (80,5.70538e-05) (160,4.46482e-05) (320,2.79597e-05) (640,1.04701e-05) (1280,7.66606e-06) };
\addlegendentry{algo=count\_br\_index\_native};
\addplot[moverb] coordinates { (10,1.94214e-05) (20,9.07842e-05) (40,0.00562784) (80,0.00648198) (160,0.00472656) (320,0.00263037) (640,0.000871668) (1280,0.000692836) };
\addlegendentry{algo=count\_move\_rb\_move};

\legend{};
\end{axis}
\end{tikzpicture}
\end{subfigure}
\begin{subfigure}[b]{.31\textwidth}
\begin{tikzpicture}[marks]
\begin{axis}[ymin=1e-7,ymax=1e0,querystyle_kfig,
    title={chr19},
    ylabel={\phantom{p}},
    xmin=10,
    xmax=1280,
    xticklabels={},
    xmode=log,
    ymode=log,
    xlabel={\vphantom{pattern length}}
]

\addplot[brindex] coordinates { (10,1.54035e-06) (20,3.49585e-07) (40,5.93273e-06) (80,0.00016867) (160,0.00172435) (320,0.00874931) (640,0.00632812) (1280,0.00403415) };
\addlegendentry{algo=count\_br\_index\_native};
\addplot[moverb] coordinates { (10,1.77357e-05) (20,4.50262e-05) (40,0.00339664) (80,0.0274337) (160,0.322002) (320,0.53869) (640,0.426646) (1280,0.296565) };
\addlegendentry{algo=count\_move\_rb\_move};

\legend{};
\end{axis}
\end{tikzpicture}
\end{subfigure}
\begin{subfigure}[b]{.31\textwidth}
\begin{tikzpicture}[marks]
\begin{axis}[ymin=1e-8,ymax=1e0,querystyle_kfig,
    title={dewiki},
    ylabel={\phantom{p}},
    xmin=10,
    xmax=1280,
    xticklabels={},
    xmode=log,
    ymode=log,
    xlabel={\vphantom{pattern length}}
]

\addplot[brindex] coordinates { (10,4.04742e-09) (20,2.13699e-07) (40,5.83513e-07) (80,6.79829e-06) (160,5.53013e-05) (320,0.000430937) (640,0.000497252) (1280,0.000466629) };
\addlegendentry{algo=count\_br\_index\_native};
\addplot[moverb] coordinates { (10,1.99686e-06) (20,0.000563964) (40,0.006993) (80,0.0099148) (160,0.105362) (320,0.337911) (640,0.412754) (1280,0.456294) };
\addlegendentry{algo=count\_move\_rb\_move};

\legend{};
\end{axis}
\end{tikzpicture}
\end{subfigure}

\vspace{-1.35cm}
\begin{subfigure}[b]{.31\textwidth}
\hspace*{-0.7cm}
\begin{tikzpicture}[marks]
\begin{axis}[ymin=1e-3,ymax=1e2,
    querystyle_kfig,
    ylabel={locate thr.\\ (ham.) [1/$\mu$s]},
    y label style={at={(0.0,0.5)},align=center},
    xmin=10,
    xmax=1280,
    xticklabels={},
    xmode=log,
    ymode=log,
    title={\vphantom{Ig}},
    xlabel={\vphantom{pattern length}}
]

\addplot[brindex] coordinates { (20,0.302776) (40,0.235299) (80,0.528356) (160,0.295258) (320,0.126916) (640,0.0208552) (1280,0.00376055) };
\addlegendentry{algo=locate\_br\_index\_native};
\addplot[columba] coordinates { (20,1.81635) (40,1.95772) (80,1.96443) (160,2.01085) (320,1.93255) (640,1.49767) (1280,0.631024) };
\addlegendentry{algo=locate\_columba};
\addplot[bmove] coordinates { (20,0.836707) (40,1.22659) (80,1.21813) (160,1.2044) (320,1.04134) (640,0.518687) (1280,0.139173) };
\addlegendentry{algo=locate\_columba\_rlc};
\addplot[moverb] coordinates { (20,6.06974) (40,6.71857) (80,6.52241) (160,5.84843) (320,4.23066) (640,1.15997) (1280,0.254308) };
\addlegendentry{algo=locate\_move\_rb\_move};
\addplot[moverbrlzsa] coordinates { (20,50.7574) (40,72.6055) (80,56.0231) (160,28.2323) (320,10.3053) (640,1.40594) (1280,0.265735) };
\addlegendentry{algo=locate\_move\_rb\_rlzsa};

\legend{};
\end{axis}
\end{tikzpicture}
\end{subfigure}
\begin{subfigure}[b]{.31\textwidth}
\begin{tikzpicture}[marks]
\begin{axis}[ymin=1e-2,ymax=1e2,querystyle_kfig,
    ylabel={\phantom{p}},
    xmin=10,
    xmax=1280,
    xticklabels={},
    xmode=log,
    ymode=log,
    title={\vphantom{Ig}},
    xlabel={\vphantom{pattern length}}
]

\addplot[brindex] coordinates { (20,0.200439) (40,0.179801) (80,0.062434) (160,0.0139013) (320,0.0319544) (640,0.0143688) (1280,0.00642364) };
\addlegendentry{algo=locate\_br\_index\_native};
\addplot[columba] coordinates { (20,1.03952) (40,1.20989) (80,1.2279) (160,0.862847) (320,0.808049) (640,0.661456) (1280,0.524872) };
\addlegendentry{algo=locate\_columba};
\addplot[bmove] coordinates { (20,0.801917) (40,0.834989) (80,0.817109) (160,0.661366) (320,0.699043) (640,0.470339) (1280,0.264848) };
\addlegendentry{algo=locate\_columba\_rlc};
\addplot[moverb] coordinates { (20,5.25705) (40,6.0081) (80,3.7806) (160,1.7323) (320,1.43843) (640,0.771241) (1280,0.389456) };
\addlegendentry{algo=locate\_move\_rb\_move};
\addplot[moverbrlzsa] coordinates { (20,24.7814) (40,50.3216) (80,8.98017) (160,2.39821) (320,1.78808) (640,0.845708) (1280,0.405428) };
\addlegendentry{algo=locate\_move\_rb\_rlzsa};

\legend{};
\end{axis}
\end{tikzpicture}
\end{subfigure}
\begin{subfigure}[b]{.31\textwidth}
\begin{tikzpicture}[marks]
\begin{axis}[ymin=1e-4,ymax=1e1,querystyle_kfig,
    ylabel={\phantom{p}},
    xmin=10,
    xmax=1280,
    xticklabels={},
    xmode=log,
    ymode=log,
    title={\vphantom{Ig}},
    xlabel={\vphantom{pattern length}}
]

\addplot[brindex] coordinates { (20,0.00444148) (40,0.00571822) (80,0.000346262) (160,0.000119767) (320,0.0014295) (640,0.000729952) (1280,0.00054556) };
\addlegendentry{algo=locate\_br\_index\_native};
\addplot[moverb] coordinates { (20,2.76106) (40,3.25883) (80,0.462798) (160,0.258071) (320,0.832719) (640,0.579994) (1280,0.493812) };
\addlegendentry{algo=locate\_move\_rb\_move};
\addplot[moverbrlzsa] coordinates { (20,3.84945) (40,5.81706) (80,0.503909) (160,0.266206) (320,0.872488) (640,0.588171) (1280,0.48813) };
\addlegendentry{algo=locate\_move\_rb\_rlzsa};

\legend{};
\end{axis}
\end{tikzpicture}
\end{subfigure}

\vspace{-1.35cm}
\begin{subfigure}[b]{.31\textwidth}
\hspace*{-0.7cm}
\begin{tikzpicture}[marks]
\begin{axis}[ymin=1e-1,ymax=1e2,
    querystyle_kfig,
    ylabel={locate thr.\\ (edit) [1/$\mu$s]},
    y label style={at={(0.0,0.5)},align=center},
    xmin=10,
    xmax=1280,
    xlabel={pattern length},
    xmode=log,
    ymode=log,
    title={\vphantom{Ig}}
]

\addplot[columba] coordinates { (40,1.65947) (80,1.50991) (160,1.36394) (320,1.75924) (640,1.5482) (1280,1.32866) };
\addlegendentry{algo=locate\_columba};
\addplot[bmove] coordinates { (40,0.604917) (80,0.537818) (160,0.490426) (320,0.564832) (640,0.449794) (1280,0.343778) };
\addlegendentry{algo=locate\_columba\_rlc};
\addplot[moverb] coordinates { (40,5.21088) (80,5.1212) (160,4.83643) (320,3.94729) (640,2.13388) (1280,0.634582) };
\addlegendentry{algo=locate\_move\_rb\_move};
\addplot[moverbrlzsa] coordinates { (40,20.1697) (80,19.14) (160,16.045) (320,8.92404) (640,3.07553) (1280,0.691719) };
\addlegendentry{algo=locate\_move\_rb\_rlzsa};

\legend{};
\end{axis}
\end{tikzpicture}
\end{subfigure}
\begin{subfigure}[b]{.31\textwidth}
\begin{tikzpicture}[marks]
\begin{axis}[ymin=1e-1,ymax=1e1,querystyle_kfig,
    ylabel={\phantom{p}},
    xmin=10,
    xmax=1280,
    xlabel={pattern length},
    xmode=log,
    ymode=log,
    title={\vphantom{Ig}}
]

\addplot[columba] coordinates { (40,0.885679) (80,0.886445) (160,0.780817) (320,1.09911) (640,0.600897) (1280,0.349545) };
\addlegendentry{algo=locate\_columba};
\addplot[bmove] coordinates { (40,0.645896) (80,0.609665) (160,0.605401) (320,0.626056) (640,0.409261) (1280,0.217572) };
\addlegendentry{algo=locate\_columba\_rlc};
\addplot[moverb] coordinates { (40,3.93365) (80,3.3696) (160,1.40968) (320,1.34144) (640,0.79442) (1280,0.404656) };
\addlegendentry{algo=locate\_move\_rb\_move};
\addplot[moverbrlzsa] coordinates { (40,11.3094) (80,7.11781) (160,1.79001) (320,1.64661) (640,0.879297) (1280,0.420379) };
\addlegendentry{algo=locate\_move\_rb\_rlzsa};

\legend{};
\end{axis}
\end{tikzpicture}
\end{subfigure}
\begin{subfigure}[b]{.31\textwidth}
\begin{tikzpicture}[marks]
\begin{axis}[ymin=1e-2,ymax=1e0,querystyle_kfig,
    ylabel={\phantom{p}},
    xmin=10,
    xmax=1280,
    xlabel={pattern length},
    xmode=log,
    ymode=log,
    title={\vphantom{Ig}}
]

\addplot[moverb] coordinates { (20,0.192962) (40,0.969372) (80,0.616502) (160,0.614152) (320,0.886057) (640,0.0956561) (1280,0.535502) };
\addlegendentry{algo=locate\_move\_rb\_move};
\addplot[moverbrlzsa] coordinates { (20,0.199614) (40,1.06017) (80,0.659404) (160,0.654607) (320,0.971912) (640,0.0948591) (1280,0.542032) };
\addlegendentry{algo=locate\_move\_rb\_rlzsa};

\legend{};
\end{axis}
\end{tikzpicture}
\end{subfigure}

\vspace*{-0.35cm}

\centering
\begin{tikzpicture}[legendmarks]
\begin{axis}[legend,legend columns=5]
\addplot[brindex] coordinates { (0,0) };
\addlegendentry{\texttt{br-index}};
\addplot[bmove] coordinates { (0,0) };
\addlegendentry{\texttt{b-move}};
\addplot[columba] coordinates { (0,0) };
\addlegendentry{\texttt{columba}};
\addplot[moverb] coordinates { (0,0) };
\addlegendentry{\texttt{move-rb}};
\addplot[moverbrlzsa] coordinates { (0,0) };
\addlegendentry{\texttt{move-rb-rlzsa}};
\end{axis}
\end{tikzpicture}

\vspace*{-0.1cm}
\caption{APM query throughput for $k = 7$. All indexes use their native APM algorithms.}
\label{fig:apm_native_k7}
\vspace*{\fill}
\end{figure*}

\begin{figure*}[p]
\centering
\vspace*{\fill}
\begin{subfigure}[b]{.31\textwidth}
\hspace*{-0.7cm}
\begin{tikzpicture}[marks]
\begin{axis}[ymin=1e-8,ymax=1e-2,
    querystyle_kfig,
    title={sars2},
    ylabel={count thr.\\ (ham.) [1/$\mu$s]},
    y label style={at={(0.0,0.5)},align=center},
    xmin=20,
    xmax=1280,
    xticklabels={},
    xmode=log,
    ymode=log,
    xlabel={\vphantom{pattern length}}
]

\addplot[brindex] coordinates { (20,3.9282e-08) (40,4.36174e-07) (80,1.72959e-05) (160,2.14994e-05) (320,1.36997e-05) (640,6.30627e-06) (1280,4.06648e-06) };
\addlegendentry{algo=count\_br\_index\_native};
\addplot[moverb] coordinates { (20,3.34099e-06) (40,0.000580258) (80,0.00355255) (160,0.00288859) (320,0.00159914) (640,0.000625298) (1280,0.000388159) };
\addlegendentry{algo=count\_move\_rb\_move};

\legend{};
\end{axis}
\end{tikzpicture}
\end{subfigure}
\begin{subfigure}[b]{.31\textwidth}
\begin{tikzpicture}[marks]
\begin{axis}[ymin=1e-7,ymax=1e0,querystyle_kfig,
    title={chr19},
    ylabel={\phantom{p}},
    xmin=40,
    xmax=1280,
    xticklabels={},
    xmode=log,
    ymode=log,
    xlabel={\vphantom{pattern length}}
]

\addplot[brindex] coordinates { (40,3.22904e-07) (80,1.87922e-05) (160,0.000297538) (320,0.00348318) (640,0.00423287) (1280,0.00263624) };
\addlegendentry{algo=count\_br\_index\_native};
\addplot[moverb] coordinates { (40,0.000279537) (80,0.00441657) (160,0.0901944) (320,0.395952) (640,0.338796) (1280,0.226748) };
\addlegendentry{algo=count\_move\_rb\_move};

\legend{};
\end{axis}
\end{tikzpicture}
\end{subfigure}
\begin{subfigure}[b]{.31\textwidth}
\begin{tikzpicture}[marks]
\begin{axis}[ymin=1e-8,ymax=1e0,querystyle_kfig,
    title={dewiki},
    ylabel={\phantom{p}},
    xmin=20,
    xmax=1280,
    xticklabels={},
    xmode=log,
    ymode=log,
    xlabel={\vphantom{pattern length}}
]

\addplot[brindex] coordinates { (20,1.12694e-08) (40,2.21203e-07) (80,6.58151e-07) (160,4.89634e-06) (320,0.000148799) (640,0.000311909) (1280,0.000320432) };
\addlegendentry{algo=count\_br\_index\_native};
\addplot[moverb] coordinates { (20,1.49871e-05) (40,0.00339733) (80,0.00166026) (160,0.0165442) (320,0.179842) (640,0.318118) (1280,0.353568) };
\addlegendentry{algo=count\_move\_rb\_move};

\legend{};
\end{axis}
\end{tikzpicture}
\end{subfigure}

\vspace{-1.35cm}
\begin{subfigure}[b]{.31\textwidth}
\hspace*{-0.7cm}
\begin{tikzpicture}[marks]
\begin{axis}[ymin=1e-3,ymax=1e2,
    querystyle_kfig,
    ylabel={locate thr.\\ (ham.) [1/$\mu$s]},
    y label style={at={(0.0,0.5)},align=center},
    xmin=20,
    xmax=1280,
    xticklabels={},
    xmode=log,
    ymode=log,
    title={\vphantom{Ig}},
    xlabel={\vphantom{pattern length}}
]

\addplot[brindex] coordinates { (40,0.0176609) (80,0.255442) (160,0.164193) (320,0.0651244) (640,0.0199513) (1280,0.0036871) };
\addlegendentry{algo=locate\_br\_index\_native};
\addplot[columba] coordinates { (40,1.86594) (80,1.97593) (160,1.93237) (320,1.7609) (640,1.32846) (1280,0.535421) };
\addlegendentry{algo=locate\_columba};
\addplot[bmove] coordinates { (40,1.05253) (80,1.07137) (160,0.96696) (320,0.72016) (640,0.374772) (1280,0.105138) };
\addlegendentry{algo=locate\_columba\_rlc};
\addplot[moverb] coordinates { (40,5.23257) (80,6.26768) (160,5.30052) (320,3.54306) (640,1.45614) (1280,0.291598) };
\addlegendentry{algo=locate\_move\_rb\_move};
\addplot[moverbrlzsa] coordinates { (40,18.1147) (80,40.1152) (160,19.311) (320,6.97607) (640,1.84622) (1280,0.305525) };
\addlegendentry{algo=locate\_move\_rb\_rlzsa};

\legend{};
\end{axis}
\end{tikzpicture}
\end{subfigure}
\begin{subfigure}[b]{.31\textwidth}
\begin{tikzpicture}[marks]
\begin{axis}[ymin=1e-3,ymax=1e1,querystyle_kfig,
    ylabel={\phantom{p}},
    xmin=40,
    xmax=1280,
    xticklabels={},
    xmode=log,
    ymode=log,
    title={\vphantom{Ig}},
    xlabel={\vphantom{pattern length}}
]

\addplot[brindex] coordinates { (40,0.0167319) (80,0.0356544) (160,0.00280786) (320,0.0129662) (640,0.00952765) (1280,0.00420386) };
\addlegendentry{algo=locate\_br\_index\_native};
\addplot[columba] coordinates { (40,0.984464) (80,1.08782) (160,0.450633) (320,0.753748) (640,0.595655) (1280,0.319833) };
\addlegendentry{algo=locate\_columba};
\addplot[bmove] coordinates { (40,0.531881) (80,0.636151) (160,0.334268) (320,0.454772) (640,0.26719) (1280,0.135645) };
\addlegendentry{algo=locate\_columba\_rlc};
\addplot[moverb] coordinates { (40,4.27792) (80,3.81889) (160,0.73656) (320,1.15449) (640,0.634654) (1280,0.306543) };
\addlegendentry{algo=locate\_move\_rb\_move};
\addplot[moverbrlzsa] coordinates { (40,12.0898) (80,9.12527) (160,0.854152) (320,1.37334) (640,0.678502) (1280,0.316464) };
\addlegendentry{algo=locate\_move\_rb\_rlzsa};

\legend{};
\end{axis}
\end{tikzpicture}
\end{subfigure}
\begin{subfigure}[b]{.31\textwidth}
\begin{tikzpicture}[marks]
\begin{axis}[ymin=1e-5,ymax=1e0,querystyle_kfig,
    ylabel={\phantom{p}},
    xmin=20,
    xmax=1280,
    xticklabels={},
    xmode=log,
    ymode=log,
    title={\vphantom{Ig}},
    xlabel={\vphantom{pattern length}}
]

\addplot[brindex] coordinates { (20,0.000339988) (40,3.48749e-05) (80,0.00187509) (160,0.000183504) (320,0.000342171) (640,0.000486447) (1280,0.000348487) };
\addlegendentry{algo=locate\_br\_index\_native};
\addplot[moverb] coordinates { (20,0.248619) (40,0.298718) (80,1.0412) (160,0.344034) (320,0.399221) (640,0.474182) (1280,0.382171) };
\addlegendentry{algo=locate\_move\_rb\_move};
\addplot[moverbrlzsa] coordinates { (20,0.256416) (40,0.302126) (80,1.30111) (160,0.354304) (320,0.409233) (640,0.478483) (1280,0.376131) };
\addlegendentry{algo=locate\_move\_rb\_rlzsa};

\legend{};
\end{axis}
\end{tikzpicture}
\end{subfigure}

\vspace{-1.35cm}
\begin{subfigure}[b]{.31\textwidth}
\hspace*{-0.7cm}
\begin{tikzpicture}[marks]
\begin{axis}[ymin=1e-1,ymax=1e2,
    querystyle_kfig,
    ylabel={locate thr.\\ (edit) [1/$\mu$s]},
    y label style={at={(0.0,0.5)},align=center},
    xmin=20,
    xmax=1280,
    xlabel={pattern length},
    xmode=log,
    ymode=log,
    title={\vphantom{Ig}}
]

\addplot[columba] coordinates { (40,1.25326) (80,1.0524) (160,1.15416) (320,1.64898) (640,1.4145) (1280,0.474975) };
\addlegendentry{algo=locate\_columba};
\addplot[bmove] coordinates { (40,0.367832) (80,0.394789) (160,0.41259) (320,0.442095) (640,0.351073) (1280,0.116522) };
\addlegendentry{algo=locate\_columba\_rlc};
\addplot[moverb] coordinates { (40,2.95917) (80,4.5783) (160,3.48106) (320,3.77155) (640,2.06697) (1280,0.409647) };
\addlegendentry{algo=locate\_move\_rb\_move};
\addplot[moverbrlzsa] coordinates { (40,5.20788) (80,14.0342) (160,7.06898) (320,8.56592) (640,2.92883) (1280,0.430563) };
\addlegendentry{algo=locate\_move\_rb\_rlzsa};

\legend{};
\end{axis}
\end{tikzpicture}
\end{subfigure}
\begin{subfigure}[b]{.31\textwidth}
\begin{tikzpicture}[marks]
\begin{axis}[ymin=1e-1,ymax=1e1,querystyle_kfig,
    ylabel={\phantom{p}},
    xmin=40,
    xmax=1280,
    xlabel={pattern length},
    xmode=log,
    ymode=log,
    title={\vphantom{Ig}}
]

\addplot[columba] coordinates { (40,0.547637) (80,0.969464) (160,0.562212) (320,0.657707) (640,0.43567) (1280,0.209867) };
\addlegendentry{algo=locate\_columba};
\addplot[bmove] coordinates { (40,0.341412) (80,0.52247) (160,0.367487) (320,0.400582) (640,0.236124) (1280,0.107207) };
\addlegendentry{algo=locate\_columba\_rlc};
\addplot[moverb] coordinates { (40,0.443023) (80,2.02159) (160,0.634703) (320,0.854874) (640,0.494984) (1280,0.231697) };
\addlegendentry{algo=locate\_move\_rb\_move};
\addplot[moverbrlzsa] coordinates { (40,0.449861) (80,3.15647) (160,0.704847) (320,0.962266) (640,0.524716) (1280,0.236339) };
\addlegendentry{algo=locate\_move\_rb\_rlzsa};

\legend{};
\end{axis}
\end{tikzpicture}
\end{subfigure}
\begin{subfigure}[b]{.31\textwidth}
\begin{tikzpicture}[marks]
\begin{axis}[ymin=1e-2,ymax=1e0,querystyle_kfig,
    ylabel={\phantom{p}},
    xmin=20,
    xmax=1280,
    xlabel={pattern length},
    xmode=log,
    ymode=log,
    title={\vphantom{Ig}}
]

\addplot[moverb] coordinates { (40,0.0199655) (80,0.702952) (160,0.268216) (320,0.124233) (640,0.3645) (1280,0.306207) };
\addlegendentry{algo=locate\_move\_rb\_move};
\addplot[moverbrlzsa] coordinates { (40,0.0200625) (80,0.742497) (160,0.273206) (320,0.125299) (640,0.367614) (1280,0.302963) };
\addlegendentry{algo=locate\_move\_rb\_rlzsa};

\legend{};
\end{axis}
\end{tikzpicture}
\end{subfigure}

\vspace*{-0.35cm}

\centering
\begin{tikzpicture}[legendmarks]
\begin{axis}[legend,legend columns=5]
\addplot[brindex] coordinates { (0,0) };
\addlegendentry{\texttt{br-index}};
\addplot[bmove] coordinates { (0,0) };
\addlegendentry{\texttt{b-move}};
\addplot[columba] coordinates { (0,0) };
\addlegendentry{\texttt{columba}};
\addplot[moverb] coordinates { (0,0) };
\addlegendentry{\texttt{move-rb}};
\addplot[moverbrlzsa] coordinates { (0,0) };
\addlegendentry{\texttt{move-rb-rlzsa}};
\end{axis}
\end{tikzpicture}

\vspace*{-0.1cm}
\caption{APM query throughput for $k = 10$. All indexes use their native APM algorithms.}
\label{fig:apm_native_k10}
\vspace{0.2cm}
\begin{subfigure}[b]{.31\textwidth}
\hspace*{-0.7cm}
\begin{tikzpicture}[marks]
\begin{axis}[ymin=1e-7,ymax=1e-2,
    querystyle_kfig,
    title={sars2},
    ylabel={count thr.\\ (ham.) [1/$\mu$s]},
    y label style={at={(0.0,0.5)},align=center},
    xmin=40,
    xmax=1280,
    xticklabels={},
    xmode=log,
    ymode=log,
    xlabel={\vphantom{pattern length}}
]

\addplot[brindex] coordinates { (40,7.11839e-08) (80,3.15957e-06) (160,1.14548e-05) (320,9.69001e-06) (640,5.12552e-06) (1280,5.23545e-06) };
\addlegendentry{algo=count\_br\_index\_native};
\addplot[moverb] coordinates { (40,0.000104822) (80,0.00246949) (160,0.0017936) (320,0.00134203) (640,0.000694971) (1280,0.0005557) };
\addlegendentry{algo=count\_move\_rb\_move};

\legend{};
\end{axis}
\end{tikzpicture}
\end{subfigure}
\begin{subfigure}[b]{.31\textwidth}
\begin{tikzpicture}[marks]
\begin{axis}[ymin=1e-7,ymax=1e0,querystyle_kfig,
    title={chr19},
    ylabel={\phantom{p}},
    xmin=40,
    xmax=1280,
    xticklabels={},
    xmode=log,
    ymode=log,
    xlabel={\vphantom{pattern length}}
]

\addplot[brindex] coordinates { (40,5.50401e-08) (80,2.42002e-06) (160,8.25112e-05) (320,0.00110058) (640,0.00309531) (1280,0.00195412) };
\addlegendentry{algo=count\_br\_index\_native};
\addplot[moverb] coordinates { (40,5.36505e-05) (80,0.00169147) (160,0.032853) (320,0.284802) (640,0.28112) (1280,0.185269) };
\addlegendentry{algo=count\_move\_rb\_move};

\legend{};
\end{axis}
\end{tikzpicture}
\end{subfigure}
\begin{subfigure}[b]{.31\textwidth}
\begin{tikzpicture}[marks]
\begin{axis}[ymin=1e-9,ymax=1e0,querystyle_kfig,
    title={dewiki},
    ylabel={\phantom{p}},
    xmin=20,
    xmax=1280,
    xticklabels={},
    xmode=log,
    ymode=log,
    xlabel={\vphantom{pattern length}}
]

\addplot[brindex] coordinates { (20,2.91553e-09) (40,5.69506e-08) (80,4.55701e-08) (160,1.02169e-06) (320,3.09946e-05) (640,0.000182517) (1280,0.000238253) };
\addlegendentry{algo=count\_br\_index\_native};
\addplot[moverb] coordinates { (20,6.11283e-06) (40,0.00102496) (80,0.000536438) (160,0.00363698) (320,0.0734331) (640,0.241534) (1280,0.300186) };
\addlegendentry{algo=count\_move\_rb\_move};

\legend{};
\end{axis}
\end{tikzpicture}
\end{subfigure}

\vspace{-1.35cm}
\begin{subfigure}[b]{.31\textwidth}
\hspace*{-0.7cm}
\begin{tikzpicture}[marks]
\begin{axis}[ymin=1e-3,ymax=1e2,
    querystyle_kfig,
    ylabel={locate thr.\\ (ham.) [1/$\mu$s]},
    y label style={at={(0.0,0.5)},align=center},
    xmin=40,
    xmax=1280,
    xticklabels={},
    xmode=log,
    ymode=log,
    title={\vphantom{Ig}},
    xlabel={\vphantom{pattern length}}
]

\addplot[brindex] coordinates { (40,0.00299716) (80,0.0635543) (160,0.120143) (320,0.042365) (640,0.00975935) (1280,0.00325846) };
\addlegendentry{algo=locate\_br\_index\_native};
\addplot[columba] coordinates { (40,1.18443) (80,1.92213) (160,1.83359) (320,1.65768) (640,0.992298) (1280,0.485613) };
\addlegendentry{algo=locate\_columba};
\addplot[bmove] coordinates { (40,0.641843) (80,0.999995) (160,0.840671) (320,0.671913) (640,0.255542) (1280,0.0962243) };
\addlegendentry{algo=locate\_columba\_rlc};
\addplot[moverb] coordinates { (40,2.24112) (80,5.91218) (160,4.99299) (320,3.02568) (640,0.8341) (1280,0.286296) };
\addlegendentry{algo=locate\_move\_rb\_move};
\addplot[moverbrlzsa] coordinates { (40,3.2531) (80,30.0737) (160,15.5113) (320,5.15145) (640,0.957103) (1280,0.302932) };
\addlegendentry{algo=locate\_move\_rb\_rlzsa};

\legend{};
\end{axis}
\end{tikzpicture}
\end{subfigure}
\begin{subfigure}[b]{.31\textwidth}
\begin{tikzpicture}[marks]
\begin{axis}[ymin=1e-3,ymax=1e1,querystyle_kfig,
    ylabel={\phantom{p}},
    xmin=40,
    xmax=1280,
    xticklabels={},
    xmode=log,
    ymode=log,
    title={\vphantom{Ig}},
    xlabel={\vphantom{pattern length}}
]

\addplot[brindex] coordinates { (40,0.00535056) (80,0.0163835) (160,0.00151221) (320,0.00434519) (640,0.00680464) (1280,0.00302364) };
\addlegendentry{algo=locate\_br\_index\_native};
\addplot[columba] coordinates { (40,0.602977) (80,0.937427) (160,0.209246) (320,0.55656) (640,0.414764) (1280,0.280604) };
\addlegendentry{algo=locate\_columba};
\addplot[bmove] coordinates { (40,0.284646) (80,0.459505) (160,0.216879) (320,0.354408) (640,0.218375) (1280,0.10655) };
\addlegendentry{algo=locate\_columba\_rlc};
\addplot[moverb] coordinates { (40,2.60885) (80,3.68812) (160,0.500883) (320,0.897199) (640,0.513959) (1280,0.244964) };
\addlegendentry{algo=locate\_move\_rb\_move};
\addplot[moverbrlzsa] coordinates { (40,4.2958) (80,8.16438) (160,0.550042) (320,1.02767) (640,0.548954) (1280,0.248364) };
\addlegendentry{algo=locate\_move\_rb\_rlzsa};

\legend{};
\end{axis}
\end{tikzpicture}
\end{subfigure}
\begin{subfigure}[b]{.31\textwidth}
\begin{tikzpicture}[marks]
\begin{axis}[ymin=1e-5,ymax=1e1,querystyle_kfig,
    ylabel={\phantom{p}},
    xmin=20,
    xmax=1280,
    xticklabels={},
    xmode=log,
    ymode=log,
    title={\vphantom{Ig}},
    xlabel={\vphantom{pattern length}}
]

\addplot[brindex] coordinates { (20,0.0036665) (40,4.03737e-05) (80,3.07934e-05) (160,5.97933e-06) (320,0.000114506) (640,0.000348068) (1280,0.000265751) };
\addlegendentry{algo=locate\_br\_index\_native};
\addplot[moverb] coordinates { (20,1.97335) (40,0.532925) (80,0.143634) (160,0.0239243) (320,0.235141) (640,0.362579) (1280,0.320548) };
\addlegendentry{algo=locate\_move\_rb\_move};
\addplot[moverbrlzsa] coordinates { (20,2.59954) (40,0.546147) (80,0.143132) (160,0.0240665) (320,0.238605) (640,0.368924) (1280,0.320259) };
\addlegendentry{algo=locate\_move\_rb\_rlzsa};

\legend{};
\end{axis}
\end{tikzpicture}
\end{subfigure}

\vspace{-1.35cm}
\begin{subfigure}[b]{.31\textwidth}
\hspace*{-0.7cm}
\begin{tikzpicture}[marks]
\begin{axis}[ymin=1e-1,ymax=1e1,
    querystyle_kfig,
    ylabel={locate thr.\\ (edit) [1/$\mu$s]},
    y label style={at={(0.0,0.5)},align=center},
    xmin=40,
    xmax=1280,
    xlabel={pattern length},
    xmode=log,
    ymode=log,
    title={\vphantom{Ig}}
]

\addplot[columba] coordinates { (80,1.23529) (160,0.983727) (320,1.21824) (640,0.958616) };
\addlegendentry{algo=locate\_columba};
\addplot[bmove] coordinates { (80,0.443956) (160,0.27362) (320,0.347743) (640,0.301365) };
\addlegendentry{algo=locate\_columba\_rlc};
\addplot[moverb] coordinates { (80,3.97809) (160,3.13556) (320,2.1594) (640,1.28196) };
\addlegendentry{algo=locate\_move\_rb\_move};
\addplot[moverbrlzsa] coordinates { (80,9.35288) (160,6.36765) (320,3.09214) (640,1.54711) };
\addlegendentry{algo=locate\_move\_rb\_rlzsa};

\legend{};
\end{axis}
\end{tikzpicture}
\end{subfigure}
\begin{subfigure}[b]{.31\textwidth}
\begin{tikzpicture}[marks]
\begin{axis}[ymin=1e-1,ymax=1e1,querystyle_kfig,
    ylabel={\phantom{p}},
    xmin=40,
    xmax=1280,
    xlabel={pattern length},
    xmode=log,
    ymode=log,
    title={\vphantom{Ig}}
]

\addplot[columba] coordinates { (80,0.542944) (160,0.284218) (320,0.509807) (640,0.323458) (1280,0.18287) };
\addlegendentry{algo=locate\_columba};
\addplot[bmove] coordinates { (80,0.221362) (160,0.208435) (320,0.309792) (640,0.184588) (1280,0.0864012) };
\addlegendentry{algo=locate\_columba\_rlc};
\addplot[moverb] coordinates { (80,1.1774) (160,0.257234) (320,0.571404) (640,0.337823) (1280,0.16175) };
\addlegendentry{algo=locate\_move\_rb\_move};
\addplot[moverbrlzsa] coordinates { (80,1.44939) (160,0.268523) (320,0.62225) (640,0.356007) (1280,0.164789) };
\addlegendentry{algo=locate\_move\_rb\_rlzsa};

\legend{};
\end{axis}
\end{tikzpicture}
\end{subfigure}
\begin{subfigure}[b]{.31\textwidth}
\begin{tikzpicture}[marks]
\begin{axis}[ymin=1e-3,ymax=1e0,querystyle_kfig,
    ylabel={\phantom{p}},
    xmin=20,
    xmax=1280,
    xlabel={pattern length},
    xmode=log,
    ymode=log,
    title={\vphantom{Ig}}
]

\addplot[moverb] coordinates { (40,0.00912525) (80,0.00789297) (160,0.0324623) (320,0.0488842) (640,0.279134) (1280,0.237957) };
\addlegendentry{algo=locate\_move\_rb\_move};
\addplot[moverbrlzsa] coordinates { (40,0.00906354) (80,0.007874) (160,0.0324539) (320,0.04862) (640,0.284963) (1280,0.240371) };
\addlegendentry{algo=locate\_move\_rb\_rlzsa};

\legend{};
\end{axis}
\end{tikzpicture}
\end{subfigure}

\vspace*{-0.35cm}

\centering
\begin{tikzpicture}[legendmarks]
\begin{axis}[legend,legend columns=5]
\addplot[brindex] coordinates { (0,0) };
\addlegendentry{\texttt{br-index}};
\addplot[bmove] coordinates { (0,0) };
\addlegendentry{\texttt{b-move}};
\addplot[columba] coordinates { (0,0) };
\addlegendentry{\texttt{columba}};
\addplot[moverb] coordinates { (0,0) };
\addlegendentry{\texttt{move-rb}};
\addplot[moverbrlzsa] coordinates { (0,0) };
\addlegendentry{\texttt{move-rb-rlzsa}};
\end{axis}
\end{tikzpicture}

\vspace*{-0.1cm}
\caption{APM query throughput for $k = 13$. All indexes use their native APM algorithms.}
\label{fig:apm_native_k13}
\vspace*{\fill}
\end{figure*}

\begin{figure*}[p]
\centering
\vspace*{\fill}
\input{charts/apm_samealg_k4}
\caption{APM query throughput for $k = 4$. All indexes use our optimized APM algorithms.}
\label{fig:apm_same_alg_k4}
\vspace{0.2cm}
\input{charts/apm_samealg_k7}
\caption{APM query throughput for $k = 7$. All indexes use our optimized APM algorithms.}
\label{fig:apm_same_alg_k7}
\vspace*{\fill}
\end{figure*}

\begin{figure*}[p]
\centering
\vspace*{\fill}
\input{charts/apm_samealg_k10}
\caption{APM query throughput for $k = 10$. All indexes use our optimized APM algorithms.}
\label{fig:apm_same_alg_k10}
\vspace{0.2cm}
\begin{subfigure}[b]{.31\textwidth}
\hspace*{-0.7cm}
\begin{tikzpicture}[marks]
\begin{axis}[ymin=1e-6,ymax=1e-2,
    querystyle_kfig,
    title={sars2},
    ylabel={count thr.\\ (ham.) [1/$\mu$s]},
    y label style={at={(0.0,0.5)},align=center},
    xmin=40,
    xmax=1280,
    xticklabels={},
    xmode=log,
    ymode=log,
    xlabel={\vphantom{pattern length}}
]

\addplot[brindex] coordinates { (40,2.8649e-06) (80,4.50916e-05) (160,3.02442e-05) (320,2.20737e-05) (640,1.17833e-05) (1280,9.675e-06) };
\addlegendentry{algo=count\_br\_index};
\addplot[columba] coordinates { (40,5.498e-05) (80,0.00124726) (160,0.000827679) (320,0.000555198) (640,0.000247514) (1280,0.000180639) };
\addlegendentry{algo=count\_columba\_apm};
\addplot[bmove] coordinates { (40,2.16301e-05) (80,0.000311304) (160,0.000193998) (320,0.000132871) (640,6.88235e-05) (1280,5.21444e-05) };
\addlegendentry{algo=count\_columba\_rlc\_apm};
\addplot[moverb] coordinates { (40,0.000104822) (80,0.00246949) (160,0.0017936) (320,0.00134203) (640,0.000694971) (1280,0.0005557) };
\addlegendentry{algo=count\_move\_rb\_move};

\legend{};
\end{axis}
\end{tikzpicture}
\end{subfigure}
\begin{subfigure}[b]{.31\textwidth}
\begin{tikzpicture}[marks]
\begin{axis}[ymin=1e-6,ymax=1e0,querystyle_kfig,
    title={chr19},
    ylabel={\phantom{p}},
    xmin=40,
    xmax=1280,
    xticklabels={},
    xmode=log,
    ymode=log,
    xlabel={\vphantom{pattern length}}
]

\addplot[brindex] coordinates { (40,2.22676e-06) (80,4.74767e-05) (160,0.00104566) (320,0.00718457) (640,0.00546098) (1280,0.00335489) };
\addlegendentry{algo=count\_br\_index};
\addplot[columba] coordinates { (40,1.00167e-05) (80,0.000375327) (160,0.0080519) (320,0.0800403) (640,0.0838162) (1280,0.056429) };
\addlegendentry{algo=count\_columba\_apm};
\addplot[bmove] coordinates { (40,1.13694e-05) (80,0.000228279) (160,0.00509473) (320,0.035883) (640,0.0279944) (1280,0.0174085) };
\addlegendentry{algo=count\_columba\_rlc\_apm};
\addplot[moverb] coordinates { (40,5.36505e-05) (80,0.00169147) (160,0.032853) (320,0.284802) (640,0.28112) (1280,0.185269) };
\addlegendentry{algo=count\_move\_rb\_move};

\legend{};
\end{axis}
\end{tikzpicture}
\end{subfigure}
\begin{subfigure}[b]{.31\textwidth}
\begin{tikzpicture}[marks]
\begin{axis}[ymin=1e-8,ymax=1e0,querystyle_kfig,
    title={dewiki},
    ylabel={\phantom{p}},
    xmin=20,
    xmax=1280,
    xticklabels={},
    xmode=log,
    ymode=log,
    xlabel={\vphantom{pattern length}}
]

\addplot[brindex] coordinates { (20,1.15634e-08) (40,2.29805e-06) (80,7.62509e-07) (160,5.91002e-06) (320,0.000119523) (640,0.000367578) (1280,0.000424375) };
\addlegendentry{algo=count\_br\_index};
\addplot[moverb] coordinates { (20,6.11283e-06) (40,0.00102496) (80,0.000536438) (160,0.00363698) (320,0.0734331) (640,0.241534) (1280,0.300186) };
\addlegendentry{algo=count\_move\_rb\_move};

\legend{};
\end{axis}
\end{tikzpicture}
\end{subfigure}

\vspace{-1.35cm}
\begin{subfigure}[b]{.31\textwidth}
\hspace*{-0.7cm}
\begin{tikzpicture}[marks]
\begin{axis}[ymin=1e-3,ymax=1e2,
    querystyle_kfig,
    ylabel={locate thr.\\ (ham.) [1/$\mu$s]},
    y label style={at={(0.0,0.5)},align=center},
    xmin=40,
    xmax=1280,
    xticklabels={},
    xmode=log,
    ymode=log,
    title={\vphantom{Ig}},
    xlabel={\vphantom{pattern length}}
]

\addplot[brindex] coordinates { (40,0.0871145) (80,0.44529) (160,0.262238) (320,0.0978649) (640,0.019024) (1280,0.00617744) };
\addlegendentry{algo=locate\_br\_index};
\addplot[columba] coordinates { (40,2.17425) (80,10.0455) (160,6.67147) (320,2.6035) (640,0.472257) (1280,0.138657) };
\addlegendentry{algo=locate\_columba\_apm};
\addplot[bmove] coordinates { (40,0.534938) (80,1.5692) (160,1.12736) (320,0.511073) (640,0.100196) (1280,0.0323557) };
\addlegendentry{algo=locate\_columba\_rlc\_apm};
\addplot[moverb] coordinates { (40,2.24112) (80,5.91218) (160,4.99299) (320,3.02568) (640,0.8341) (1280,0.286296) };
\addlegendentry{algo=locate\_move\_rb\_move};
\addplot[moverbrlzsa] coordinates { (40,3.2531) (80,30.0737) (160,15.5113) (320,5.15145) (640,0.957103) (1280,0.302932) };
\addlegendentry{algo=locate\_move\_rb\_rlzsa};

\legend{};
\end{axis}
\end{tikzpicture}
\end{subfigure}
\begin{subfigure}[b]{.31\textwidth}
\begin{tikzpicture}[marks]
\begin{axis}[ymin=1e-3,ymax=1e1,querystyle_kfig,
    ylabel={\phantom{p}},
    xmin=40,
    xmax=1280,
    xticklabels={},
    xmode=log,
    ymode=log,
    title={\vphantom{Ig}},
    xlabel={\vphantom{pattern length}}
]

\addplot[brindex] coordinates { (40,0.160136) (80,0.206964) (160,0.0186088) (320,0.0278366) (640,0.0118718) (1280,0.00515454) };
\addlegendentry{algo=locate\_br\_index};
\addplot[columba] coordinates { (40,1.29437) (80,2.36322) (160,0.165677) (320,0.390945) (640,0.220656) (1280,0.105563) };
\addlegendentry{algo=locate\_columba\_apm};
\addplot[bmove] coordinates { (40,0.676451) (80,0.797907) (160,0.0899833) (320,0.138417) (640,0.0617602) (1280,0.0273111) };
\addlegendentry{algo=locate\_columba\_rlc\_apm};
\addplot[moverb] coordinates { (40,2.60885) (80,3.68812) (160,0.500883) (320,0.897199) (640,0.513959) (1280,0.244964) };
\addlegendentry{algo=locate\_move\_rb\_move};
\addplot[moverbrlzsa] coordinates { (40,4.2958) (80,8.16438) (160,0.550042) (320,1.02767) (640,0.548954) (1280,0.248364) };
\addlegendentry{algo=locate\_move\_rb\_rlzsa};

\legend{};
\end{axis}
\end{tikzpicture}
\end{subfigure}
\begin{subfigure}[b]{.31\textwidth}
\begin{tikzpicture}[marks]
\begin{axis}[ymin=1e-5,ymax=1e1,querystyle_kfig,
    ylabel={\phantom{p}},
    xmin=20,
    xmax=1280,
    xticklabels={},
    xmode=log,
    ymode=log,
    title={\vphantom{Ig}},
    xlabel={\vphantom{pattern length}}
]

\addplot[brindex] coordinates { (20,0.00615525) (40,0.00119078) (80,0.000289702) (160,3.09887e-05) (320,0.000440495) (640,0.00064189) (1280,0.000477753) };
\addlegendentry{algo=locate\_br\_index};
\addplot[moverb] coordinates { (20,1.97335) (40,0.532925) (80,0.143634) (160,0.0239243) (320,0.235141) (640,0.362579) (1280,0.320548) };
\addlegendentry{algo=locate\_move\_rb\_move};
\addplot[moverbrlzsa] coordinates { (20,2.59954) (40,0.546147) (80,0.143132) (160,0.0240665) (320,0.238605) (640,0.368924) (1280,0.320259) };
\addlegendentry{algo=locate\_move\_rb\_rlzsa};

\legend{};
\end{axis}
\end{tikzpicture}
\end{subfigure}

\vspace{-1.35cm}
\begin{subfigure}[b]{.31\textwidth}
\hspace*{-0.7cm}
\begin{tikzpicture}[marks]
\begin{axis}[ymin=1e-2,ymax=1e1,
    querystyle_kfig,
    ylabel={locate thr.\\ (edit) [1/$\mu$s]},
    y label style={at={(0.0,0.5)},align=center},
    xmin=40,
    xmax=1280,
    xlabel={pattern length},
    xmode=log,
    ymode=log,
    title={\vphantom{Ig}}
]

\addplot[brindex] coordinates { (80,0.363273) (160,0.221747) (320,0.10071) (640,0.0476403) };
\addlegendentry{algo=locate\_br\_index};
\addplot[columba] coordinates { (80,3.97225) (160,2.86509) (320,1.6217) (640,0.823549) };
\addlegendentry{algo=locate\_columba\_apm};
\addplot[bmove] coordinates { (80,1.15025) (160,0.82008) (320,0.444382) (640,0.219237) };
\addlegendentry{algo=locate\_columba\_rlc\_apm};
\addplot[moverb] coordinates { (80,3.97809) (160,3.13556) (320,2.1594) (640,1.28196) };
\addlegendentry{algo=locate\_move\_rb\_move};
\addplot[moverbrlzsa] coordinates { (80,9.35288) (160,6.36765) (320,3.09214) (640,1.54711) };
\addlegendentry{algo=locate\_move\_rb\_rlzsa};

\legend{};
\end{axis}
\end{tikzpicture}
\end{subfigure}
\begin{subfigure}[b]{.31\textwidth}
\begin{tikzpicture}[marks]
\begin{axis}[ymin=1e-3,ymax=1e1,querystyle_kfig,
    ylabel={\phantom{p}},
    xmin=40,
    xmax=1280,
    xlabel={pattern length},
    xmode=log,
    ymode=log,
    title={\vphantom{Ig}}
]

\addplot[brindex] coordinates { (80,0.0741463) (160,0.0121173) (320,0.0233091) (640,0.011104) (1280,0.00483814) };
\addlegendentry{algo=locate\_br\_index};
\addplot[columba] coordinates { (80,0.723083) (160,0.116811) (320,0.294642) (640,0.161061) (1280,0.0771256) };
\addlegendentry{algo=locate\_columba\_apm};
\addplot[bmove] coordinates { (80,0.256748) (160,0.048474) (320,0.0998988) (640,0.0510177) (1280,0.0231053) };
\addlegendentry{algo=locate\_columba\_rlc\_apm};
\addplot[moverb] coordinates { (80,1.1774) (160,0.257234) (320,0.571404) (640,0.337823) (1280,0.16175) };
\addlegendentry{algo=locate\_move\_rb\_move};
\addplot[moverbrlzsa] coordinates { (80,1.44939) (160,0.268523) (320,0.62225) (640,0.356007) (1280,0.164789) };
\addlegendentry{algo=locate\_move\_rb\_rlzsa};

\legend{};
\end{axis}
\end{tikzpicture}
\end{subfigure}
\begin{subfigure}[b]{.31\textwidth}
\begin{tikzpicture}[marks]
\begin{axis}[ymin=1e-5,ymax=1e0,querystyle_kfig,
    ylabel={\phantom{p}},
    xmin=20,
    xmax=1280,
    xlabel={pattern length},
    xmode=log,
    ymode=log,
    title={\vphantom{Ig}}
]

\addplot[brindex] coordinates { (40,2.05423e-05) (80,1.93334e-05) (160,4.80932e-05) (320,7.87265e-05) (640,0.000485353) (1280,0.000380516) };
\addlegendentry{algo=locate\_br\_index};
\addplot[moverb] coordinates { (40,0.00912525) (80,0.00789297) (160,0.0324623) (320,0.0488842) (640,0.279134) (1280,0.237957) };
\addlegendentry{algo=locate\_move\_rb\_move};
\addplot[moverbrlzsa] coordinates { (40,0.00906354) (80,0.007874) (160,0.0324539) (320,0.04862) (640,0.284963) (1280,0.240371) };
\addlegendentry{algo=locate\_move\_rb\_rlzsa};

\legend{};
\end{axis}
\end{tikzpicture}
\end{subfigure}

\vspace*{-0.35cm}

\centering
\begin{tikzpicture}[legendmarks]
\begin{axis}[legend,legend columns=5]
\addplot[brindex] coordinates { (0,0) };
\addlegendentry{\texttt{br-index}};
\addplot[bmove] coordinates { (0,0) };
\addlegendentry{\texttt{b-move}};
\addplot[columba] coordinates { (0,0) };
\addlegendentry{\texttt{columba}};
\addplot[moverb] coordinates { (0,0) };
\addlegendentry{\texttt{move-rb}};
\addplot[moverbrlzsa] coordinates { (0,0) };
\addlegendentry{\texttt{move-rb-rlzsa}};
\end{axis}
\end{tikzpicture}

\vspace*{-0.1cm}
\caption{APM query throughput for $k = 13$. All indexes use our optimized APM algorithms.}
\label{fig:apm_same_alg_k13}
\vspace*{\fill}
\end{figure*}

\setcounter{section}{7}
\setcounter{table}{0}

\begin{table*}[p]
\centering

\vspace*{\fill}
\begin{minipage}[c]{0.49\textwidth}
\centering
\scriptsize
\setlength{\tabcolsep}{3pt}
\renewcommand{\arraystretch}{1.18}
\begin{tabular}{|r|rr|rr|rr|}
\hline
\multicolumn{1}{|c|}{text}
& \multicolumn{2}{c|}{sars2}
& \multicolumn{2}{c|}{chr19}
& \multicolumn{2}{c|}{dewiki} \\
$m$
& $N$ & $\floor{occ/N}$
& $N$ & $\floor{occ/N}$
& $N$ & $\floor{occ/N}$ \\
\hline
  10 & 124733 & 1765895 & 137833 & 1199869 & 28012 & 253393 \\
  20 &  48320 & 1599931 &  71840 &  100930 & 23252 &  18400 \\
  40 &  23838 & 1528209 &  47583 &    2769 & 19732 &   4078 \\
  80 &  12302 & 1408223 &  31893 &     886 & 15685 &   2702 \\
 160 &   6349 & 1221234 &  19652 &     793 & 11377 &   1394 \\
 320 &   3353 &  935224 &  11290 &     682 &  7516 &   1503 \\
 640 &   1819 &  598603 &   6102 &     542 &  4554 &    615 \\
1280 &   1030 &  270775 &   3199 &     393 &  2608 &    206 \\
\hline
\end{tabular}
\captionof{table}{Pattern sets used for extension measurements.}
\label{tab:patterns_count}

\vspace{0.5cm}
\begin{tabular}{|r|rr|rr|rr|}
\hline
\multicolumn{1}{|c|}{text}
& \multicolumn{2}{c|}{sars2}
& \multicolumn{2}{c|}{chr19}
& \multicolumn{2}{c|}{dewiki} \\
$m$
& $N$ & $\floor{occ/N}$
& $N$ & $\floor{occ/N}$
& $N$ & $\floor{occ/N}$ \\
\hline
  10 & 137 & 1770167 &   240 & 1113687 &   695 & 363573 \\
  20 & 152 & 1599232 &  2449 &  102251 &  8909 &  18249 \\
  40 & 160 & 1529478 & 25002 &    2719 & 13446 &   4594 \\
  80 & 171 & 1398672 & 23277 &     886 & 12176 &   2664 \\
 160 & 189 & 1235384 & 15409 &     794 &  9510 &   1526 \\
 320 & 222 &  997236 &  9051 &     683 &  6312 &   1695 \\
 640 & 310 &  612260 &  4978 &     546 &  3965 &    660 \\
1280 & 440 &  260110 &  2592 &     394 &  2246 &    211 \\
\hline
\end{tabular}
\captionof{table}{Pattern sets used for $\SA$-interval enumeration measurements.}
\label{tab:patterns_locate}
\end{minipage}
\hfill
\begin{minipage}[c]{0.49\textwidth}
\centering
\fontsize{6.5pt}{7.6pt}\selectfont
\setlength{\tabcolsep}{2pt}
\renewcommand{\arraystretch}{1.18}
\begin{tabular}{|rr|rr|rr|rr|}
\hline
\multicolumn{2}{|c|}{text}
& \multicolumn{2}{c|}{sars2}
& \multicolumn{2}{c|}{chr19}
& \multicolumn{2}{c|}{dewiki} \\
$k$ & $m$
& $N$ & $\floor{occ/N}$
& $N$ & $\floor{occ/N}$
& $N$ & $\floor{occ/N}$ \\
\hline
\num{4} &   \num{10} &  \num{216} & \num{1355134869} &   \num{199} & \num{1188337257} &   \num{36} & \num{7964645} \\
\num{4} &   \num{20} &  \num{845} &    \num{1750788} &   \num{490} &    \num{3343201} &  \num{351} &  \num{104757} \\
\num{4} &   \num{40} & \num{1868} &    \num{1655330} &  \num{2988} &     \num{327374} &  \num{900} &    \num{8298} \\
\num{4} &   \num{80} &  \num{800} &    \num{1617623} & \num{10923} &       \num{4227} & \num{1392} &    \num{1884} \\
\num{4} &  \num{160} &  \num{283} &    \num{1608933} & \num{20639} &        \num{983} & \num{1954} &     \num{802} \\
\num{4} &  \num{320} &   \num{86} &    \num{1395685} & \num{11079} &        \num{885} & \num{1415} &     \num{527} \\
\num{4} &  \num{640} &   \num{23} &    \num{1277406} &  \num{4549} &        \num{834} &  \num{843} &     \num{355} \\
\num{4} & \num{1280} &    \num{6} &    \num{1062504} &  \num{1778} &        \num{711} &  \num{464} &     \num{221} \\
\hline
\num{7} &   \num{10} &   \num{9} & \num{26027797178} &    \num{9} & \num{23643396101} &   \num{2} & \num{1716741013} \\
\num{7} &   \num{20} &  \num{22} &    \num{28989809} &   \num{11} &    \num{36580853} &  \num{28} &     \num{627181} \\
\num{7} &   \num{40} & \num{678} &     \num{1662128} &  \num{413} &     \num{1318297} & \num{246} &      \num{41841} \\
\num{7} &   \num{80} & \num{395} &     \num{1640185} & \num{1655} &       \num{33419} & \num{261} &       \num{1700} \\
\num{7} &  \num{160} & \num{146} &     \num{1534497} & \num{9506} &        \num{1162} & \num{650} &        \num{807} \\
\num{7} &  \num{320} &  \num{41} &     \num{1505817} & \num{8098} &         \num{889} & \num{999} &        \num{544} \\
\num{7} &  \num{640} &   \num{7} &     \num{1130424} & \num{3244} &         \num{838} & \num{624} &        \num{396} \\
\num{7} & \num{1280} &   \num{4} &      \num{614985} & \num{1137} &         \num{735} & \num{351} &        \num{214} \\
\hline
\num{10} &   \num{20} &   \num{4} & \num{799334386} & -- & -- &   \num{2} & \num{1302498} \\
\num{10} &   \num{40} &  \num{70} &   \num{1684849} &   \num{34} & \num{4464461} &  \num{84} &   \num{15850} \\
\num{10} &   \num{80} & \num{216} &   \num{1649381} &  \num{273} &  \num{202272} &  \num{25} &    \num{6015} \\
\num{10} &  \num{160} &  \num{89} &   \num{1547560} & \num{2711} &    \num{1657} & \num{167} &    \num{1091} \\
\num{10} &  \num{320} &  \num{25} &   \num{1566325} & \num{5936} &     \num{943} & \num{549} &     \num{599} \\
\num{10} &  \num{640} &   \num{5} &   \num{1490785} & \num{2588} &     \num{848} & \num{485} &     \num{461} \\
\num{10} & \num{1280} &   \num{4} &   \num{1231271} &  \num{870} &     \num{752} & \num{273} &     \num{237} \\
\hline
\num{13} &   \num{20} & -- & -- & -- & -- &   \num{2} & \num{31309391} \\
\num{13} &   \num{40} &  \num{13} & \num{1685845} &    \num{7} & \num{4619983} &  \num{25} &     \num{7555} \\
\num{13} &   \num{80} & \num{151} & \num{1645531} &  \num{105} &  \num{431466} &  \num{45} &     \num{1469} \\
\num{13} &  \num{160} &  \num{55} & \num{1517316} & \num{1026} &    \num{3472} &  \num{83} &      \num{755} \\
\num{13} &  \num{320} &  \num{21} & \num{1477573} & \num{4274} &     \num{969} & \num{439} &      \num{527} \\
\num{13} &  \num{640} &   \num{6} & \num{1123596} & \num{2156} &     \num{845} & \num{366} &      \num{336} \\
\num{13} & \num{1280} &   \num{4} &  \num{746125} &  \num{715} &     \num{756} & \num{226} &      \num{217} \\
\hline
\end{tabular}
\captionof{table}{APM count-query pattern sets; en dashes mark omitted measurements.}
\label{tab:patterns_count_apm}
\end{minipage}

\vspace*{\fill}
\begin{minipage}[c]{0.49\textwidth}
\centering
\fontsize{6.5pt}{7.6pt}\selectfont
\setlength{\tabcolsep}{2pt}
\renewcommand{\arraystretch}{1.18}
\begin{tabular}{|rr|rr|rr|rr|}
\hline
\multicolumn{2}{|c|}{text}
& \multicolumn{2}{c|}{sars2}
& \multicolumn{2}{c|}{chr19}
& \multicolumn{2}{c|}{dewiki} \\
$k$ & $m$
& $N$ & $\floor{occ/N}$
& $N$ & $\floor{occ/N}$
& $N$ & $\floor{occ/N}$ \\
\hline
\num{4} &   \num{10} & -- & -- & -- & -- &    \num{6} & \num{3194011} \\
\num{4} &   \num{20} & \num{62} & \num{1688831} &    \num{29} & \num{3864776} &  \num{147} &  \num{104548} \\
\num{4} &   \num{40} & \num{66} & \num{1675259} &   \num{253} &  \num{389521} &  \num{987} &    \num{6423} \\
\num{4} &   \num{80} & \num{62} & \num{1653685} &  \num{8537} &    \num{4316} & \num{1646} &    \num{1130} \\
\num{4} &  \num{160} & \num{55} & \num{1575676} & \num{17447} &     \num{973} & \num{1562} &     \num{787} \\
\num{4} &  \num{320} & \num{34} & \num{1543124} &  \num{9096} &     \num{887} & \num{1345} &     \num{512} \\
\num{4} &  \num{640} & \num{13} & \num{1303815} &  \num{3732} &     \num{815} &  \num{802} &     \num{331} \\
\num{4} & \num{1280} &  \num{4} & \num{1430145} &  \num{1446} &     \num{726} &  \num{433} &    \num{1008} \\
\hline
\num{7} &   \num{20} &  \num{4} & \num{22031076} &    \num{5} & \num{14591903} &  \num{37} & \num{97795} \\
\num{7} &   \num{40} & \num{62} &  \num{1649700} &   \num{56} &  \num{1637275} & \num{211} & \num{23327} \\
\num{7} &   \num{80} & \num{57} &  \num{1649584} & \num{1075} &    \num{31330} & \num{273} &  \num{3701} \\
\num{7} &  \num{160} & \num{45} &  \num{1552192} & \num{8422} &     \num{1133} & \num{675} &  \num{1084} \\
\num{7} &  \num{320} & \num{25} &  \num{1483367} & \num{6662} &      \num{893} & \num{853} &   \num{706} \\
\num{7} &  \num{640} &  \num{7} &  \num{1216400} & \num{2636} &      \num{835} & \num{580} &   \num{344} \\
\num{7} & \num{1280} &  \num{4} &   \num{387847} &  \num{909} &      \num{767} & \num{323} &   \num{213} \\
\hline
\num{10} &   \num{20} & -- & -- & -- & -- &   \num{2} & \num{1023488} \\
\num{10} &   \num{40} & \num{33} & \num{1667677} &   \num{20} & \num{2155878} &  \num{53} &   \num{10108} \\
\num{10} &   \num{80} & \num{51} & \num{1631092} &  \num{230} &  \num{155522} & \num{107} &   \num{11154} \\
\num{10} &  \num{160} & \num{38} & \num{1592654} & \num{2623} &    \num{1347} & \num{362} &     \num{819} \\
\num{10} &  \num{320} & \num{18} & \num{1615341} & \num{5016} &     \num{923} & \num{478} &     \num{508} \\
\num{10} &  \num{640} &  \num{6} & \num{1500072} & \num{2136} &     \num{823} & \num{476} &     \num{344} \\
\num{10} & \num{1280} &  \num{4} & \num{1396514} &  \num{697} &     \num{800} & \num{251} &     \num{212} \\
\hline
\num{13} &   \num{20} & -- & -- & -- & -- &   \num{2} & \num{16649636} \\
\num{13} &   \num{40} &  \num{9} & \num{1687772} &    \num{5} & \num{4033664} &  \num{22} &    \num{25691} \\
\num{13} &   \num{80} & \num{44} & \num{1669396} &   \num{57} &  \num{548352} &  \num{78} &     \num{1747} \\
\num{13} &  \num{160} & \num{30} & \num{1635451} &  \num{952} &    \num{2547} &  \num{78} &      \num{798} \\
\num{13} &  \num{320} & \num{16} & \num{1397344} & \num{3630} &     \num{985} & \num{284} &      \num{504} \\
\num{13} &  \num{640} &  \num{6} & \num{1077729} & \num{1713} &     \num{827} & \num{364} &      \num{354} \\
\num{13} & \num{1280} &  \num{4} & \num{1152011} &  \num{562} &     \num{762} & \num{205} &      \num{236} \\
\hline
\end{tabular}
\captionof{table}{APM Hamming-distance locate-query pattern sets; en dashes mark omitted measurements.}
\label{tab:patterns_locate_apm_ham}
\end{minipage}
\hfill
\begin{minipage}[c]{0.49\textwidth}
\centering
\fontsize{6.5pt}{7.6pt}\selectfont
\setlength{\tabcolsep}{2pt}
\renewcommand{\arraystretch}{1.18}
\begin{tabular}{|rr|rr|rr|rr|}
\hline
\multicolumn{2}{|c|}{text}
& \multicolumn{2}{c|}{sars2}
& \multicolumn{2}{c|}{chr19}
& \multicolumn{2}{c|}{dewiki} \\
$k$ & $m$
& $N$ & $\floor{occ/N}$
& $N$ & $\floor{occ/N}$
& $N$ & $\floor{occ/N}$ \\
\hline
\num{4} &   \num{10} & -- & -- & -- & -- &    \num{2} & \num{9667806} \\
\num{4} &   \num{20} & \num{7} & \num{8377239} &    \num{15} & \num{5822686} &   \num{41} &  \num{178538} \\
\num{4} &   \num{40} & \num{8} & \num{8142132} &    \num{92} &  \num{685916} &  \num{423} &    \num{5883} \\
\num{4} &   \num{80} & \num{8} & \num{8173130} &  \num{4671} &    \num{4978} &  \num{582} &    \num{2763} \\
\num{4} &  \num{160} & \num{8} & \num{7548369} & \num{13520} &    \num{1027} & \num{1264} &     \num{902} \\
\num{4} &  \num{320} & \num{7} & \num{6992365} &  \num{7921} &     \num{953} & \num{1243} &     \num{645} \\
\num{4} &  \num{640} & \num{4} & \num{6170921} &  \num{3469} &     \num{895} &  \num{782} &     \num{447} \\
\num{4} & \num{1280} & \num{4} & \num{1735373} &  \num{1349} &     \num{803} &  \num{437} &     \num{285} \\
\hline
\num{7} &   \num{20} & -- & -- & -- & -- &   \num{2} & \num{360879} \\
\num{7} &   \num{40} & \num{5} & \num{12316892} &    \num{4} & \num{14806481} & \num{146} &   \num{6986} \\
\num{7} &   \num{80} & \num{5} & \num{11254693} &  \num{261} &   \num{120229} & \num{358} &   \num{1758} \\
\num{7} &  \num{160} & \num{5} & \num{10235658} & \num{6062} &     \num{1246} & \num{440} &   \num{1220} \\
\num{7} &  \num{320} & \num{4} &  \num{8902425} & \num{6058} &      \num{986} & \num{858} &    \num{761} \\
\num{7} &  \num{640} & \num{4} &  \num{8929898} & \num{2692} &      \num{950} & \num{203} &    \num{691} \\
\num{7} & \num{1280} & \num{4} &  \num{4477059} &  \num{942} &      \num{901} & \num{336} &    \num{319} \\
\hline
\num{10} &   \num{40} & \num{4} & \num{13468801} &    \num{4} & \num{1460356} &   \num{3} & \num{8336} \\
\num{10} &   \num{80} & \num{5} & \num{11924783} &   \num{66} &  \num{326894} & \num{202} & \num{3646} \\
\num{10} &  \num{160} & \num{5} &  \num{7797436} & \num{1106} &    \num{2973} & \num{257} & \num{1134} \\
\num{10} &  \num{320} & \num{4} & \num{15762605} & \num{3456} &    \num{1053} & \num{112} &  \num{856} \\
\num{10} &  \num{640} & \num{4} & \num{12364702} & \num{1605} &     \num{977} & \num{312} &  \num{567} \\
\num{10} & \num{1280} & \num{4} &  \num{2904388} &  \num{532} &     \num{911} & \num{198} &  \num{287} \\
\hline
\num{13} &   \num{40} & -- & -- & -- & -- &   \num{4} & \num{3084} \\
\num{13} &   \num{80} & \num{4} & \num{14573149} &    \num{9} & \num{1375248} &  \num{39} & \num{1993} \\
\num{13} &  \num{160} & \num{4} & \num{10711647} &  \num{259} &    \num{5079} &  \num{44} & \num{2234} \\
\num{13} &  \num{320} & \num{4} & \num{10411006} & \num{2053} &    \num{1190} &  \num{52} & \num{1081} \\
\num{13} &  \num{640} & \num{4} &  \num{8829307} & \num{1100} &     \num{972} & \num{232} &  \num{568} \\
\num{13} & \num{1280} & -- & -- &  \num{374} &     \num{924} & \num{148} &  \num{332} \\
\hline
\end{tabular}
\captionof{table}{APM edit-distance locate-query pattern sets; en dashes mark omitted measurements.}
\label{tab:patterns_locate_apm_edit}
\end{minipage}
\vspace*{\fill}
\end{table*}

\end{document}